\newtheorem{definition}{Definition}[section]
\newtheorem{theorem}[definition]{Theorem}
\newtheorem{lemma}[definition]{Lemma}
\newtheorem{example}[definition]{Example}
\newtheorem{remark}[definition]{Remark}
\newtheorem{corollary}[definition]{Corollary}
\title {Consistency of the Predicative Calculus of Cumulative Inductive Constructions (\pCuIC{})}
\author{
Amin Timany
\footnote{This research was partly carried out
  while I was visiting Inria Paris and Universit\'e Paris Diderot and
  partly while I was visiting Aarhus university.}
  \\ imec-Distrinet, KU Leuven, Belgium \\ amin.timany@cs.kuleuven.be
 \and
 Matthieu Sozeau \\ Inria Paris \& IRIF, France \\ matthieu.sozeau@inria.fr}
\date{May 11, 2018}
\begin{document}
\maketitle

\begin{abstract}
  In order to avoid well-know paradoxes associated with
  self-referential definitions, higher-order dependent type theories
  stratify the theory using a countably infinite hierarchy of
  universes (also known as sorts), $\Type{0} : \Type{1} :
  \cdots$. Such type systems are called cumulative if for any type
  $\Typ$ we have that $\Typ : \Type{i}$ implies $\Typ :
  \Type{i+1}$. The predicative calculus of inductive constructions
  (\pCIC{}) which forms the basis of the \Coq{} proof assistant, is
  one such system.

  In this paper we present and establish the soundness of the
  predicative calculus of cumulative inductive constructions
  (\pCuIC{}) which extends the cumulativity relation to inductive
  types.
\end{abstract}

\section{Introduction}
We construct a model for \pCuIC{} based inspired by the model of
\citet{DBLP:journals/corr/abs-1111-0123} establishing consistency of
\pCuIC{}.  In loc. cit. authors present a model of \pCIC{} without
inductive types in the sort $\Prop$. Similarly, the model that we
shall construct also does not feature inductive types in the sort
$\Prop$. We note, however, that most inductive types in $\Prop$ can be
encoded using their Church encoding. For instance, the type
\Coqe{False} and conjunction of two predicates can be defined as
follows:
\begin{coq}
Definition conj (P Q : Prop) :=
  \forall (R : Prop), (P -> Q -> R) -> R.

Definition False := \forall (P : Prop), P.
\end{coq}

The model of \citet{DBLP:journals/corr/abs-1111-0123} does not support
the sort $\Set$, neither the predicative version nor the impredicative
one. In this paper, we treat the predicative sort $\Set$ simply as a
shorthand for $\Type{0}$. Note that in \Coq{} \emph{it does not hold
  that} $\Prop : \Set$. However, in our model the interpretation of
$\Prop$ is an element of the interpretation of $\Type{0}$ but this
does not preclude our model from being sound for the typing rules of
the system. In addition, the model of
\citet{DBLP:journals/corr/abs-1111-0123} does support $\eta$-expansion
yet this is not included in their rules for judgemental equality. We
do include $\eta$-expansion in our system.

The main difference between our system and the one by
\citet{DBLP:journals/corr/abs-1111-0123}, apart from cumulativity for
inductive types which is a new contribution to \pCIC{}, is that in
\citet{DBLP:journals/corr/abs-1111-0123} the system features case
analysis and fixpoints (recursive functions on inductive types) for
inductive types while ours features eliminators. It is well-known that
all functions written using fixpoints and case analysis on inductive
types are representable using eliminators. However, the former is
closer to actual implementation in \Coq{}. On the other hand, the
occurrences of recursive calls in recursive functions are may be
hidden under some computations. That is, in order to obtain the
position and arguments of recursive calls, which are crucial for
constructing the model of recursive functions, in general requires
normalizing the body of the recursive function. It so appears that the
intention of \citet{DBLP:journals/corr/abs-1111-0123} is to construct
a set theoretic model for \Coq{}, assuming the normalization property
which already implies soundness of their system. This, however, is not
possible when constructing a model that is intended to establish the
soundness of the system. Indeed, our model, similarly to the model
of \citet{DBLP:journals/corr/abs-1111-0123}, does not imply
normalization of well-typed terms. Therefore, when modeling recursive
functions, i.e., fixpoints in \Coq{}, we cannot assume that we can
normalize the body of the fixpoint. This is the main reason why we
opted for formalizing our system with eliminators instead.

\subsection{Contributions}
\citet{DBLP:conf/ictac/Timany015} give an account of then
work-in-progress on extending \pCIC{} with a single cumulativity rule
for cumulativity of inductive types. The authors show a rather
restricted subsystem of the system that they present to be sound. This
subsystem roughly corresponds to the fragment where terms of
cumulative inductive types do not appear as dependent arguments in
other terms.  The proof given in \citet{DBLP:conf/ictac/Timany015} is
done by giving a syntactic translation from that subsystem to
\pCIC{}. In this paper, we extend and complete the work that was
initiated by \citet{DBLP:conf/ictac/Timany015}.

In particular, in this work, we consider a more general version of the
cumulativity rule for inductive types. Adding to this, we also
consider related rules for judgemental equality of inductive types
which are given rise to by the mutual cumulativity relation and also
judgemental equality of the terms constructors of types in the
cumulativity relation.

In this work, we present the system \pCuIC{} in its full details and
prove that the system as presented to be sound. We do this by
constructing a set-theoretic model in \ZFC{}, together with the axiom
that there are countably many uncountable strongly inaccessible
cardinals, inspired by the model of
\citet{DBLP:journals/corr/abs-1111-0123}. The cumulativity of
inductive types as presented in this paper is now supported in
\Coq{}\,8.7 \citep{the_coq_development_team_2017_1003421}.

\paragraph{Notations for different equalities and the like:}
\begin{center}
\begin{tabular}{|c|l|}
\hline
$a \eqdef b$ & $a$ is defined as $b$\\
\hline
$a = b$ & equality of mathematical objects, e.g., sets, sequences, etc.\\
\hline
$\term \synteq \term'$ & Syntactically identical terms \\
\hline
$\jueq{\ctx}{\term}{\term'}{\Typ}$ &
\begin{tabular}{@{}l@{}}
  Judgemental equality, i.e., $\term$ and $\term'$ are judgementally equal\\
  terms of type $\Typ$ under context $\ctx$
\end{tabular} \\
\hline
\end{tabular}
\end{center}

\section{\pCuIC{}}\label{sect:pcuic}
The terms and contexts of the language \pCuIC{} is as follows:
\begin{align*}
\var, \varB, \varC, \dots & & (\textrm{Variables})\\
\sort ::={} & \Prop, \Type{0}, \Type{1}, \dots & (\textrm{Sorts})\\
\term, \termB, \dots, \Term, \TermB, \dots, & \\
  \Typ, \TypB, \dots ::={} & \var \ALT \sort \ALT
  \Forall \var : \Typ. \TypB \ALT \Lam \var : \Typ. \term \ALT \\
  & \Let \var := \term : \Typ in \termB \ALT \Term~\TermB \ALT & (\textrm{Terms})\\
  & \Inds.\var \ALT \Elim{\term}{\Inds.\indtyp_i; \vv{\termB}}{\motive_{\indtyp_1}, \dots, \motive_{\indtyp_n}}{\genfun_{\constr_1},\dots, \genfun_{\constr_m}}\\
\Inds ::={} & \Ind_{n}\{\declinds := \declcons\} & (\textrm{Inductive blocks})\\
\decl ::={} & \cdot \ALT \decl, \var : \Typ & (\textrm{Declarations}) \\
\ctx ::={} & \cdot \ALT \ctx, \var : \Typ, \ALT \ctx, \var := \term : \Typ \ALT
\ctx, \Inds & (\textrm{Contexts})
\end{align*}
Note that although by abuse of notation we write
$\var : \Typ \in \ctx$, $\var := \term : \Typ \in \ctx$ or
$\var : \Typ \in \decl$, contexts and declarations \emph{are not}
sets. In particular, the order in declarations is important, this is
even the case for declarations where there can be no dependence among
the elements. We write $\decl(\var)$ to refer to $\Typ$ whenever
$\var : \Typ \in \decl$.

Here, $\Ind_{n}\{\declinds := \declcons\}$ is a block of mutual
inductive definitions where $n$ is the number of parameters, the
declarations in $\declinds$ are the inductive types of the block and
declarations in $\declcons$ are constructors of the block.  The term
$\Ind_{n}\{\declinds := \declcons\}.\var$ is an inductive definition
whenever $\var \in \declinds$ and a constructor whenever
$\var \in \declcons$. The term
$\Elim{\term}{\Inds.d_k; \vv{\termB}}{Q_{d_1}, \dots, Q_{d_n}}{f_{c_1},\dots,
  f_{c_m}}$ is the elimination of $\term$ (a term of the inductive
type $\Inds.{d_k}$ applied to parameters $\vv{\termB}$), $Q_{d_i}$'s are the motives of elimination, i.e.,
the result of the elimination will have type
$Q_{d_k}~\vv{\termB}~\vv{\varterm}~\term$ whenever the term being eliminated,
$\term$, has type $\Inds.{d_k}~\vv{\termB}~\vv{\varterm}$. The term $f_{c_i}$ in
the eliminator above is a \emph{case-eliminator} corresponding to the
case where $t$ is constructed using constructor
$\Inds.c_i$.

We write $\len(\vv{s})$, $\len(p)$, $\dom(\genfun)$, $\dom(\decl)$ and
$\dom(\ctx)$ respectively for the length of a sequence $\vv{s}$,
length of a tuple $p$, the domain of a partial map $\genfun$, domain
of a declaration $\decl$ or domain of a context $\ctx$.  Notice
crucially that the inductive types and constructors of an inductive
block are \emph{not} part of the domain of the context that they
appear in.  We write $\nil$ for the empty sequence. We write
$\IndsOf(\ctx)$ for the sequence of inductive types in the context
$\ctx$. We write tuples as $\tuple{a_1; a_2; \dots; a_n}$.

\begin{definition}[Free variables]
\begin{align*}
  \intertext{Free variables of terms}
  \freevars(\Prop) \eqdef{}
  & \emptyset\\
  \freevars(\Type{i}) \eqdef{}
  & \emptyset\\
  \freevars(\varC) \eqdef{}
  & \set{\varC}\\
  \freevars(\Lam \varB : \Typ. \termB) \eqdef{}
  & \freevars(\Typ) \cup (\freevars(\termB) \setminus \set{\varB})\\
  \freevars(\Let \varB := \termB : \Typ in \termC) \eqdef{}
  & \freevars(\Typ) \cup \freevars(\termB) \cup (\freevars(\termC) \setminus \set{\varB})\\
  \freevars(\termB~\termC) \eqdef{}
  & \freevars(\termB) \cup \freevars(\termC)\\
  \freevars(\Inds.\varC) \eqdef{}
  & \freevars(\Inds)\\
  \freevars(\Elim{\termB}{\Inds.d_i;\vv{\termB}}{\vv{\motive}}{\vv{\genfun}}) \eqdef{}
  & \freevars(\termB) \cup \freevars(\Inds) \cup \freevars(\vv{\termB}) \cup \freevars(\vv{\motive}) \cup \freevars(\vv{\genfun})\\
  \intertext{Free variables of inductive blocks}
  \freevars(\Ind_{n}\{\declinds := \declcons\}) \eqdef{}
  & \freevars(\declinds) \cup \freevars(\declcons)\\
  \intertext{Free variables of sequences of terms}
  \freevars(\nil) \eqdef{}
  & \nil\\
  \freevars(\termC, \vv{\termC}) \eqdef{}
  & \freevars(\termC) \cup \freevars(\vv{\termC})\\
  \intertext{Free variables of declarations}
  \freevars(\cdot) \eqdef{}
  & \emptyset\\
  \freevars(\varB : \Typ, \decl) \eqdef{}
  & \freevars(\Typ) \cup \freevars(\decl)
\end{align*}
\end{definition}

We define simultaneous substitution for terms as follows:
\begin{definition}[Simultaneous substitution]
  We assume that $\vv{\var}$ is a sequence of distinct variables. In
  this definition for the sake of simplicity we use $\varB$ for all
  bound variables.
\begin{align*}
  \intertext{Substitution for terms}
  \subst{\Prop}{\vv{\var}}{\vv{\term}} \eqdef{}
  & \Prop\\
  \subst{\Type{i}}{\vv{\var}}{\vv{\term}} \eqdef{}
  & \Type{i}\\
  \subst{\varC}{\vv{\var}}{\vv{\term}} \eqdef{}
  & \term_i \text{ if } \var_i = \varC\\
  \subst{\varC}{\vv{\var}}{\vv{\term}} \eqdef{}
  & \varC \text{ if } \forall i.~\var_i \neq \varC\\
  \subst{(\Lam \varB : \Typ. \termB)}{\vv{\var}}{\vv{\term}} \eqdef{}
  & \Lam \varB : \subst{\Typ}{\vv{\var}}{\vv{\term}}.
    \subst{\termB}{\vv{\var}'}{\vv{\term}'}\\
  \subst{(\Let \varB := \termB : \Typ in \termC)}{\vv{\var}}{\vv{\term}} \eqdef{}
  & \Let \varB := \subst{\termB}{\vv{\var}}{\vv{\term}} : \subst{\Typ}{\vv{\var}}{\vv{\term}}
    in \subst{\termC}{\vv{\var}'}{\vv{\term}'}\\
  \subst{(\termB~\termC)}{\vv{\var}}{\vv{\term}} \eqdef{}
  & \subst{\termB}{\vv{\var}}{\vv{\term}}~\subst{\termC}{\vv{\var}}{\vv{\term}}\\
  \subst{\Ind_{n}\{\declinds := \declcons\}.\varC}{\vv{\var}}{\vv{\term}} \eqdef{}
  & \Ind_{n}\{\subst{\declinds}{\vv{\var}}{\vv{\term}} := \subst{\declcons}{\vv{\var}}{\vv{\term}}\}.\varC\\
  \subst{\Elim{\termB}{\Inds.d_i;\vv{\termB}}{\vv{\motive}}{\vv{\genfun}}}{\vv{\var}}{\vv{\term}} \eqdef{}
  & \Elim{\subst{\termB}{\vv{\var}}{\vv{\term}}}{\subst{\Inds.d_i}{\vv{\var}}{\vv{\term}}; \subst{\vv{\termB}}{\vv{\var}}{\vv{\term}}}{\subst{\vv{\motive}}{\vv{\var}}{\vv{\term}}}{\subst{\vv{\genfun}}{\vv{\var}}{\vv{\term}}}\\
  \intertext{Substitution for inductive blocks}
  \subst{\Ind_{n}\{\declinds := \declcons\}}{\vv{\var}}{\vv{\term}} \eqdef{}
  & \Ind_{n}\{\subst{\declinds}{\vv{\var}}{\vv{\term}} := \subst{\declcons}{\vv{\var}}{\vv{\term}}\}\\
  \intertext{Substitution for sequences}
  \subst{\nil}{\vv{\var}}{\vv{\term}} \eqdef{}
  & \nil\\
  \subst{\termC, \vv{\termC}}{\vv{\var}}{\vv{\term}} \eqdef{}
  & \subst{\termC}{\vv{\var}}{\vv{\term}}, \subst{\vv{\termC}}{\vv{\var}}{\vv{\term}}\\
  \intertext{Substitution for declarations}
  \subst{\cdot}{\vv{\var}}{\vv{\term}} \eqdef{}
  & \cdot\\
  \subst{\varB : \Typ, \decl}{\vv{\var}}{\vv{\term}} \eqdef{}
  & \varB : \subst{\Typ}{\vv{\var}}{\vv{\term}}, \subst{\decl}{\vv{\var}}{\vv{\term}}\\
  \intertext{Substitution for context}
  \subst{\cdot}{\vv{\var}}{\vv{\term}} \eqdef{}
  & \cdot\\
  \subst{\varB : \Typ, \decl}{\vv{\var}}{\vv{\term}} \eqdef{}
  & \varB : \subst{\Typ}{\vv{\var}}{\vv{\term}}, \subst{\decl}{\vv{\var'}}{\vv{\term'}}\\
  \intertext{where $\vv{\var'} = \vv{\var}$ and $\vv{\term'} = \vv{\term}$ if $\varB$ does not appear in $\vv{\var}$ and is as follows whenever $\var_i = \varB$:
  \[ \vv{\var'} = \var_0, \dots, \var_{i-1}, \var_{i+1}, \dots, \var_n \]
  \[ \vv{\term'} = \term_0, \dots, \term_{i-1}, \term_{i+1}, \dots, \term_n \]
  }
\end{align*}
\end{definition}

\subsection{Basic constructions}
Figure~\ref{fig:pCuIC:basic-cons} shows typing rules for the basic
constructions, i.e., well-formedness of contexts ($\WFctx{\ctx}$),
sorts, let bindings, dependent products (also referred to as dependent
functions), lambda abstractions and applications. It also contains the
rules for the judgemental equality for these constructions.  In this
figure, the relation $\ProdRs$ indicates the sort that a (dependent)
product type belongs to. The sort $\Prop$ is \emph{impredicative} and
therefore any product type with codomain in $\Prop$ also belongs to
$\Prop$.

\begin{figure}
\begin{mathpar}
\inferH{WF-ctx-empty}{}{\WFctx{\cdot}}
\and
\inferH{WF-ctx-hyp}{\typed{\ctx}{\Typ}{\sort} \and
  \var \not\in \dom(\ctx)}{\WFctx{\ctx, \var : \Typ}}
\and
\inferH{WF-ctx-def}{\typed{\ctx}{\term}{\Typ} \and \var \not\in \dom(\ctx)}
   {\WFctx{\ctx, (\var := \term : \Typ)}}
\and
\inferH{Prop}{\WFctx{\ctx}}{\typed{\ctx}{\Prop}{\Type{i}}}
\and
\inferH{Hierarchy}{\WFctx{\ctx} \and i < j}{\typed{\ctx}{\Type{i}}{\Type{j}}}
\and
\inferH{Var}{\WFctx{\ctx} \and \var : \Typ \in \ctx \and \textrm{or} \and (\var := \term : \Typ) \in \ctx}{\typed{\ctx}{\var}{\Typ}}
\and
\inferH{Let}{\typed{\ctx, (\var := \term : \Typ)}{\termB}{\TypB}}
  {\typed{\ctx}{\Let \var := \term : \Typ in \termB}{\subst{\TypB}{\var}{\term}}}
\and
\inferH{Let-eq}{\jueq{\ctx}{\Typ}{\Typ'}{\sort} \and \jueq{\ctx}{\term}{\term'}{\Typ} \and
    \jueq{\ctx, (\var := \term : \Typ)}{\termB}{\termB'}{\TypB}}
  {\jueq{\ctx}{\Let \var := \term : \Typ in \termB}{\Let \var := \term' : \Typ' in \termB'}
     {\subst{\TypB}{\var}{\term}}}
\and
\inferH{Prod}{\typed{\ctx}{\Typ}{\sort_1} \and
    \typed{\ctx, \var : \Typ}{\TypB}{\sort_2} \and \ProdRs(\sort_1, \sort_2, \sort_3)}
  {\typed{\ctx}{\Forall \var : \Typ. \TypB}{\sort_3}}
\and
\inferH{Prod-eq}{\jueq{\ctx}{\Typ}{\Typ'}{\sort_1} \and
    \jueq{\ctx, \var : \Typ}{\TypB}{\TypB'}{\sort_2} \and \ProdRs(\sort_1, \sort_2, \sort_3)}
  {\jueq{\ctx}{\Forall \var : \Typ. \TypB} {\Forall \var : \Typ'. \TypB'}{\sort_3}}
\and
\inferH{Lam}{\typed{\ctx, \var : \Typ}{\Term}{\TypB} \and
    \typed{\ctx}{\Forall \var : \Typ. \TypB}{\sort}}
  {\typed{\ctx}{\Lam \var : \Typ. \Term}{\Forall \var : \Typ. \TypB}}
\and
\inferH{Lam-eq}{\jueq{\ctx}{\Typ}{\Typ'}{\sort_1} \and
    \jueq{\ctx, \var : \Typ}{\Term}{\Term'}{\TypB} \and
    \typed{\ctx}{\Forall \var : \Typ. \TypB}{\sort_2}}
  {\jueq{\ctx}{\Lam \var : \Typ. \Term}{\Lam \var : \Typ'. \Term'}
     {\Forall \var : \Typ. \TypB}}
\and
\inferH{App}{\typed{\ctx}{\Term}{\Forall \var : \Typ. \TypB} \and
    \typed{\ctx}{\TermB}{\Typ}}
  {\typed{\ctx}{\Term~\TermB}{\subst{\TypB}{\var}{\TermB}}}
\and
\inferH{App-eq}{\jueq{\ctx}{\Term}{\Term'}{\Forall \var : \Typ. \TypB} \and
    \jueq{\ctx}{\TermB}{\TermB'}{\Typ}}
  {\jueq{\ctx}{\Term~\TermB}{\Term'~\TermB'}
    {\subst{\TypB}{\var}{\TermB}}}
\and
\inferH{Predicativity}{}{\ProdRs(\Type{i}, \Type{j}, \Type{\Max{i, j}})}
\and
\inferH{Predicativity'}{}{\ProdRs(\Prop, \Type{i}, \Type{i})}
\and
\inferH{Impredicativity}{}{\ProdRs(s, \Prop, \Prop)}
\end{mathpar}
\caption{Basic construction}
\label{fig:pCuIC:basic-cons}
\end{figure}

\subsection{Inductive types and their eliminators}
The typing rules for inductive types, their constructors and
eliminators are depicted in
Figure~\ref{fig:pCuIC:inductives-eliminators}.

\begin{figure}
\begin{mathpar}
\inferH{Ind-WF}{\indwf n \ctx \declinds \declcons \and 
    (A \synteq \Forall \termP : \TypP.\Forall \vartermC : \Term.\arity_\indtyp \and \typed{\ctx}{\Typ}{\sort_\indtyp} \text{ for all } (\indtyp : A) \in \declinds) \and \\\
    \varTyp \synteq \Forall \termP : \TypP.\varTyp' \and \typed{\ctx, \declinds, \termP : \TypP}{\varTyp'}{\arity_\indtyp} \text{ for all } (\constr : \varTyp) \in \declcons \text{ if }
\constr \in \constrsof(\declcons, \indtyp)
}{\WFctx{\ctx, \Ind_n\{\declinds := \declcons\}}}
\and
\inferH{Ind-type}{\WFctx{\ctx} \and \Inds \synteq \Ind_n\{\declinds := \declcons\} \in \ctx
    \and \indtyp_i \in \dom(\declinds)}
  {\typed{\ctx}{\Inds.d_i}{\declinds(\indtyp_i)}}
\and
\inferH{Ind-constr}{\WFctx{\ctx} \and \Inds \synteq \Ind_n\{\declinds := \declcons\} \in \ctx
    \and \constr \in \dom(\declcons)}
  {\typed{\ctx}{\Inds.\constr}{\subst{\declcons(\constr)}{\vv{\indtyp}}{\vv{\declinds.\indtyp}}}}
\and
\inferH{Ind-Elim}{\WFctx{\ctx} \and \Inds \synteq \Ind_n\{\declinds := \declcons\} \in \ctx
  \and \\\
 \dom(\declinds) = \set{\indtyp_1, \dots, \indtyp_l} \and \dom(\declcons) = \set{\constr_1, \dots, \constr_{l'}} \and \\\
\typed{\ctx}{\motive_{\indtyp_i}}{\Forall \vv{\var}:\vv{\Typ}. (\indtyp_i~\vv{\var}) \to \sort'} \text{ where }
\declinds(\indtyp_i) \synteq \Forall \vv{\var}:\vv{\Typ}.\sort \text{ for all } 1 \leq i \leq l \and \\\
\typed{\ctx}{\term}{\Inds.\indtyp_k~\vv{\termB}~\vv{\vartermC}} \and \len(\vv{\termB}) = n \and
\typed{\ctx}{\genfun_{\constr_i}}
{\elimtyp{\Inds}{\vv{\motive}}(\constr_i, \declcons(\constr_i))} \text{ for all } 1 \le i \le l'
}
{\typed{\ctx}{\Elim{\term}{\Inds.\indtyp_k; \vv{\termB}}{\motive_{\indtyp_1}, \dots, \motive_{\indtyp_l}}
{\genfun_{\constr_1},\dots,\genfun_{\constr_{l'}}}}{\motive_{\indtyp_k}~\vv{\termB}~\vv{\vartermC}~\term}}
\and
\inferH{Ind-Elim-eq}{\WFctx{\ctx} \and \Inds \synteq \Ind_n\{\declinds := \declcons\} \in \ctx
  \and \\\
 \dom(\declinds) = \set{\indtyp_1, \dots, \indtyp_l} \and \dom(\declcons) = \set{\constr_1, \dots, \constr_{l'}} \and \\\
(\jueq{\ctx}{\motive_{\indtyp_i}}{\motive_{\indtyp_i}'}{\Forall \vv{\termP} : \vv{\TypP}. \Forall \vv{\var}:\vv{\Typ}. (\indtyp_i~\vv{\var}) \to \sort'} \text{ where } \\\
\declinds(\indtyp_i) \synteq \Forall \vv{\termP} : \vv{\TypP}. \Forall \vv{\var}:\vv{\Typ}.\sort \text{ for all } 1 \leq i \leq l) \and \\\
\jueq{\ctx}{\term}{\term'}{\Inds.\indtyp_k~\vv{\termB}~\vv{\vartermC}} \and \len(\vv{\termB}) = n \and
\jueq{\ctx}{\vv{\termB}}{\vv{\termB'}}{\vv{\TypP}} \and \\\
\jueq{\ctx}{\genfun_{\constr_i}}{\genfun_{\constr_i}'}
{\elimtyp{\Inds}{\vv{\motive}}(\constr_i, \declcons(\constr_i))} \text{ for all } 1 \le i \le l'
}
{
\ctx \vdash {\Elim{\term}{\Inds.\indtyp_k; \vv{\termB}}{\motive_{\indtyp_1}, \dots, \motive_{\indtyp_l}}
{\genfun_{\constr_1},\dots,\genfun_{\constr_{l'}}}} \simeq \\\
{\Elim{\term'}{\Inds.\indtyp_k; \vv{\termB'}}{\motive_{\indtyp_1}', \dots, \motive_{\indtyp_l}'}
{\genfun_{\constr_1}',\dots,\genfun_{\constr_{l'}}'}} : {\motive_{\indtyp_k}~\vv{\termB}~\vv{\vartermC}~\term}
}
\end{mathpar}
\caption{Inductive types and eliminators}
\label{fig:pCuIC:inductives-eliminators}
\end{figure}

\paragraph{Well-formedness of inductive types}
The first rule in this figure is the well-formedness of inductive
types. It states that in order to have that the context $\ctx$ is
well-formed after adding the mutual inductive block
$\Ind_n\{\declinds := \declcons\}$, i.e.,
$\WFctx{\ctx, \Ind_n\{\declinds := \declcons\}}$, all inductive types
in the block as well as all constructors need to be well-typed terms.
In addition, we must have that the well-formedness side condition
$\indwf{n}{\ctx}{\declinds}{\declcons}$ holds. The well-formedness
side condition $\indwf{n}{\ctx}{\declinds}{\declcons}$ holds whenever:
\begin{itemize}
\item All variables in $\declinds$ and $\declcons$ are distinct.
\item The first $n$ arguments of all inductive types and constructors
  in the block are the parameters. In other words, there is a sequence
  of terms $\vv{\TypP}$ such that $\len(\vv{\TypP}) = n$ and for all
  $\var : \varTyp \in \declinds, \declcons$ we have
  $\varTyp \synteq \Forall \vv{\termP} : \vv{\TypP}. \varTypB$ for
  some $\varTypB$.
\item Parameters are parametric. In other words, for all
  $\constr : \varTyp \in \declcons$ we have
  $\varTyp \synteq \Forall \vv{\termP} : \vv{\TypP}. \Forall \vv{\var}
  : \vv{\TypB}. \indtyp~\vv{\termP}~\vv{\termC}$. That is, every
  constructor constructs a term of some inductive type in the block
  where values applied for parameter arguments of the inductive type
  are parameter arguments of the constructor.
\item Every inductive type is just a type (an element of a universe)
  that depends on a number of arguments beginning with parameters. The
  non-parameter arguments are called the arity of the inductive
  type. In other words, for all $\indtyp \in \dom(\declinds)$ we have
  $\declinds(\indtyp) \synteq \Forall \vv{\var} : \vv{\TypP}. \Forall
  \vv{\vartermC} : \vv{\Term}. \arity_\indtyp$ where $\vv{\Term}$ are
  called the indices of the inductive type and $\arity_\indtyp$ is a
  sort called the arity of the inductive type $\indtyp$. We require
  that $\arity_\indtyp \neq \Prop$.
\item Every constructor is a constructs terms of an inductive type in
  the block. In other words, for all $\constr \in \dom(\declcons)$ we
  have $\constr \in \constrsof(\declcons, \indtyp)$ for some
  $\indtyp \in \declinds$ where $\constrsof(\declcons, \indtyp)$ is
  the set of constructors in $\declcons$ that construct terms of the
  inductive type $\indtyp$.
\[\constrsof(\declcons, \indtyp) \eqdef \set{\constr \in \dom(\declcons)
    \mid \declcons(\constr) \synteq \Forall \vv{\termP} : \vv{\TypP}.
    \Forall \vv{\var} : \vv{\varTypB}.d~\vv{\termB} }\]
\item All inductive types in the block appear only strictly positively
  in constructors. In other words, for all
  $\constr \in \dom(\declcons)$,
  $\strictpos{\dom(\declinds)}(\declcons(\constr))$ where
  $\strictpos{S}(\term)$ is determined by the following rules:
  \begin{mathpar}
    \infer{\termB \text{ does not appear in } \vv{\Typ} \text{ or } \vv{\term} \text{ for all } \termB \in S \and
      \TypB \in S}
    {\strictposarg{S}(\Forall \vv{\var} : \vv{\Typ}. \TypB~\vv{\term})}
    \and
    \infer{\termB \text{ does not appear in } \vv{\term} \text{ for all } \termB \in S \and
      \TypB \in S}{\strictpos{S}(\TypB~\vv{\term})}
    \and
    \infer{\strictposarg{S}(\Typ) \and \strictpos{S}(\TypB)}{\strictpos{S}(\Typ \to \TypB)}
    \and
    \infer{\termB \text{ does not appear in } \vv{\Typ} \text{ for all } \termB \in S \and \strictpos{S}(\TypB)}
    {\strictpos{S}(\Forall \vv{\var} : \vv{\Typ}. \TypB)}
  \end{mathpar}
\end{itemize}

\paragraph{Inductive types and constructors}
Rules \textsc{Ind-Type} and \textsc{Ind-constr} indicate,
respectively, the type of inductive types and constructs in a block of
mutually inductive types. The type of an inductive type of a block is
exactly the same as declared in the block. The type of constructors of
a block is determined by the type declared in the block except that
inductive types in block are replaced by their proper (global) names.

\begin{example}\label{ex:upolylist}
  The following is the definition natural numbers in\pCuIC{}.
  \begin{align*}
    \mathcal{N} \eqdef{} & \mathbfsf{Ind}_0\{\mathit{nat} : \Type{0} :=
    \mathit{zero} : \mathit{nat}, \mathit{succ} : \mathit{nat} \to \mathit{nat}\}\\
    \mathit{nat} \eqdef{} & \mathcal{N}.\mathit{nat}\\
    \mathit{zero} \eqdef{} & \mathcal{N}.\mathit{zero}\\
    \mathit{succ} \eqdef{} & \mathcal{N}.\mathit{succ}
  \end{align*}
\end{example}

\begin{example} The following is the definition of universe polymorphic
  lists (for level $i$) in \pCuIC{}.
  \begin{align*}
    \mathcal{L}_{i} \eqdef{} & \mathbfsf{Ind}_1\{\mathit{list} : \Forall \Typ : \Type{i}. \Type{i} :=
    \mathit{nil} : \Forall \Typ : \Type{i}. \mathit{list}~\Typ, \\
    & \mathit{cons} : \Forall \Typ : \Type{i}. \Typ \to \mathit{list}~\Typ \to \mathit{list}~\Typ\}\\
    \mathit{list}_{i} \eqdef{} & \mathcal{L}_{i}.\mathit{list}\\
    \mathit{nil}_{i} \eqdef{} & \mathcal{L}_{i}.\mathit{nil}\\
    \mathit{cons}_{i} \eqdef{} & \mathcal{L}_{i}.\mathit{cons}
  \end{align*}
\end{example}

\begin{example} Definition of a finitely branching tree as a mutual
  inductive block in \pCuIC{}.
  \begin{align*}
    \mathbfsf{Ind}_0\{
    &\mathit{FTree} : \Type{0}, \mathit{Forest} : \Type{0} :=\\
    & \mathit{leaf} : \mathit{FTree}, \mathit{node} : \mathit{Forset} \to \mathit{FTree} ,\\
    & \mathit{Fnil} : \mathit{Forest}, \mathit{Fcons} : \mathit{FTree} \to \mathit{Forest} \to \mathit{Forest}\}
  \end{align*}
\end{example}

\paragraph{Eliminators}
The term
$\Elim{\term}{\Inds.\indtyp_k; \vv{\termB}}{\motive_{\indtyp_1}, \dots,
  \motive_{\indtyp_l}}
{\genfun_{\constr_1},\dots,\genfun_{\constr_{l'}}}$ is the elimination
of the term $\term$ (which is of type $\indtyp_k$ (in some inductive
block $\Inds$) applied to parameters $\vv{\termB}$) where the result of
elimination of inductive types in the block, i.e., motives of
eliminations, is given by $\vv{\motive}$ and $\vv{\genfun}$ are
functions for elimination of terms constructed using particular
constructors. The term, above, has type
$\motive_{\indtyp_k}~\vv{\termB}~\vv{\vartermC}~\term$ if $\term$ has type
$\indtyp_k~\vv{\termB}~\vv{\vartermC}$.

Each case-eliminator $\genfun_{\constr_i}$ is the recipe for
generating a term of the appropriate type (according to the
corresponding motive) out of arguments of the constructor $\constr_i$
under the assumption that all (mutually) recursive arguments are
already appropriately eliminated.  This is perhaps best seen in the
rule \textsc{Iota} below which describes the judgemental equality
corresponding to the (intended) operational behavior of eliminators.

The function
$\elimtyp{\Inds}{\vv{\motive}}(\constr_i, \declcons(\constr_i))$
ascribes a type to the case-eliminator $\genfun_{\constr_i}$ in the
manner explained above. That is,
$\elimtyp{\Inds}{\vv{\motive}}(\constr_i, \declcons(\constr_i))$ is a
function type that given arguments of the constructor $\constr_i$ (and
their eliminated version if they are (mutually) recursive arguments)
produces a term of the appropriate type according to the motives. It
is formally defined as follows by recursion on derivation of
$\strictpos{\dom(\declinds)}(\declcons(\constr_i))$:
\begin{align*}
\intertext{If
  $\TypP \synteq \Forall \vv{\var} :
  \vv{\Typ}. \indtyp_i~\vv{\vartermC}$ and we have
  $\strictposarg{\dom{\declinds}}(\TypP)$ and $\strictpos{\dom{\declinds}}(\TypB)$}
  \elimtyp{\Inds}{\vv{\motive}}(\term, \TypP \to \TypB) \eqdef{} & \Forall \termP : \TypP. (\Forall \vv{\var} : \vv{\Typ}. \motive_{\indtyp_i}~\vartermC~(\termP~\vv{\var})) \to \elimtyp{\Inds}{\vv{\motive}}(\term~\termP, \TypB)\\
\intertext{Otherwise,}
  \elimtyp{\Inds}{\vv{\motive}}(\term, \Forall \var : \Typ. \TypB) \eqdef{} & \Forall \var : \Typ. \elimtyp{\Inds}{\vv{\motive}}(\term~\var, \TypB)
\intertext{Otherwise, if $\indtyp \in \dom(\declinds)$}
  \elimtyp{\Inds}{\vv{\motive}}(\term, \indtyp~\vv{\vartermC}) \eqdef{} & \motive_{\indtyp}~\vv{\vartermC}~\term
\end{align*}

\begin{example} The following is the definition of the recursive
  function corresponding to the principle of mathematical induction on
  natural numbers in \pCuIC{}.
  \begin{align*}
    \mathit{nat\_ind} \eqdef{} \Lam P : \mathit{nat} \to \Prop. \Lam o : P~zero.
    \Lam \mathit{stp} : \Forall \var : \mathit{nat}. P~\var \to P~(\mathit{succ}~\var). \\
    \Lam n : \mathit{nat}.\Elim{n}{\mathit{nat}; \nil}{P}{o, \mathit{stp}}
  \end{align*}
The term $\mathit{nat\_ind}$ above has the type
\begin{align*}
  \Forall P : \mathit{nat} \to \Prop. (P~Z) \to (\Forall \var :
  \mathit{nat}. P~\var \to P~(S~\var)) \to \Forall n : \mathit{nat}. P~n
\end{align*}
\end{example}

\begin{example} The following is the definition the recursive function
  to add two natural numbers in \pCuIC{}.
  \begin{align*}
    \mathit{add} \eqdef{} \Lam n : \mathit{nat}. \Lam m : \mathit{nat}.
    \Elim{n}{\mathit{nat}; \nil}{\Lam \_ : \mathit{nat}. \mathit{nat}}
    {m, \Lam \var : \mathit{nat}. \Lam \varB : \mathit{nat}. \mathit{succ}~\varB}
  \end{align*}
\end{example}

\begin{example} The following is the definition the polymorphic and
  universe polymorphic recursive function to compute the length of a
  list in \pCuIC{}.
  \begin{align*}
    \mathit{length} \eqdef{} & \Lam \Typ : \Type{i}. \Lam l : \mathit{list}_{i}~\Typ.
    \mathbfsf{Elim}(l; \mathit{list}; \Typ; \Lam \_ : \mathit{list}_{i}. \mathit{nat})\\
    & \set{\Lam \TypB : \Type{i}. \mathit{zero}, \Lam \TypB : \Type{i}. \Lam \var : \TypB. \Lam \varB : \mathit{list}_{i}~\TypB. \Lam \varC : \mathit{nat}. \mathit{succ}~\varC}
  \end{align*}
\end{example}

\begin{remark}
  From the definition of eliminators, it might appear that they limit
  expressivity of our system.  It might so appear that they only allow
  us to define recursive functions that are paramtrically polymorphic in
  the parameters of the inductive block. This, however is not the
  case, as existence of inductive types allows for breaking this
  paramtricity. In order to demonstrate this, the next example defines
  a function that sums up a list of natural numbers, where list is the
  polymorphic type of lists defined above.

  It is necessary to formulate eliminators such that motives and
  case-eliminators include parameters. This is due to the way
  inductive types are defined. In particular, a constructor is only
  required to produce a term of the inductive type with the parameters
  the same as the whole block. When an inductive type (including the
  one that the constructor in question is producing) appears as an
  argument of the constructor it needs not be with the same parameters
  as that of the inductive block. This is also the case in the
  practical implementation of inductive types in \Coq{}.  An example
  of such an inductive type is the following definition of lists where
  one can instead of an element of the list store \emph{some code} in
  the list representing that element as a \emph{list} of natural
  numbers.
  \begin{align*}
    \mathbfsf{Ind}_1\{&\mathit{Clist} : \Forall \Typ : \Type{i}. \Type{i} :=
                        \mathit{Cnil} : \Forall \Typ : \Type{i}. \mathit{Clist}~\Typ, \\
                      & \mathit{Ccons} : \Forall \Typ : \Type{i}. \Typ \to \mathit{Clist}~\Typ \to \mathit{Clist}~\Typ\\
                      & \mathit{Ccons'} : \Forall \Typ : \Type{i}. \mathit{Clist}~\mathit{nat} \to \mathit{Clist}~\Typ \to \mathit{Clist}~\Typ\}\\
  \end{align*}
\end{remark}

\begin{example}
  The following is the definition a function that sums up a list of
  natural numbers in \pCuIC{}. Notice that this is a
  non-paramtrically-polymorphic recursive function defined in our
  system for a polymorphic definition of lists.

  We first define an inductive type that can determine when a type is
  the type of natural numbers.
  \begin{align*}
    \mathcal{N} \eqdef{} & \mathbfsf{Ind}_0\{\mathit{isNat} : \Forall \Typ : \Type{0}. \Type{0} :=
                            \mathit{IN} : \mathit{isNat}~\mathit{nat} \}\\
    \mathit{isNat} \eqdef{} & \mathcal{N}.\mathit{isNat} \\
    \mathit{IN} \eqdef{} & \mathcal{N}.\mathit{IN} 
  \end{align*}
  The elimination of this type allows us to write a function that
  given a type $\Typ$ and an element of the type $\mathit{isNat}~\Typ$
  we can construct a function from $A$ to the natural numbers.
  \begin{align*}
    \mathit{toNat} \eqdef{} & \Lam \Typ : \Type{0}. \Lam i : \mathit{isNar}~\Typ.
    \mathbfsf{Elim}(i; \mathit{isNat}; \nil;\Lam \TypB : \Type{0}.\Lam \_ : \mathit{isNat}~\TypB. \TypB \to \mathit{nat})\\
    & \set{\Lam \var : \mathit{nat}. \var}
  \end{align*}

  Now we can use these constructions to define the basis for our sum
  function.
  \begin{align*}
    \mathit{sum\_el'} \eqdef{} & \Lam \Typ : \Type{0}. \Lam \Typ : \mathit{list}_0~\Typ.
    \mathbfsf{Elim}(l; \mathit{list}_0; \Typ; \Lam \TypB : \Type{0}. \mathit{isNar}~\Typ \to \mathit{nat})\\
    & \{\Lam \TypB : \Type{0}. \Lam \_ : \mathit{isNar}~\Typ. \mathit{zero}, \\
    & \Lam \TypB : \Type{0}. \Lam \var : \TypB. \Lam \varB : \mathit{list}~\TypB. \Lam \varC : \mathit{isNat}~\TypB \to \mathit{nat}. \Lam i : \mathit{isNar}~\Typ.\\
    & \mathit{add}~(\mathit{toNat}~i\var)~(\varC~i)\}
  \end{align*}
  Using this helper function we can define the \emph{non-polymorphic}
  function that given a list of natural numbers computes the sum of
  all its elements.
  \[\mathit{sum\_el} \eqdef{}\Lam l : \mathit{list}_0~\mathit{nat}. \mathit{sum\_el'}~\mathit{nat}~l~\mathit{IN} \]

\end{example}

\subsection{Judgemental equality}
The main typing rules for judgemental equality, except for those
related to cumulativity, are presented in Figure~\ref{fig:pCuIC:judgemental-equality}.

\begin{figure}
\begin{mathpar}
\inferH{Eq-ref}{\typed{\ctx}{\term}{\Typ}}{\jueq{\ctx}{\term}{\term}{\Typ}}
\and
\inferH{Eq-sym}{\jueq{\ctx}{\term}{\term'}{\Typ}}{\jueq{\ctx}{\term'}{\term}{\Typ}}
\and
\inferH{Eq-trans}{\jueq{\ctx}{\term}{\term'}{\Typ} \and \jueq{\ctx}
    {\term}{\term''}{\Typ}}{\jueq{\ctx}{\term}{\term''}{\Typ}}
\and
\inferH{Beta}{\typed{\ctx, \var : \Typ}{\Term}{\TypB} \and
    \typed{\ctx, \var : \Typ}{\TypB}{\sort} \and \typed{\ctx}{\TermB}{\Typ}}
  {\jueq{\ctx}{(\Lam \var : \Typ. \Term)~\TermB}
    {\subst{\Term}{\var}{\TermB}}{\subst{\TypB}{\var}{\TermB}}}
\and
\inferH{Delta}{\WFctx{\ctx} \and \var := \term : \Typ \in \ctx}
  {\jueq{\ctx}{\var}{\term}{\Typ}}
\and
\inferH{Eta}{\typed{\ctx}{\term}{\Forall \var : \Typ. \TypB}}
  {\jueq{\ctx}{\term}
    {\Lam \var : \Typ. \term~\var}{\Forall \var : \Typ. \TypB}}
\and
\inferH{Zeta}{\typed{\ctx}{\term}{\Typ} \and
    \typed{\ctx, \var := \term : \Typ}{\termB}{\TypB}}
  {\jueq{\ctx}{(\Let \var := \term : \Typ in \termB)}
    {\subst{\termB}{\var}{\term}}{\subst{\TypB}{\var}{\term}}}
\and
\inferH{Iota}{\WFctx{\ctx} \and \Inds \synteq \Ind_n\{\declinds := \declcons\} \in \ctx
  \and \subinds{\ctx}{\Inds'}{\Inds} \\\
 \dom(\declinds) = \set{\indtyp_1, \dots, \indtyp_l} \and \dom(\declcons) = \set{\constr_1, \dots, \constr_{l'}} \and \\\
\typed{\ctx}{\motive_{\indtyp_i}}{\Forall \vv{\var}:\vv{\Typ}. (\indtyp_i~\vv{\var}) \to \sort'} \text{ where }
\declinds(\indtyp_i) \synteq \Forall \vv{\var}:\vv{\Typ}.\sort \text{ for all } 1 \leq i \leq l \and \\\
\typed{\ctx}{\Inds'.\constr_j~\vv{\varterm}}{\Inds.\indtyp_k~\vv{\termB}~\vv{\vartermC}} \and \len(\vv{\termB}) = n \and
1 \le j \le l' \and\\\
\typed{\ctx}{\genfun_{\constr_i}}
{\elimtyp{\Inds}{\vv{\motive}}(\Inds.\constr_i, \declcons(\constr_i))} \text{ for all } 1 \le i \le l'
}
{\ctx \vdash {\Elim{\Inds'.\constr_j~\vv{\varterm}}{\Inds.\indtyp_k;\vv{\termB}}{\motive_{\indtyp_1}, \dots, \motive_{\indtyp_l}}
{\genfun_{\constr_1},\dots,\genfun_{\constr_{l'}}}} \simeq \\ {\recor{\Inds}{\vv{\motive}}{\vv{\genfun}}(\genfun_{\constr_j}; \vv{\varterm}; \declcons(\constr_j))} : {\motive_{\indtyp_k}~\vv{\termB}~\vv{\vartermC}~(\Inds'.\constr_j~\vv{\varterm})}}
\end{mathpar}
\caption{The main judgemental equality rules}
\label{fig:pCuIC:judgemental-equality}
\end{figure}

The rules \textsc{Eq-ref}, \textsc{Eq-sym} and \textsc{Eq-trans} make
the judgemental equality an equivalence relation. The rule
\textsc{Beta} corresponds to the operational rule for
$\beta$-reduction in lambda calculus. The rules \textsc{Delta} and
\textsc{Zeta} correspond to the operation of expansion of global
definitions (defined in the context) and let-bindings, respectively.
The rule \textsc{Eta} corresponds to $\eta$-expansion for (dependent)
functions.

The rule \textsc{Iota} corresponds to the operation of reduction of
recursive functions and case analysis (in systems featuring these,
e.g., \Coq{} itself) and that of eliminators as in our case. Notice
that this rule only applies if the term on which elimination is
being performed is a constructor applied to some terms\footnote{This
  restriction is indeed necessary in \Coq{} so as to guarantee strong
  normalization.}.  This rule specifies that, as expected, the
eliminator, applied to a term that is constructed out of a constructor
(of the corresponding type or any of its sub-types, see
Remark~\ref{remark:iota-and-ind-cum} below) applied to some terms, is
equivalent to the corresponding case-eliminator applied arguments of
the constructor, after (mutually) recursive arguments are
appropriately eliminated.  This is exactly what the recursor
$\recor{\Inds}{\vv{\motive}}{\vv{\genfun}}$ does, applying arguments
of the constructor to the corresponding case-eliminator after
eliminating (mutually) recursive arguments as necessary. Recursors are
defined as follows by recursion on the derivation of
$\strictpos{\dom(\declinds)}(\declcons(\constr_i))$:
\begin{align*}
\intertext{If
  $\TypP \synteq \Forall \vv{\var} :
  \vv{\Typ}. \indtyp_i~\vv{\vartermC}$ and we have
  $\strictposarg{\dom{\declinds}}(\TypP)$ and $\strictpos{\dom{\declinds}}(\TypB)$}
  \recor{\Inds}{\vv{\motive}}{\vv{\genfun}}(\term; \vartermB,\vv{\varterm}; \TypP \to \TypB) \eqdef{}
  & \recor{\Inds}{\vv{\motive}}{\vv{\genfun}}\left(\term~\vartermB~\left(\Lam \vv{\var} : \vv{\Typ}. \Elim{\vartermB~\vv{\var}}{\Inds.\indtyp_i; \vv{\varB}}{\vv{\motive}}{\vv{\genfun}}\right); \vv{\varterm}; \TypB\right)\\
\intertext{where $\vv{\varB}$ are the first variables in $\vv{\var}$ corresponding to parameters.}
\intertext{Otherwise,}
  \recor{\Inds}{\vv{\motive}}{\vv{\genfun}}(\term; \vartermB,\vv{\varterm}; \Forall \var : \Typ. \TypB) \eqdef{}
  & \recor{\Inds}{\vv{\motive}}{\vv{\genfun}}(\term~\vartermB; \vv{\varterm}; \TypB)
\intertext{Otherwise, if $\indtyp \in \dom(\declinds)$}
  \recor{\Inds}{\vv{\motive}}{\vv{\genfun}}(\term; \nil;\indtyp~\vv{\vartermC}) \eqdef{} & \term
\end{align*}
\begin{remark}[\textsc{Iota} and subtyping of cumulativity for inductive
  tpyes] \label{remark:iota-and-ind-cum} We, in addition, stipulate here
  that the eliminators can eliminate terms constructed using the
  corresponding constructor of any inductive type that is a sub-type of
  the inductive type for which the elimination is specified. Note that
  in the \Coq{} implementation of cumulativity for inductive types,
  cumulativity is only considered for different instances of the same
  inductive type at different universe levels. That is, only for two
  instances of the same universe polymorphic inductive type. In those
  settings, the inductive types being sup-types of one another are
  instance of the same inductive types and the eliminators, and in
  general the operational semantics, simply ignore the universes in
  terms. Here, we have to consider this side condition to achieve a
  similar result.
\end{remark}

\subsection{Cumulativity}
Figures~\ref{fig:pCuIC:cumulativity} and \ref{fig:pCuIC:cumu-jueq-inducrtive-types} show the rules for cumulativity in \pCuIC{}.
\begin{figure}
\begin{mathpar}
\inferH{Prop-in-Type}{}{\subtyp{\ctx}{\Prop}{\Type{i}}}
\and
\inferH{Cum-Type}{i \le j}{\subtyp{\ctx}{\Type{i}}{\Type{j}}}
\and
\inferH{Cum-trans}{\subtyp{\ctx}{\Term}{\TermB} \and
    \subtyp{\ctx}{\TermB}{\TermC}}{\subtyp{\ctx}{\Term}{\TermC}}
\and
\inferH{Cum-weaken}{\subtyp{\ctx}{\Term}{\TermB} \and
    \typed{\ctx}{\TermC}{\sort} \and \var \not\in\dom(\ctx)}
  {\subtyp{\ctx, \var : \TermC}{\Term}{\TermB}}
\and
\inferH{Cum-Prod}{\jueq{\ctx}{\Typ_1}{\TypB_1}{\sort} \and
    \subtyp{\ctx, \var : \Typ_1}{\Typ_2}{\TypB_2}}
  {\subtyp{\ctx}{\Forall \var : \Typ_1. \Typ_2}{\Forall \var : \TypB_1. \TypB_2}}
\and
\inferH{Cum-eq-L}{\jueq{\ctx}{\Term}{\TermB}{\sort} \and
    \subtyp{\ctx}{\TermB}{\TermC}}
  {\subtyp{\ctx}{\Term}{\TermC}}
\and
\inferH{Cum-eq-R}{\subtyp{\ctx}{\Term}{\TermB} \and
    \jueq{\ctx}{\TermB}{\TermC}{\sort}}
  {\subtyp{\ctx}{\Term}{\TermC}}
\and
\inferH{Cum}{\typed{\ctx}{\term}{\Typ} \and \subtyp{\ctx}{\Typ}{\TypB}}
  {\typed{\ctx}{\term}{\TypB}}
\and
\inferH{Cum-eq}{\jueq{\ctx}{\term}{\term'}{\Typ} \and \subtyp{\ctx}{\Typ}{\TypB}}
  {\jueq{\ctx}{\term}{\term'}{\TypB}}
\and
\inferH{Eq-Cum}{\jueq{\ctx}{\Term}{\Term'}{\sort}}{\subtyp{\ctx}{\Term}{\Term'}}
\end{mathpar}
\caption{Cumulativity}
\label{fig:pCuIC:cumulativity}
\end{figure}
Rules in Figure~\ref{fig:pCuIC:cumu-jueq-inducrtive-types} pertain to
cumulativity of inductive types. These are novel additions to the
predicative calculus of constructions (\pCIC{}). The main purpose of
the model constructed in this paper is to prove these rules correct.
\begin{figure}
\begin{mathpar}
\inferH{Ind-leq}{\Inds \synteq \Ind_n\{\declinds := \declcons\} \in \ctx \and
    \Inds' \synteq \Ind_n\{\declinds' := \declcons'\} \in \ctx \and \\\
    \dom(\declinds) = \dom(\declinds') \and
    \dom(\declcons) = \dom(\declcons') \and \\\
    \Big[\declinds(\indtyp) \synteq \vv{\termP} : \vv{\TypP}.
       \Forall \vv{\varC} : \vv{\varTypC}.\sort \and
    \declinds'(\indtyp) \synteq \vv{\termP} : \vv{\TypP'}.
       \Forall \vv{\varC} : \vv{\varTypC'}.\sort' \and \\\
    \subtyp{\ctx, \vv{\termP} : \vv{\TypP}}{\vv{\varTypC}}{\vv{\varTypC'}} \and \\\
    \Big(\declcons(\constr) \synteq \Forall \vv{\termP} : \vv{\TypP}.
       \Forall \vv{\var} : \vv{\varTypB}.\indtyp~\vv{\termB} \and
    \declcons'(\constr) \synteq \Forall \vv{\termP} : \vv{\TypP'}.
       \Forall \vv{\var} : \vv{\varTypB'}.\indtyp~\vv{\termB'} \and \\\
    \subtyp{\ctx, \vv{\termP} : \vv{\TypP}}{\vv{\varTypB}}{\vv{\varTypB'}}
    \and \jueq{\ctx, \vv{\termP} : \vv{\TypP}, \vv{\var} : \vv{\varTypB}}
           {\vv{\termB}}{\vv{\termB'}}{\vv{\TypP'}, \vv{\varTypC'}} \and \\\
    \text{for } \constr \in \constrsof(\declcons, \indtyp) \Big) \and
    \text{for } \indtyp \in \dom(\declinds) \Big]
  }{\subinds{\ctx}{\Inds}{\Inds'}}
\and
\inferH{C-Ind}{
  \Inds \synteq \Ind_n\{\declinds := \declcons\} \and
  \Inds' \synteq \Ind_n\{\declinds' := \declcons'\} \and
  \subinds{\ctx}{\Inds}{\Inds'} \and\\\
  \typed{\ctx}{\Inds.\indtyp~\vv{\varterm}}{\sort} \and
  \typed{\ctx}{\Inds'.\indtyp~\vv{\varterm}}{\sort'}
}{\subtyp{\ctx}{\Inds.\indtyp~\vv{\varterm}}{\Inds'.\indtyp~\vv{\varterm}}}
\and
\inferH{Ind-Eq}{\subtyp{\ctx}{\Inds.\indtyp~\vv{\varterm}}{\Inds'.\indtyp~\vv{\varterm}} \and
  \subtyp{\ctx}{\Inds'.\indtyp~\vv{\varterm}}{\Inds.\indtyp~\vv{\varterm}} \and
  \typed{\ctx}{\Inds.\indtyp~\vv{\varterm}}{\sort} \and
  \typed{\ctx}{\Inds'.\indtyp~\vv{\varterm}}{\sort}}
{\jueq{\ctx}{\Inds.\indtyp~\vv{\varterm}}{\Inds'.\indtyp~\vv{\varterm}}{\sort}}
\and
\inferH{Constr-Eq-L}{\subtyp{\ctx}{\Inds'.\indtyp~\vv{\varterm}}{\Inds.\indtyp~\vv{\varterm}} \and \\\
  \typed{\ctx}{\Inds.\constr~\vv{\vartermC}}{\Inds.\indtyp~\vv{\varterm}} \and
  \typed{\ctx}{\Inds'.\constr~\vv{\vartermC}}{\Inds'.\indtyp~\vv{\varterm}}}
{\jueq{\ctx}{\Inds.\constr~\vv{\vartermC}}{\Inds'.\constr~\vv{\vartermC}}{\Inds.\indtyp~\vv{\varterm}}}
\and
\inferH{Constr-Eq-R}{\subtyp{\ctx}{\Inds.\indtyp~\vv{\varterm}}{\Inds'.\indtyp~\vv{\varterm}} \and \\\
  \typed{\ctx}{\Inds.\constr~\vv{\vartermC}}{\Inds.\indtyp~\vv{\varterm}} \and
  \typed{\ctx}{\Inds'.\constr~\vv{\vartermC}}{\Inds'.\indtyp~\vv{\varterm}}}
{\jueq{\ctx}{\Inds.\constr~\vv{\vartermC}}{\Inds'.\constr~\vv{\vartermC}}{\Inds'.\indtyp~\vv{\varterm}}}
\end{mathpar}
\caption{Cumulativity and judgemental equality for inductive types}
\label{fig:pCuIC:cumu-jueq-inducrtive-types}
\end{figure}

The rule \textsc{Ind-leq} specifies when we say a blocks of inductive
types $\Inds$ is included in another block $\Inds'$. Intuitively, this
holds when corresponding arguments of the inductive type (only the
arity and not the parameters) of corresponding inductive types have
the appropriate subtyping relation and also for all corresponding
arguments of corresponding constructors, except for those that are
parameters of the inductive types. Excluding parameters has
interesting consequences. For instance, let us consider the following
example that shows the polymorphic definition of lists of elements of
a type $\Typ : \Type{i}$ are independent of the level of the type,
$i$.  Notice that cumulativity, implies that $\Typ$ also has type
$\Type{j}$ for any $i \le j$.

\begin{example}\label{ex:cum-jueq-lists}
  Consider the universe polymorphic definition of lists from Example~\ref{ex:upolylist}.
  This definitions allows us to use rules \textsc{Int-leq} and \textsc{C-Ind} to
  conclude that
  $\subtyp{\mathcal{L}_{i}, \mathcal{L}_{j}, \ctx}{\mathit{List}_{i}~\Typ}{\mathit{List}_{j}~\Typ}$ for
  all types $\Typ$ regardless of $i$ and $j$ which would not be the
  case if we were to compare parameters.  This result in turn allows
  us to use the rule \textsc{Ind-Eq} to conclude that
  $\jueq{\mathcal{L}_{i}, \mathcal{L}_{j}, \ctx}{\mathit{List}_{i}~\Typ}{\mathit{List}_{j}~\Typ}{\Type{i}}$
  and use the rule \textsc{Constr-Eq-L} to conclude
  $\jueq{\mathcal{L}_{i}, \mathcal{L}_{j}, \ctx}{\mathit{nil}_{i}~\Typ}{\mathit{nil}_{j}~\Typ}{\mathit{List}_{i}~\Typ}$
  both regardless of $i$ and $j$. Similarly, we can conclude that
  $\jueq{\mathcal{L}_{i}, \mathcal{L}_{j}, \ctx}{\mathit{cons}_{i}~\Typ~a~l}{\mathit{nil}_{j}~\Typ~a'~l'}{\mathit{List}_{i}~\Typ}$
  whenever $\jueq{\mathcal{L}_{i}, \mathcal{L}_{j}, \ctx}{l}{l'}{\mathit{List}_{i}~\Typ}$
  $\jueq{\mathcal{L}_{i}, \mathcal{L}_{j}, \ctx}{a}{~a'}{\Typ}$.
\end{example}

Notice that rules \textsc{C-Ind}, \textsc{Ind-Eq},
\textsc{Constr-Eq-L} and \textsc{Constr-Eq-R} are only applicable if
the inductive type or the constructor in the latter case are fully
applied. This is mainly because subtyping only makes sense for types
(elements of a sort).



\section{Set-theoretic background}
In this section we shortly explain the set-theoretic axioms and
constructions that form the basis of our model. We assume that the
reader is familiar with the ZFC set theory. This is very similar to
the theory that \citet{DBLP:journals/corr/abs-1111-0123} use as the
basis for their model. In particular, we use Zermelo-Fraenkel set
theory with the axiom of choice (ZFC) together with an axiom that
there is a countably infinite strictly increasing hierarchy of
uncountable strongly inaccessible cardinals. In particular, we assume
that we have a hierarchy of \emph{strongly inaccessible} cardinals
$\icard_{0}, \icard_{1}, \icard_{2}, \dots$ where
$\icard_{0} > \omega$.

\subsection{Von Neumann cumulative hierarchy and models of ZFC}
The von Neumann cumulative hierarchy is a sequence of sets (indexed by
ordinal numbers) that is cumulative. That is, each set in the
hierarchy is a subset of the all sets after it. These sets are also
referred to as von Neumann universes. This hierarchy is defined as
follows for ordinal number $\ord$:
\[
\VNCH_\ord \eqdef \bigcup_{\ordB \in \ord} \powerset{\VNCH_\ordB}
\]
It is well-known \citep{DrakeFrankR1974St:a} that whenever $\ord$ is a
strongly inaccessible cardinal number of a cardinality strictly
greater than $\omega$, as is the case for $\icard_0, \icard_2, \dots$,
$\VNCH_\ord$ is a model for ZFC. The von Neumann universe
$\VNCH_{\omega}$ satisfies all axioms of ZFC except for the axiom of
infinity. This is due to the fact that all sets in $\VNCH_{\omega}$
are \emph{finite}. This is why we have assumed that our hierarchy of
strngly inaccessible cardinals starts at one strictly greater than
$\omega$. In particular, if $A$ and $B$ are two sets in
$\VNCH_{\icard}$ then $B^A \in \VNCH_{\icard}$ where $B^A$ is the set
of all functions from $B$ to $A$. This allows us to use von Neumann
universes to model the predicative \pCuIC{} universes.  For more
details about strongly inaccessible cardinals and von Neumann
universes refer to \citet{DrakeFrankR1974St:a}.

\subsection{Rule sets and fixpoints: inductive constructions in set theory}
Following \citet{DBLP:journals/corr/abs-1111-0123}, who follow
\citet{Dybjer91inductivesets} and \citet{Aczel1999}, we use inductive
definitions (in set theory) constructed through rule sets to model
inductive types.  Here, we give a very short account of rule sets for
inductive definitions. For further details refer to
\cite{ACZEL1977739}.

A pair $(\premises, \concl)$ is a \emph{rule} based on a set
$\gensetC$ where $\premises \subseteq \gensetC$ is the set of premises
and $\concl \in \gensetC$ is the conclusion. We usually write
$\Rule{\premises}{\concl}$ for a rule $(\premises, \concl)$. A
\emph{rule set} based on $\gensetC$ is a set $\ruleset$ of
rules based on $\gensetC$. We say a set $\genset \subseteq \gensetC$
is \emph{$\ruleset$-closed}, $\closed{\ruleset}(\genset)$ for a
$\gensetC$ based rule set $\ruleset$ if we have:
\[
  \closed{\ruleset}(\genset) \eqdef \forall \Rule{\premises}{\concl} \in
  \ruleset.~\premises \subseteq \genset \then \concl \in \genset
\]
The operator $\operation{\ruleset}$ corresponding to a rule set
$\ruleset$ is the operation of collecting all conclusions for a set
whose premises are available in that set. That is,
\[
  \operation{\ruleset}(\genset) \eqdef \set{\concl \middle|
    \Rule{\premises}{\concl} \in \ruleset \land \premises \subseteq
    \genset}
\]
Hence, a set $\genset$ is $\ruleset$-closed if
$\operation{\ruleset}(\genset) \subseteq \genset$. Notice that
$\operation{\ruleset}$ is a monotone function on $\powerset{\gensetC}$
which is a complete lattice. Therefore, for any $\gensetC$ based rule
set $\ruleset$, the operator $\operation{\ruleset}$ has a least
fixpoint, $\inddef{\ruleset} \subseteq \gensetC$:
\[
  \inddef{\ruleset} \eqdef \bigcap \set{\genset \subseteq \gensetC
    \middle| \closed{\ruleset}(\genset)}
\]
We define by transfinite recursion a sequence, indexed by ordinal
numbers $\TFoperation{\ruleset}{\ord}$ for an ordinal number
$\ord$:
\[
  \TFoperation{\ruleset}{\ord} \eqdef \bigcup_{\ordB \in \ord}
  \left(\TFoperation{\ruleset}{\ordB} \cup
  \operation{\ruleset}(\TFoperation{\ruleset}{\ordB})\right)
\]
Obviously, for $\ordB \le \ord$ we have
$\TFoperation{\ruleset}{\ordB} \subseteq
\TFoperation{\ruleset}{\ord}$.

\begin{theorem}[\citet{ACZEL1977739}]
  For any rule set $\ruleset$ there exists an ordinal number
  $\closingord{\ruleset}$ called the \emph{closing ordinal} of
  $\ruleset$ such that it is the smallest ordinal number for which we
  have
  $\inddef{\ruleset} = \TFoperation{\ruleset}{\closingord{\ruleset}}$
  is the least fixpoint of $\operation{\ruleset}$. In other words,
  $\TFoperation{\ruleset}{\closingord{\ruleset}+1} =
  \operation{\ruleset}(\TFoperation{\ruleset}{\closingord{\ruleset}}) =
  \TFoperation{\ruleset}{\closingord{\ruleset}}$.
\end{theorem}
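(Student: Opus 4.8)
The plan is to treat the statement as the standard transfinite-iteration construction of the least fixpoint of the monotone operator $\operation{\ruleset}$ on the complete lattice $\powerset{\gensetC}$, so that the genuine content is (i) that the iterates stabilise, (ii) that the stable value is exactly the least fixpoint $\inddef{\ruleset}$, and (iii) that the first stabilisation stage is the smallest ordinal with this property. First I would record an auxiliary lemma, proved by transfinite induction on $\ord$: $\TFoperation{\ruleset}{\ord} \subseteq \operation{\ruleset}(\TFoperation{\ruleset}{\ord})$ for every ordinal $\ord$. Indeed any $x \in \TFoperation{\ruleset}{\ord}$ comes, for some $\ordB < \ord$, either from $\TFoperation{\ruleset}{\ordB}$ --- whence $x \in \operation{\ruleset}(\TFoperation{\ruleset}{\ordB})$ by the induction hypothesis, and then $x \in \operation{\ruleset}(\TFoperation{\ruleset}{\ord})$ by monotonicity --- or from $\operation{\ruleset}(\TFoperation{\ruleset}{\ordB})$, in which case monotonicity alone gives $x \in \operation{\ruleset}(\TFoperation{\ruleset}{\ord})$. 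This lemma collapses every successor stage to $\TFoperation{\ruleset}{\ord+1} = \operation{\ruleset}(\TFoperation{\ruleset}{\ord})$, which is what makes the final displayed chain of equalities drop out at the end.

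For stabilisation I would argue by contradiction. If no stage were stable, i.e.\ $\TFoperation{\ruleset}{\ord} \neq \TFoperation{\ruleset}{\ord+1}$ for every $\ord$, then, combined with the monotonicity already observed in the text ($\ordB \le \ord$ implies $\TFoperation{\ruleset}{\ordB} \subseteq \TFoperation{\ruleset}{\ord}$), all the inclusions would be proper and the assignment $\ord \mapsto \TFoperation{\ruleset}{\ord}$ would be strictly increasing, hence injective, on the entire class of ordinals. This would embed a proper class into the set $\powerset{\gensetC}$, which is impossible; concretely it already fails on the cardinal $\kappa^{+}$ with $\kappa = |\powerset{\gensetC}|$, since $\kappa^{+}$ admits no injection into a set of size $\kappa$. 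Hence the class of ordinals $\ord$ satisfying $\TFoperation{\ruleset}{\ord+1} = \TFoperation{\ruleset}{\ord}$ is nonempty, and I define $\closingord{\ruleset}$ to be its least element.

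It then remains to identify $\TFoperation{\ruleset}{\closingord{\ruleset}}$ with $\inddef{\ruleset}$. By the successor-step lemma, $\TFoperation{\ruleset}{\closingord{\ruleset}} = \TFoperation{\ruleset}{\closingord{\ruleset}+1} = \operation{\ruleset}(\TFoperation{\ruleset}{\closingord{\ruleset}})$, so $\TFoperation{\ruleset}{\closingord{\ruleset}}$ is a fixpoint, in particular $\ruleset$-closed, and therefore contains $\inddef{\ruleset}$, the intersection of all $\ruleset$-closed sets. For the reverse inclusion I would show by transfinite induction that $\TFoperation{\ruleset}{\ord} \subseteq \inddef{\ruleset}$ for all $\ord$: since $\inddef{\ruleset}$ is closed and each $\TFoperation{\ruleset}{\ordB} \subseteq \inddef{\ruleset}$ for $\ordB < \ord$ by the hypothesis, monotonicity gives $\operation{\ruleset}(\TFoperation{\ruleset}{\ordB}) \subseteq \operation{\ruleset}(\inddef{\ruleset}) \subseteq \inddef{\ruleset}$, and the union defining $\TFoperation{\ruleset}{\ord}$ stays inside $\inddef{\ruleset}$. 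This yields $\TFoperation{\ruleset}{\closingord{\ruleset}} = \inddef{\ruleset}$ and, via the fixpoint equation above, the displayed chain $\TFoperation{\ruleset}{\closingord{\ruleset}+1} = \operation{\ruleset}(\TFoperation{\ruleset}{\closingord{\ruleset}}) = \TFoperation{\ruleset}{\closingord{\ruleset}}$. For minimality, note that for $\ordB < \closingord{\ruleset}$ the choice of $\closingord{\ruleset}$ forces the inclusion $\TFoperation{\ruleset}{\ordB} \subseteq \TFoperation{\ruleset}{\ordB+1}$ to be proper, so $\TFoperation{\ruleset}{\ordB}$ is not a fixpoint and hence differs from $\inddef{\ruleset}$; thus $\closingord{\ruleset}$ is also the least ordinal whose iterate equals $\inddef{\ruleset}$.

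I expect the only delicate point to be the stabilisation step, which rests on the set-theoretic principle that no strictly increasing, ordinal-indexed chain of subsets of a fixed set can persist through a proper class of stages; everything else is routine transfinite induction together with the Knaster--Tarski characterisation of $\inddef{\ruleset}$ as the least fixpoint that is already recalled just before the statement.
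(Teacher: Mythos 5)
The paper does not prove this statement itself---it is quoted from Aczel's work on inductive definitions and stated without proof---so there is no in-paper argument to compare against; your proof is the standard transfinite-iteration argument and is correct. In particular, your auxiliary inclusion $\TFoperation{\ruleset}{\ord} \subseteq \operation{\ruleset}(\TFoperation{\ruleset}{\ord})$ is exactly what is needed to collapse the paper's cumulative stage definition to the usual successor identity $\TFoperation{\ruleset}{\ord+1} = \operation{\ruleset}(\TFoperation{\ruleset}{\ord})$, the cardinality argument ruling out a strictly increasing ordinal-indexed chain of subsets of $\gensetC$ correctly yields stabilisation, and the two transfinite inductions identifying the stable stage with $\inddef{\ruleset}$ and establishing minimality close the remaining gaps.
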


\begin{lemma}[\citet{ACZEL1977739}] \label{lem:closing-ord-bound}
  Let $\ruleset$ be a rule
  set based on some set $\gensetC$ and $\ordB$ a \emph{regular}
  cardinal such that for every rule
  $\Rule{\premises}{\concl} \in \ruleset$ we have that
  $\cardinality{\premises} < \ordB$ then
  \[
    \closingord{\ruleset} \le \ordB
  \]
  In other words, $\TFoperation{\ruleset}{\ordB}$ is the least
  fixpoint of $\operation{\ruleset}$.
\end{lemma}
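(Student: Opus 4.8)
The plan is to show that the stage $\TFoperation{\ruleset}{\ordB}$ already coincides with the least fixpoint $\inddef{\ruleset}$; the bound on the closing ordinal then follows immediately, because by the preceding theorem $\closingord{\ruleset}$ is the \emph{smallest} ordinal at which the transfinite sequence reaches $\inddef{\ruleset}$, so exhibiting $\ordB$ as one such ordinal forces $\closingord{\ruleset} \le \ordB$. To identify $\TFoperation{\ruleset}{\ordB}$ with $\inddef{\ruleset}$ I would prove two inclusions: first that $\TFoperation{\ruleset}{\ordB}$ is $\ruleset$-closed, so that the least closed set $\inddef{\ruleset} = \bigcap \set{\genset \mid \closed{\ruleset}(\genset)}$ is contained in it; and second that $\TFoperation{\ruleset}{\ordB} \subseteq \inddef{\ruleset}$.

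The heart of the argument, and the only place the regularity hypothesis is used, is establishing closedness, i.e. $\operation{\ruleset}(\TFoperation{\ruleset}{\ordB}) \subseteq \TFoperation{\ruleset}{\ordB}$. Fix a rule $\Rule{\premises}{\concl} \in \ruleset$ with $\premises \subseteq \TFoperation{\ruleset}{\ordB}$. Unfolding the definition of $\TFoperation{\ruleset}{\ordB}$ as a union over $\gamma < \ordB$, each premise $p \in \premises$ lies in $\TFoperation{\ruleset}{\gamma_p} \cup \operation{\ruleset}(\TFoperation{\ruleset}{\gamma_p})$ for some $\gamma_p < \ordB$, hence in $\TFoperation{\ruleset}{\gamma_p + 1}$. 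Put $\beta = \sup_{p \in \premises}(\gamma_p + 1)$. Since $\ordB$ is an infinite and therefore limit cardinal, each $\gamma_p + 1 < \ordB$, and since $\cardinality{\premises} < \ordB$ with $\ordB$ \emph{regular}, this supremum of fewer than $\ordB$ many ordinals, each below $\ordB$, is again below $\ordB$ --- which is precisely the defining property of a regular cardinal. By monotonicity of the sequence, every $p \in \premises$ lies in $\TFoperation{\ruleset}{\beta}$, so $\premises \subseteq \TFoperation{\ruleset}{\beta}$ and therefore $\concl \in \operation{\ruleset}(\TFoperation{\ruleset}{\beta})$. As $\beta < \ordB$, the set $\TFoperation{\ruleset}{\beta} \cup \operation{\ruleset}(\TFoperation{\ruleset}{\beta})$ is one of the terms of the union defining $\TFoperation{\ruleset}{\ordB}$, whence $\concl \in \TFoperation{\ruleset}{\ordB}$, as required.

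For the reverse inclusion, $\TFoperation{\ruleset}{\gamma} \subseteq \inddef{\ruleset}$ for every ordinal $\gamma$ by a routine transfinite induction: $\inddef{\ruleset}$ is $\ruleset$-closed, so it is preserved by the monotone operator $\operation{\ruleset}$, and the union in the definition of $\TFoperation{\ruleset}{\gamma}$ keeps us inside $\inddef{\ruleset}$ given the inductive hypothesis on the smaller stages. In particular $\TFoperation{\ruleset}{\ordB} \subseteq \inddef{\ruleset}$. Combining this with the closedness established above yields $\TFoperation{\ruleset}{\ordB} = \inddef{\ruleset}$, so $\TFoperation{\ruleset}{\ordB}$ is the least fixpoint of $\operation{\ruleset}$, and the minimality clause of the closing-ordinal theorem gives $\closingord{\ruleset} \le \ordB$. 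I expect the regularity step --- bounding the supremum of the stage-indices of the premises --- to be the only genuine obstacle; everything else is bookkeeping around the monotone approximation of a least fixpoint.
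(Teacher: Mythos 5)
The paper gives no proof of this lemma at all---it is imported directly from Aczel's treatment of inductive definitions---so there is nothing internal to compare against, but your argument is the standard proof of this fact and is correct. Regularity is used exactly where it must be, to bound $\sup_{p}(\gamma_p+1)$ below $\ordB$ when $\cardinality{\premises}<\ordB$, which yields closedness of $\TFoperation{\ruleset}{\ordB}$; the reverse inclusion by transfinite induction and the appeal to minimality of $\closingord{\ruleset}$ are routine, as you note.
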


\subsection{Fixpoints of large functions}
A set theoretic constructions is called \emph{large} with respect to a
set theoretic (von Neumann) universe if it does not belong to that
universe. As we shall see, the functions that we will consider for
interpreting of inductive types (operators of certain rule sets)
are indeed large. That is, they map subsets of the universe to subsets
of the universe. As a result, the fixpoint of these functions might
not have a fixpoint within the universe in question as the universe
with the subset relation on it is not a complete lattice. The following
lemmas show that under certain conditions, the fixpoint of the
function induced by rule sets does exist in the desired universe. We
will use this lemma to show that the interpretations of inductive
types do indeed fall in the universe that they are supposed to.

\begin{lemma}\label{lem:closing-ord-in-universe}
  Let $\ruleset$ be a rule set based on the set-theoretic universe
  $\VNCH{}$ and $\ord \in \VNCH{}$ be a cardinal number such that for
  all $\Rule{\premises}{\concl} \in \ruleset$ we have
  $\cardinality{\premises} \le \ord$. Then,
  \[\closingord{\ruleset} \in \VNCH{}\]
\end{lemma}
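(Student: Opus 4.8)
The plan is to reduce the statement to Aczel's bound on the closing ordinal (Lemma~\ref{lem:closing-ord-bound}) by exhibiting a regular cardinal that lies strictly between $\ord$ and the inaccessible cardinal indexing $\VNCH$. Recall that, under the conventions fixed earlier, $\VNCH = \VNCH_{\icard}$ for some strongly inaccessible cardinal $\icard > \omega$, and that every von Neumann universe is transitive with the property that an ordinal $\ordB$ belongs to $\VNCH_{\icard}$ exactly when $\ordB < \icard$. Since $\ord \in \VNCH$ is a cardinal (hence in particular an ordinal), this already gives $\ord < \icard$. The whole argument then amounts to turning the non-strict cardinality bound $\cardinality{\premises} \le \ord$ into a strict bound below a regular cardinal inside $\VNCH$.

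First I would produce a regular cardinal $\ordB$ with $\ord < \ordB < \icard$. The natural choice is the successor cardinal $\ordB \eqdef \ord^+$ (reading $\ord^+$ as $\omega$ in the degenerate case where $\ord$ is finite, so that $\ordB$ is always an infinite cardinal). By the axiom of choice every successor cardinal is regular, so $\ordB$ is regular as required by Lemma~\ref{lem:closing-ord-bound}. Moreover $\icard$, being strongly inaccessible, is a limit cardinal; hence $\ord < \icard$ forces $\ord^+ < \icard$, i.e. $\ordB < \icard$. Finally, the hypothesis that every rule $\Rule{\premises}{\concl} \in \ruleset$ satisfies $\cardinality{\premises} \le \ord$ yields $\cardinality{\premises} \le \ord < \ord^+ = \ordB$; the passage from the non-strict bound $\le \ord$ to the strict bound $< \ordB$ is precisely why the successor cardinal is the right witness.

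With $\ordB$ in hand the conclusion follows immediately: Lemma~\ref{lem:closing-ord-bound} applies and gives $\closingord{\ruleset} \le \ordB$, and combining this with $\ordB < \icard$ shows $\closingord{\ruleset} < \icard$. By the characterization of the ordinals of $\VNCH$ recalled above, $\closingord{\ruleset} < \icard$ is exactly the assertion $\closingord{\ruleset} \in \VNCH$, which completes the proof.

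The only genuinely delicate point is the inequality $\ord^+ < \icard$: it relies on $\icard$ being a \emph{limit} cardinal and not merely regular, a property guaranteed here by strong inaccessibility (plain inaccessibility would already suffice). Everything else is bookkeeping about the transitivity of $\VNCH_{\icard}$ and the regularity of successor cardinals, both of which are standard consequences of the set-theoretic assumptions stated in this section.
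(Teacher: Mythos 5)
Your proof is correct and follows essentially the same route as the paper: both reduce to Lemma~\ref{lem:closing-ord-bound} by exhibiting a regular cardinal of $\VNCH$ strictly above $\ord$, relying on the axiom of choice for the regularity of successor cardinals. The only cosmetic difference is the witness --- you take the cardinal successor $\ord^+$ where the paper takes $\Aleph{\ord+1}$ --- and your explicit justification that $\ord^+ < \icard$ (via $\icard$ being a limit cardinal) and that ordinals below $\icard$ are exactly those in $\VNCH_{\icard}$ fills in steps the paper leaves implicit.
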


\begin{proof}
  By Lemma~\ref{lem:closing-ord-bound}, it suffices to show that there
  is a regular cardinal $\ordB \in \VNCH{}$ such that
  $\cardinality{\premises} < \ordB$ for any
  $\Rule{\premises}{\concl} \in \ruleset$. Take $\ordB$ to be
  $\Aleph{\ord+1}$. By the fact that $\VNCH{}$ is a model of ZFC we
  know that $\Aleph{\ord+1} \in \VNCH$. By the fact that
  $\ord < \Aleph{\ord+1}$ we know that
  $\cardinality{\premises} < \Aleph{\ord+1}$ for any
  $\Rule{\premises}{\concl} \in \ruleset$. It is well known that under
  axioms of ZFC (this crucially uses axiom of choice) $\Aleph{\ord+1}$
  is a regular cardinal for any ordinal number $\alpha$ -- see
  \cite{DrakeFrankR1974St:a} for a proof. This concludes our proof.
\end{proof}

\begin{lemma}\label{lem:fixpoint_of_ruleset_in_universe}
  Let $\ruleset$ be a rule set based on the set-theoretic universe
  $\VNCH{}$ and $\ord \in \VNCH{}$ be a cardinal number such that for
  all $\Rule{\premises}{\concl} \in \ruleset$ we have
  $\cardinality{\premises} \le \ord$. Then,
  \[\inddef{\ruleset} \in \VNCH{}\]
\end{lemma}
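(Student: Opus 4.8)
The plan is to leverage the closing-ordinal theorem together with Lemma~\ref{lem:closing-ord-in-universe}, reducing the statement to a transfinite induction that stays inside the universe. Throughout write $\VNCH = \VNCH_\icard$ for the strongly inaccessible $\icard > \omega$ under consideration, so that $\icard$ is both regular and a strong limit, and the elements of $\VNCH$ are exactly the sets of rank below $\icard$. Since $\ruleset$ is based on $\gensetC = \VNCH$, we already have $\inddef{\ruleset} \subseteq \VNCH$; the whole content of the lemma is therefore to promote this \emph{subset} to an \emph{element}, i.e.\ to exhibit some $\beta < \icard$ with $\inddef{\ruleset} \subseteq \VNCH_\beta$, whence $\inddef{\ruleset} \in \VNCH_{\beta+1} \subseteq \VNCH$.

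First I would recall, from the closing-ordinal theorem stated above, that $\inddef{\ruleset} = \TFoperation{\ruleset}{\closingord{\ruleset}}$, and from Lemma~\ref{lem:closing-ord-in-universe} (whose hypotheses are exactly those assumed here) that $\closingord{\ruleset} \in \VNCH$, i.e.\ $\closingord{\ruleset} < \icard$. Thus the fixpoint is attained after fewer than $\icard$ iterations of $\operation{\ruleset}$, and it suffices to control each of these iterations.

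The core is a transfinite induction establishing $\TFoperation{\ruleset}{\ord} \in \VNCH$ for every $\ord \le \closingord{\ruleset}$, using the defining equation $\TFoperation{\ruleset}{\ord} = \bigcup_{\ordB \in \ord}\big(\TFoperation{\ruleset}{\ordB} \cup \operation{\ruleset}(\TFoperation{\ruleset}{\ordB})\big)$. Two facts feed the inductive step. First, a union indexed by an ordinal $\ord < \icard$ of sets each lying in $\VNCH$ again lies in $\VNCH$: each summand has rank below $\icard$, there are fewer than $\icard$ of them, and by regularity of $\icard$ the supremum of their ranks stays below $\icard$. Second, the operator preserves membership, $\operation{\ruleset}(\genset) \in \VNCH$ whenever $\genset \in \VNCH$. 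Granting both, the induction goes through and, taking $\ord = \closingord{\ruleset}$, delivers $\inddef{\ruleset} = \TFoperation{\ruleset}{\closingord{\ruleset}} \in \VNCH$.

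The main obstacle -- and where every inaccessibility hypothesis is consumed -- is the operator-preservation step, $\operation{\ruleset}(\genset) \in \VNCH$ for $\genset \in \VNCH$. Given $\genset \in \VNCH$ we have $\cardinality{\genset} < \icard$, so the rules that can fire on $\genset$ have premises ranging over the subsets $\premises \subseteq \genset$ with $\cardinality{\premises} \le \ord$; because $\ord < \icard$ and $\icard$ is a \emph{strong limit}, there are at most $\cardinality{\genset}^{\ord} < \icard$ such premises. The premise-cardinality bound is precisely what keeps this index set below $\icard$; combined with the regularity of $\icard$ (to bound the rank of the union of the conclusions $\set{\concl \mid \Rule{\premises}{\concl} \in \ruleset}$ that these premises produce) it forces $\operation{\ruleset}(\genset)$ into some $\VNCH_\beta$ with $\beta < \icard$. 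The reason this cannot simply be inherited from $\VNCH$ being a model of ZFC is the \emph{largeness} highlighted in this subsection: neither $\ruleset$ nor $\operation{\ruleset}$ need be an element of $\VNCH$, so replacement and transfinite recursion are unavailable \emph{inside} the model; one must instead verify externally, stage by stage, that $\operation{\ruleset}$ restricts to a map $\VNCH \to \VNCH$, and only then read off that its least fixpoint is itself an element of $\VNCH$.
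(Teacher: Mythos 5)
Your argument is the paper's argument: identify $\inddef{\ruleset}$ with $\TFoperation{\ruleset}{\closingord{\ruleset}}$ via the closing-ordinal theorem, invoke Lemma~\ref{lem:closing-ord-in-universe} to place $\closingord{\ruleset}$ inside $\VNCH$, and then run a transfinite induction up to $\closingord{\ruleset}$ showing that every stage $\TFoperation{\ruleset}{\ordB}$ is an element of $\VNCH$. The paper states exactly this in three sentences and leaves the induction entirely implicit; your write-up is a strictly more detailed rendering of the same route, and the union-of-fewer-than-$\icard$-stages step via regularity is exactly what the paper intends.

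The one substantive step you add beyond the paper --- that $\operation{\ruleset}(\genset) \in \VNCH$ whenever $\genset \in \VNCH$ --- is also the one that does not quite close. You bound the number of eligible premise-sets $\premises \subseteq \genset$ with $\cardinality{\premises} \le \ord$ by $\cardinality{\genset}^{\ord} < \icard$ and then appeal to regularity, but regularity requires each summand of the resulting union to have rank below $\icard$, and for a fixed $\premises$ the conclusion-set $\set{\concl \mid \Rule{\premises}{\concl} \in \ruleset}$ is only guaranteed to be a \emph{subset} of $\VNCH$, not an element of it: the stated hypotheses do not prevent a single premise-set from producing conclusions of cofinally many ranks. (Indeed $\ruleset = \set{\Rule{\emptyset}{\concl} \mid \concl \in \VNCH}$ satisfies the premise bound with $\ord = 0$ yet has $\inddef{\ruleset} = \VNCH \notin \VNCH$.) Repairing this needs an extra smallness assumption on $\ruleset$ itself --- e.g.\ $\cardinality{\ruleset} < \icard$, or that each conclusion-set is an element of $\VNCH$ --- which does hold for the rule sets $\ruleset_{\ctx, \Inds}^{\env}$ to which the lemma is actually applied. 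To be clear, this is a deficiency of the lemma as stated rather than of your proof in particular; the paper's own one-line proof passes over the same point silently, and your elaboration is what makes the missing hypothesis visible.
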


\begin{proof}
  By Lemma~\ref{lem:closing-ord-bound} we know that
  $\inddef{\ruleset} =
  \TFoperation{\ruleset}{\closingord{\ruleset}}$. We know that
  $\TFoperation{\ruleset}{\closingord{\ruleset}} \in \VNCH{}$ as it is
  constructed by transfinite recursion up to $\closingord{\ruleset}$
  and that we crucially know that $\closingord{\ruleset} \in \VNCH{}$
  by Lemma~\ref{lem:closing-ord-in-universe}.  More precisely, this
  can be shown using transfinite induction up to
  $\closingord{\ruleset}$ showing that every stage belongs to
  $\VNCH{}$. Notice that it is crucial for an ordinal to belong to the
  universe in order for transfinite induction to be valid.
\end{proof}

\subsection{The use of axiom of choice}
The only place in this work that we make use of axiom of choice is in
the proof of Lemma~\ref{lem:closing-ord-in-universe}. We use this
axiom to show the following statement which we could have
alternatively taken as a (possibly) weaker axiom.
\begin{quote}
In any von Neumann universe $\VNCH{}$ for any
cardinal number $\ord$ there is a \emph{regular} cardinal $\ordB$ such
that $\ord < \ordB$.
\end{quote}

Note that his statement is independent of ZF and certain axiom, e.g.,
choice as we have taken here, needs to be postulated. This is due to
the well-known fact proven by \citet{Gitik1980} that under the
assumption of existence of strongly compact cardinals, any uncountable
cardinal is singular!

\subsection{Modeling the impredicative sort $\Prop$: trace encoding}
One of the challenges in constructing a model for a system like
\pCuIC{} is treatment of an impredicative proof-irrelevant sort
$\Prop$.  This can be done by simply modeling $\Prop$ as the set
$\set{\emptyset, \set{\emptyset}}$ where provable propositions are
modeled as $\set{\emptyset}$ and non-provable propositions as
$\emptyset$.  This however, will only work where we don't have the
cumulativity relation between $\Prop$ and $\Type{i}$. In presence of
such cumulativity relations, such a na\"ive treatment of $\Prop$
breaks interpretation of the (dependent) function spaces as sets of
functions. The following example should make the issue plain.

\begin{example}[\citet{Werner1997}]\label{ex:prop_cumul_problem}
  Let's consider the interpretation
  of the term $I \synteq \Lam (\TypP : \Type{0}). \TypP \to \TypP$.
  In this case, the semantics of $\sem{I~\mathit{True}}$ will be
  $\set{\emptyset}$ or $\set{\emptyset}^{\set{\emptyset}}$ depending
  on whether $\mathit{True} : \Prop$ or $\mathit{True} : \Type{0}$ is
  considered, respectively. And hence we should have
  $\set{\emptyset} = \set{\emptyset}^{\set{\emptyset}}$ which is not
  the case, even though the two sets are isomorphic (bijective).
\end{example}

In order to circumvent this issue, we follow
\citet{DBLP:journals/corr/abs-1111-0123}, who in turn follow
\citet{Aczel1999}, in using the method known as \emph{trace encoding}
for representation of functions.

\begin{definition}[Trace encoding]
  The following two functions, $\encode$ and $\decode$, are used for
  trace encoding and application of trace encoded functions
  respectively.
\begin{align*}
  \encode(\genfun) \eqdef{} & \bigcup_{(\var, \varB) \in \genfun}\left(\set{\var} \times \varB \right)\\
  \decode(f, \var) \eqdef{} & \set{\varB \middle| (\var, \varB) \in \genfun}
\end{align*}
\end{definition}

\begin{lemma}[]
  Let $\genfun : \genset \to \gensetB$ be a set theoretic function
  then for any $\var \in \genset$ we have
  \[
    \decode(\encode(\genfun), \var) = \genfun(\var)
  \]
\end{lemma}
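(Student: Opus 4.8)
The plan is to prove the identity by unfolding the two definitions and comparing the membership predicates of the two sides, so that the equality follows by a single chain of ``iff'' steps (equivalently, by mutual inclusion). First I would fix the set-theoretic conventions: a function $\genfun : \genset \to \gensetB$ is identified with its graph, a set of ordered pairs, so that $(\var, \varB) \in \genfun$ is synonymous with $\varB = \genfun(\var)$; in particular each value $\genfun(\var)$ is itself a set, namely the unique $b$ with $(\var, b) \in \genfun$. I would also record that $\set{\var} \times b = \set{(\var, \varC) \mid \varC \in b}$ for any set $b$.

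The key computation is to characterise membership in $\encode(\genfun)$. Unfolding $\encode(\genfun) = \bigcup_{(a, b) \in \genfun} (\set{a} \times b)$, a pair $(\var, \varB)$ belongs to $\encode(\genfun)$ precisely when there is some $(a, b) \in \genfun$ with $(\var, \varB) \in \set{a} \times b$, that is, with $a = \var$ and $\varB \in b$. At this point I would invoke \emph{single-valuedness} of $\genfun$: for the fixed $\var \in \genset$ the only candidate is $b = \genfun(\var)$, and therefore $(\var, \varB) \in \encode(\genfun)$ holds exactly when $\varB \in \genfun(\var)$.

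Substituting this equivalence into the definition of $\decode$ then closes the argument, since $\decode(\encode(\genfun), \var) = \set{\varB \mid (\var, \varB) \in \encode(\genfun)} = \set{\varB \mid \varB \in \genfun(\var)} = \genfun(\var)$. I do not anticipate a genuine obstacle here; the single point that deserves care is the appeal to functionality of $\genfun$ in the middle step, because without it the union in $\encode$ would amalgamate the second components of all pairs sharing first coordinate $\var$, and $\decode$ would then return that union rather than $\genfun(\var)$. I would also flag the evident typo in the statement of $\decode$, whose comprehension should range over its first argument $f$ and not over $\genfun$; read that way, $\decode(\encode(\genfun), -)$ is exactly the left inverse of $\encode$ on values, which is what the lemma asserts.
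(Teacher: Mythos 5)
Your proposal is correct and is exactly the routine unfolding the paper has in mind: the paper states this lemma without proof, treating it as immediate from the definitions of $\encode$ and $\decode$, and your computation (including the appeal to single-valuedness of $\genfun$ and the observation that the comprehension in $\decode$ should range over its first argument) is the verification one would write out. Nothing further is needed.
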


Note that using the trace encoding technique the problem mentioned in
Example~\ref{ex:prop_cumul_problem} is not present anymore. That is, we have:
\[
\set{\encode(\genfun) \middle| \genfun \in \set{\emptyset}^{\set{\emptyset}}} = \set{\emptyset}
\]

\begin{lemma}[\citet{Aczel1999}]\label{lem:impred-encoding}
  Let $A$ be a set and assume the set $B(x) \subseteq 1$ for
  $x \in A$.
  \begin{enumerate}
  \item \label{lem:impred-encoding:case1}
    $\set{\encode(f) \middle| f \in \DepFun x in A. B(x)} \subseteq 1$
  \item \label{lem:impred-encoding:case2}
    $\set{\encode(f) \middle| f \in \DepFun x in A. B(x)} = 1$ iff
    $\forall x \in A.\; B(x) = 1$
  \end{enumerate}
\end{lemma}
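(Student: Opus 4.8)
The plan is to reduce both items to a single elementary observation: under the hypothesis, every function in the dependent product is forced to send each argument to $\emptyset$, and trace-encoding such a function yields $\emptyset$ itself. First I would unfold $1 = \set{\emptyset}$ and record that $B(x) \subseteq \set{\emptyset}$ leaves exactly two possibilities, $B(x) = \emptyset$ or $B(x) = \set{\emptyset}$, and that in either case the only element $B(x)$ can contain is $\emptyset$. Hence any $f \in \DepFun x in A. B(x)$ must satisfy $f(x) = \emptyset$ for all $x \in A$; in other words, the product contains at most the single constant function $x \mapsto \emptyset$.

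For item~\ref{lem:impred-encoding:case1} I would compute the encoding of such an $f$ directly. Regarding $f$ as the set of pairs $\set{(x, \emptyset) \middle| x \in A}$ and using that $\set{x} \times \emptyset = \emptyset$ for each $x$, the union defining $\encode$ collapses, giving $\encode(f) = \bigcup_{x \in A} (\set{x} \times \emptyset) = \emptyset$. Therefore every member of $\set{\encode(f) \middle| f \in \DepFun x in A. B(x)}$ equals $\emptyset$, so the set is contained in $\set{\emptyset} = 1$.

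For item~\ref{lem:impred-encoding:case2} I would combine this with an inhabitation analysis of the product. If some $B(x_0) = \emptyset$, there is no admissible value for $f(x_0)$, so $\DepFun x in A. B(x)$ is empty and its image under $\encode$ is $\emptyset \neq 1$; conversely, if every $B(x) = \set{\emptyset}$, then the constant-$\emptyset$ function is the (unique) element of the product, so by item~\ref{lem:impred-encoding:case1} the image is a non-empty subset of $1$ and hence equal to $1$, since $1$ has a single element. Reading the two directions together yields the claimed equivalence $\set{\encode(f) \middle| f \in \DepFun x in A. B(x)} = 1$ if and only if $\forall x \in A.\; B(x) = 1$.

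Everything here is routine; the one step that must be treated carefully is the trace encoding itself, specifically the fact that $\set{x} \times \emptyset = \emptyset$, so that the constant-$\emptyset$ function encodes to $\emptyset$ rather than to a non-empty set of pairs. This collapse is exactly the feature distinguishing trace encoding from naive graph encoding, and it is precisely what makes the two competing interpretations of Example~\ref{ex:prop_cumul_problem} coincide, rendering proof-irrelevance of $\Prop$ compatible with interpreting dependent products as genuine sets of functions.
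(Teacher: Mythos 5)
Your proof is correct. The paper itself gives no proof of this lemma --- it is stated as imported from Aczel's work --- so there is no in-text argument to compare against; your derivation is the standard one and is complete. The key computation, that any $f$ in the product must be the constant-$\emptyset$ function and that $\encode(f) = \bigcup_{x \in A}(\set{x} \times \emptyset) = \emptyset$, is exactly the collapse that makes trace encoding work, and your inhabitation analysis for the second item (including the degenerate case where some $B(x) = \emptyset$ empties the product) is sound; note also that the case $A = \emptyset$ is handled consistently, since the empty function encodes to $\emptyset$ and the right-hand condition holds vacuously.
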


\section{Set-theoretic model and consistency of \pCuIC{}}
We construct a model for \pCuIC{} by interpreting predicative
universes using von Neumann universes and $\Prop$ as
$\set{0, 1} = \set{\emptyset, \set{\emptyset}}$. We use the trace
encoding technique presented earlier for (dependent) function
types. We will construct the interpretation of inductive types and
their eliminators using rule sets for inductive definitions in set
theory. We shall first define a $\size$ function on terms, typing
contexts, and pairs of a context and a term which we write as
$\size(\ctx \vdash \term)$. We will then define the interpretation of
typing contexts and terms (in appropriate context) by well-founded
recursion on their size.

\begin{definition}
  We define a function $\size$ on terms, contexts, declarations and
  pairs consisting of a context and a term (which we write as
  $\size(\ctx \vdash \term)$) mutually recursively as follows:
  \begin{align*}
    \intertext{Size for typing contexts and declarations}
    \size(\cdot) \eqdef{}
    & \sfrac{1}{2}\\
    \size(\ctx, \var : \Typ) \eqdef{}
    & \size(\ctx) + \size(\Typ)\\
    \size(\ctx, \var := \term: \Typ) \eqdef{}
    & \size(\ctx) + \size(\term) + \size(\Typ)\\
    \size(\ctx, \Ind_{n}\{\declinds := \declcons\}) \eqdef{}
    & \size(\ctx) + \size(\Ind_{n}\{\declinds := \declcons\})\\
    \intertext{Size for term}
    \size(\Prop) \eqdef{} & 1\\
    \size(\Type{i}) \eqdef{} & 1\\
    \size(\var) \eqdef{} & 1\\
    \size(\Forall \var : \Typ. \TypB) \eqdef{}
    & \size(\Typ) + \size(\TypB) + 1\\
    \size(\Lam \var : \Typ. \term) \eqdef{}
    & \size(\Typ) + \size(\term) + 1\\
    \size(\term~\termB) \eqdef{}
    & \size(\term) + \size(\termB) + 1\\
    \size(\Let \var := \term : \Typ in \termB) \eqdef{}
    & \size(\term) + \size(\termB) + \size(\Typ) + 1\\
    \size(\Inds.\varC) \eqdef{}
    & \size(\Inds)\\
    \size(\Elim{\term}{\Inds.d_i; \vv{\termB}}{\vv{\motive}}{\vv{\genfun}}) \eqdef{}
    & \size(\term) + \size(\Inds) + \sum_{i} \size(\termB_i) + \sum_{i} \size(\motive_i) + \sum_{i} \size(\genfun_i) + 1\\
    \intertext{Size for pairs consisting of a context and a term}
    \size(\Ind_{n}\{\declinds := \declcons\}) \eqdef{}
    & \sum_{\indtyp \in \dom(\declinds)} \size(\declinds(\indtyp))
      + \sum_{\constr \in \dom(\declcons)} \size(\declcons(\constr)) + 1\\
    \intertext{Size for judgements}
    \size(\ctx \vdash \term) \eqdef{}
    & \size(\ctx) + \size(\term) - \sfrac{1}{2}\\
    \size(\ctx \vdash \Ind_{n}\{\declinds := \declcons\}) \eqdef{}
    & \size(\ctx) + \size(\Ind_{n}\{\declinds := \declcons\}) - \sfrac{1}{2}\\
  \end{align*}
\end{definition}

In the definition above, which is similar to that by
\citet{DBLP:journals/corr/abs-1111-0123} and \citet{Miquel2003}, the
$\pm\sfrac{1}{2}$ is add to make sure
$\size(\ctx \vdash \term) < \size(\ctx, \var : \term)$ and that
$\size(\ctx) < \size(\ctx \vdash \term)$.

\begin{definition}[The model]
  We define the interpretations for contexts and terms by well-founded
  recursion on their sizes.
  \begin{align*}
    \intertext{Interpretation of contexts}
    \sem{\cdot} \eqdef{}& \set{\nil}\\
    \sem{\ctx, \var : \Typ} \eqdef{}& \set{\env, \genel \middle| \env \in \sem{\ctx} \land \Defined{\semtyped{\ctx}{\Typ}{\env}} \land \genel \in \semtyped{\ctx}{\Typ}{\env}}\\
    \sem{\ctx, \var := \term : \Typ} \eqdef{}& \set{\env, \genel \middle| \env \in \sem{\ctx} \land \Defined{\semtyped{\ctx}{\Typ}{\env}} \land \Defined{\semtyped{\ctx}{\term}{\env}} \land \genel = \semtyped{\ctx}{\term}{\env} \in \semtyped{\ctx}{\Typ}{\env}}\\
    \sem{\ctx, \Ind_{n}\{\declinds := \declcons\}} \eqdef{}& \sem{\ctx} \text{ if } \Defined{\semtyped{\ctx}{\Ind_{n}\{\declinds := \declcons\}}{\env}} \text{ for all } \env \in \sem{\ctx}\\
    \intertext{Above, we assume that $\var \not\in \dom(\ctx)$, otherwise, both $\sem{\ctx, \var : \Typ}$ and $\sem{\ctx, \var := \term : \Typ}$ are undefined.}
    \intertext{Interpretation of terms}
    \semtyped{\ctx}{\Prop}{\env} \eqdef{}& \set{\emptyset, \set{\emptyset}}\\
    \semtyped{\ctx}{\Type{i}}{\env} \eqdef{}& \VNCH_{\icard_{i}}\\
    \semtyped{\ctx}{\var}{\vv{\genel}} \eqdef{}& \genel_{\len(\ctx_1) - l} \hspace{2em} \text{if } \ctx = \ctx_1, \var : \Typ, \ctx_2 \text{ and } \var \not\in \dom(\ctx_1) \cup \dom(\ctx_2) \\
                        & \text{ and } l = \len(\IndsOf(\ctx_1))\\
    \semtyped{\ctx}{\Forall \var : \Typ. \TypB}{\env} \eqdef{}& \set{\encode({\genfun}) \middle| f : \DepFun \genel in \semtyped{\ctx}{\Typ}{\env}.\semtyped{\ctx, \var : \Typ}{\TypB}{\env, \genel}}\\
    \semtyped{\ctx}{\Lam \var : \Typ. \term}{\env} \eqdef{}& \encode\left(\set{(\genel, \semtyped{\ctx, \var : \Typ}{\term}{\env, \genel}) \middle| \genel \in \semtyped{\ctx}{\Typ}{\env}}\right)\\
    \semtyped{\ctx}{\term~\termB}{\env} \eqdef{}& \decode(\semtyped{\ctx}{\term}{\env}, \semtyped{\ctx}{\termB}{\env})\\
    \semtyped{\ctx}{\Let \var := \term : \Typ in \termB}{\env} \eqdef{}& \semtyped{\ctx, \var := \term : \Typ}{\termB}{\env, \semtyped{\ctx}{\termB}{\env}}\\
    \intertext{Interpretation of inductive types, constructors and eliminators is defined below}
  \end{align*}
\end{definition}

\subsection{Modeling inductive types}
We define the interpretation for inductive blocks by constructing a
rule set which will interpret the whole inductive block. We will
define interpretation of individual inductive types and constructors
of the block based on the fixpoint of this rule set.

\begin{definition}[Interpretation of the inductive types]\label{def:interp_inductive_types}
  Let $\Inds \synteq \Ind_{n}\{\declinds := \declcons\}$ be an
  arbitrary but fixed inductive block such that
  $\declinds = \indtyp_0 : \Typ_0, \dots \indtyp_l : \Typ_l$ and
  $\declcons = \constr_0 : \varTyp_0, \dots \constr_{l'} :
  \varTyp_{l'}$.  Where
  $\Typ_i \synteq \Forall \vv{\termP} : \vv{\TypP}. \Forall \vv{\var}
  : \vv{\TypB_i}. \sort$ for some sequence of types $\vv{\TypB}$ and
  some sort $\sort$. Here, $\vv{\TypP}$ are parameters of the
  inductive block. The type of constructors are of the following form:
  $\varTyp_k \synteq \Forall \vv{\termP} : \vv{\TypP}. \Forall
  \vv{\var} : \vv{\varTypC_k}. \indtyp_{i_k}~\vv{\termP}~\vv{\term_k}$
  for some $\vv{\term_k}$. Notice that $\vv{\varTypC_k}$ is a sequence
  itself. That is, it is of the form
  $\vv{\varTypC_k} \synteq \varTypC_{k, 1}, \dots, \varTypC_{k,
    {\len(\vv{\varTypC_K})}}$. That is, each constructor $\constr_k$,
  takes a number of arguments, first parameters $\vv{\TypP}$ and then
  some more $\vv{\varTypC_k}$. The strict positivity condition implies
  that for any non-parameter argument $\varTypC_{k, i}$ of a
  constructor $\constr_k$, either $\indtyp \in \dom(\declinds)$ does
  not appear in $\varTypC_{k, i}$ or we have
  $\varTypC_{k, i} \synteq \Forall \vv{\var} :
  \vv{\varTypD_{k,i}}. \indtyp_{I_{k, i}}~\vv{\termQ}~\vv{\term}$
  where $\len(\vv{\termQ}) = \len(\vv{\termP})$. That is, each
  argument of a constructor where an inductive type (of the same
  block) appears, is a (dependent) function with codomain being that
  inductive type. In this case, no inductive type of the block appears
  the domain(s) of the function, $\vv{\varTypD_{k,i}}$. Notice that the
  codomain of the function is an inductive type in the block but not
  necessarily of the same family as the one being defined -- the
  parameters applied are $\vv{\termQ}$ instead of $\vv{\termP}$!  We
  write $\recarg(\varTypC_{k, i})$ if some inductive of the block
  appears in $\varTypC_{k, i}$ in which case it will be of the form
  just described.

  \begin{align*}
    \semtyped{\ctx}{\Inds}{\env} \eqdef{}
    & \inddef{\ruleset_{\ctx, \Inds}^{\env}}\\
    \ruleset_{\ctx, \Inds}^{\env} \eqdef{}
    & \bigcup_{\indtyp_i \in
      \dom(\declinds)}\bigcup_{\constr_k \in
      \constrsof(\indtyp_i)}
      \set{ \Rule{\Psi_{\indtyp_i, \constr_k}}{\psi_{\indtyp_i, \constr_k}}  \middle|
      \begin{array}{l}
        \vv{\varterm} \in
        \semtyped{\ctx}{\vv{\TypP}}{\env},\\[0.2em]
        \vv{\vartermC} \in \semtyped{\ctx, \vv{\termP} : \vv{\TypP}, \vv{\indtyp'} : \vv{\sort}}{\vv{\varTypC_k'}}{\env, \vv{\varterm}, \vv{\vartermB}}
      \end{array}
    }
    \intertext{Where, $\vv{\indtyp'}$ is the sequence of $\indtyp_{I_{k, j}}'$'s occurring within $\vv{\varTypC'_{k}}$ below and $\sort_i \synteq \arity_{\indtyp_i}$ is the arity of the inductive type $\indtyp_i$, $\vv{\vartermB} \in \semtyped{\ctx, \vv{\termP} : \vv{\TypP}}{\vv{\sort}}{\env, \vv{\varterm}}$ and,}
    \psi_{\indtyp_i, \constr_k} \eqdef{}
    & \tuple{i;
      \vv{\varterm}; \semtyped{\ctx, \vv{\termP} : \vv{\TypP},
      \vv{\indtyp'} : \vv{\sort}, \vv{\var} : \vv{\varTypC_k'}}{\vv{\term_k}}{\env,
      \vv{\varterm}, \vv{\vartermB}, \vv{\vartermC}}; \tuple{k; \vv{\vartermC}}}\\
    \Psi_{\indtyp_i, \constr_k} \eqdef{}
    &
      \bigcup_{\recarg(\varTypC_{k,j})}\set{
      \begin{array}{l}
        \left< I_{k, j}; \right. \\
        \semtyped{\ctx_{k, j}, \vv{\varB} : \vv{\varTypD_{k, j}}}{\vv{\termQ}}{\env,
        \vv{\varterm}. \vv{\vartermB}, \vv{\vartermD}, \vv{\termB}};  \\
        \semtyped{\ctx_{k, j}, \vv{\varB} : \vv{\varTypD_{k, j}}}{\vv{\term}}{\env,
        \vv{\varterm}, \vv{\vartermB}, \vv{\vartermD}, \vv{\termB}};\\
        \left. \vv{\decode}(\vartermC_{I_{k, j}}, \vv{\termB}) \right>
      \end{array}
    \middle|
    \vv{\termB} \in \semtyped{\ctx_{k,j}}{\vv{\varTypD_{k, j}}}{\env, \vv{\varterm}, \vv{\vartermB}, \vv{\vartermD}}}\\
    \ctx_{k,j} \eqdef{} & \ctx, \vv{\termP} : \vv{\TypP}, \vv{\indtyp'} : \vv{\sort}, \var_1 : \varTypC_{k, 1}', \dots, \var_{j-1} : \varTypC_{k, j-1}' \\
    \vv{\vartermD} \eqdef{}& \vartermC_1, \dots \vartermC_{j-1}\\
    \intertext{For $\Psi_k$ we assume that $\varTypC_{k, j} \synteq \Forall \vv{\varB} :
    \vv{\varTypD_{k,j}}. \indtyp_{I_{k, j}}~\vv{\termQ}~\vv{\term}$.}
    \intertext{The types $\varTypC'_{k,j}$ are defined based on $\varTypC_{k,j}$ as follows: }
    \varTypC'_{k,j} \eqdef{}
    & \begin{cases} \Forall \vv{\var} :
      \vv{\varTypD_{k,i}}. \indtyp_{I_{k, j}}' & \text{if } \recarg(\varTypC_{k,j}) \text{ and } \varTypC_{k, i} \synteq \Forall \vv{\var} :
      \vv{\varTypD_{k,i}}. \indtyp_{I_{k,j}}~\vv{\termQ}~\vv{\term} \\ \varTypC_{k, j} & \text{otherwise} \end{cases}\\
    \intertext{We define the interpretation of the inductive types in the block as follows:}
    \semtyped{\ctx}{\Inds.\indtyp_i}{\env} \eqdef{}& \vv{\encode}(\genfun_{\indtyp_i})\\
    \genfun_{\indtyp_i}(\vv{\varterm},\vv{\term}) \eqdef{}
    & \set{\tuple{k;\vv{\vartermC}} \middle| \tuple{i; \vv{\varterm}; \vv{\term};\tuple{k; \vv{\vartermC}}} \in \semtyped{\ctx}{\Inds}{\env}}\\
    & \text{ for } \vv{\varterm}, \vv{\term} \in \semtyped{\ctx}{\vv{\TypP}, \vv{\TypB_i}}{\env}\\
    \intertext{We define the interpretation of the constructors in the block as follows:}
    \semtyped{\ctx}{\Inds.\constr_k}{\env} \eqdef{}& \vv{\encode}(\genfunB_{\constr_k})\\
    \genfunB_{\constr_k}(\vv{\varterm},\vv{\vartermC}) \eqdef{}
    & \tuple{k;\vv{\vartermC}} \text{ for } \vv{\varterm}, \vv{\vartermC} \in \semtyped{\ctx}{\vv{\TypP}, \vv{\varTypC_{k}}}{\env}
  \end{align*}
\end{definition}

Let us first discuss the intuitive construction of
Definition~\ref{def:interp_inductive_types} above. In this definition
the most important part is the interpretation of the inductive block.
We have defined this as the fixpoint of the rule sets corresponding to
constructors of the inductive type. This rule set basically spells out
the following. Take, some set $\indtyp_{I_{k, j}}'$ in the universe as
a candidate for the interpretation of $j$\textsuperscript{th}
occurrence of inductive type $\indtyp_{I_{k, j}}$ in the
$k$\textsuperscript{th} constructor. This candidate,
$\indtyp_{I_{k, j}}'$, is taken to be a candidate element of the
inductive type $\indtyp_{I_{k, j}}$ in case
$\vv{\varTypD_{k,i}} = \nil$, i.e., intuitively, if the recursive
occurrence \emph{embeds} an element of $\indtyp_{I_{k, j}}$ in the
type being constructed. On the other hand, if the recursive occurrence
of $\indtyp_{I_{k, j}}$ is, intuitively, \emph{embedding} a function
with codomain $\indtyp_{I_{k, j}}$ into the type being constructed
then $\indtyp_{I_{k, j}}'$ is to be understood as the codomain of the
function being embedded by the constructor.  In each of these two
cases, we need to make sure the candidate element and or function is
\emph{correctly chosen}, i.e., we need to make sure that the element
or the \emph{range} of the function chosen is indeed in the
interpretation of the inductive type. This is where the rule sets come
to play, so to speak.  The premise set $\Psi_{k}$ makes sure that all
candidate recursive occurrences are indeed correctly chosen. This is
done by making sure that for any of the arguments of the function
being embedded (here an element is treated as a function with no
arguments!) the result of applying the candidate function to the
arguments is indeed in the interpretation of the corresponding
inductive type. Do notice that $\vv{\decode}(\genel, \nil) = \genel$.

\begin{example}[Rule set for construction of natural numbers]
\[
  \ruleset_{\ctx, \mathcal{N}}^{\env} =
  \set{\Rule{\emptyset}{\tuple{0; \nil;
        \nil;\tuple{0; \nil}}}} \cup
  \set{\Rule{\set{\tuple{0; \nil;
          \nil;\genel}}}{\tuple{0; \nil;
        \nil;\tuple{1; \genel}}} \middle| \genel \in
    \VNCH{\icard_0}}
\]
\end{example}

\begin{example}[Rule set for construction of lists]
\[
  \ruleset_{\ctx, \mathcal{L}_i}^{\env} =
  \set{\Rule{\emptyset}{\tuple{0; \Typ;
        \nil;\tuple{0;\nil}}}} \cup
  \set{\Rule{\set{\tuple{0; \Typ; \nil,
          \genelB}}}{\tuple{0; \Typ;
        \nil;\tuple{1; \genel, \genelB}}} \middle| \genelB \in
    \VNCH{\icard_0} \land \genel \in \semtyped{\ctx}{\Typ}{\env}}
\]
\end{example}

\begin{lemma}\label{lem:inerp_inductive_indices_unique}
  Values for arguments of arities are uniquely determined by the
  values for arguments each constructor (including values for
  parameters) in $\TFoperation{\ruleset_{\ctx, \Inds}^{\env}}{\ord}$
  and in particular in $\semtyped{\ctx}{\Inds}{\env}$. That is, if
  \[
    \tuple{i; \vv{\varterm}; \vv{\term};
      \tuple{k;\vv{\vartermC}}}, \tuple{i;
      \vv{\varterm}; \vv{\term'};
      \tuple{k;\vv{\vartermC}}} \in
    \TFoperation{\ruleset_{\ctx, \Inds}^{\env}}{\ord}\] then,
  $\vv{\term} = \vv{\term'}$. Analogously for
  $\semtyped{\ctx}{\Inds}{\env}$ we have that if
  \[
    \tuple{i; \vv{\varterm}; \vv{\term};
      \tuple{k;\vv{\vartermC}}}, \tuple{i;
      \vv{\varterm}; \vv{\term'};
      \tuple{k;\vv{\vartermC}}} \in
    \semtyped{\ctx}{\Inds}{\env}\] then,
  $\vv{\term} = \vv{\term'}$.
\end{lemma}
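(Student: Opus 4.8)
The plan is to reduce the claim to a property of individual rule conclusions, so that no genuine transfinite induction on $\ord$ is needed. First I would dispatch the statement about $\semtyped{\ctx}{\Inds}{\env}$ by recalling that $\semtyped{\ctx}{\Inds}{\env} = \inddef{\ruleset_{\ctx,\Inds}^{\env}} = \TFoperation{\ruleset_{\ctx,\Inds}^{\env}}{\closingord{\ruleset_{\ctx,\Inds}^{\env}}}$; it is thus an instance of the stage-indexed statement, and it suffices to treat $\TFoperation{\ruleset_{\ctx,\Inds}^{\env}}{\ord}$ for an arbitrary ordinal $\ord$. Unfolding the transfinite definition, every element of $\TFoperation{\ruleset_{\ctx,\Inds}^{\env}}{\ord}$ lies in $\operation{\ruleset_{\ctx,\Inds}^{\env}}(\TFoperation{\ruleset_{\ctx,\Inds}^{\env}}{\ordB})$ for some $\ordB < \ord$ and is therefore the conclusion $\psi_{\indtyp_i,\constr_k}$ of some rule of $\ruleset_{\ctx,\Inds}^{\env}$. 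Hence uniqueness of the arity component becomes a property of the rule conclusions alone.

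Next I would fix the two tuples and read off how they arise. Each is a conclusion $\psi_{\indtyp_i,\constr_k}$ determined by a choice of parameter values, candidate values assigned to the fresh placeholders $\vv{\indtyp'}$, and constructor-argument values. Since tuple formation is injective, the shared last component $\tuple{k;\vv{\varterm},\vv{\vartermC}}$ forces both rules to use the same constructor $\constr_k$, the same parameters $\vv{\varterm}$, and the same argument values $\vv{\vartermC}$; the only data that may differ between the two rules are the candidate values, say $\vv{\vartermB}$ and $\vv{\vartermB}'$. By the definition of $\psi_{\indtyp_i,\constr_k}$ the two arity components are then
\[
\begin{aligned}
  & \semtyped{\ctx, \vv{\termP} : \vv{\TypP}, \vv{\indtyp'} : \vv{\sort}, \vv{\var} : \vv{\varTypC_k'}}{\vv{\term_k}}{\env, \vv{\varterm}, \vv{\vartermB}, \vv{\vartermC}}
  \quad\text{and}\\
  & \semtyped{\ctx, \vv{\termP} : \vv{\TypP}, \vv{\indtyp'} : \vv{\sort}, \vv{\var} : \vv{\varTypC_k'}}{\vv{\term_k}}{\env, \vv{\varterm}, \vv{\vartermB}', \vv{\vartermC}},
\end{aligned}
\]
so it remains to show these agree, i.e.\ that the interpretation of $\vv{\term_k}$ is independent of the chosen candidate values.

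The crux is exactly this independence, and I would establish it from strict positivity together with the coincidence property of the interpretation. The index terms $\vv{\term_k}$ occur in the conclusion $\indtyp_{i_k}~\vv{\termP}~\vv{\term_k}$ of $\declcons(\constr_k)$, and the strict-positivity rule for an application $\TypB~\vv{\term}$ with $\TypB \in \dom(\declinds)$ requires that no inductive type of the block occur in $\vv{\term_k}$. In particular $\vv{\term_k}$ is interpreted in its original, unmodified form (only the argument types $\vv{\varTypC_k}$ are replaced by $\vv{\varTypC_k'}$ in the rule, not the index terms themselves), so the fresh variables $\vv{\indtyp'}$ do not occur free in $\vv{\term_k}$ at all. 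By the standard coincidence property that $\semtyped{\ctx'}{\term}{\env'}$ depends only on the values $\env'$ assigns to the free variables of $\term$, the two displayed interpretations do not depend on the candidate component of the environment and hence coincide. This yields $\vv{\term} = \vv{\term'}$, as required.

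I expect the main obstacle to be making the coincidence argument fully rigorous against the well-founded-recursion definition of the interpretation: one must have available (or separately establish) the lemma that $\semtyped{-}{-}{-}$ depends only on the values assigned to free variables, at the sizes occurring here, and then check that replacing $\vv{\vartermB}$ by $\vv{\vartermB}'$ while keeping $\env$, $\vv{\varterm}$, and $\vv{\vartermC}$ fixed genuinely leaves the interpretation of $\vv{\term_k}$ unchanged. Everything else is bookkeeping about the injectivity of the tuple encoding and the precise shape of $\psi_{\indtyp_i,\constr_k}$.
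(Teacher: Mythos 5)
Your proof is correct and follows the same route as the paper's: the paper's entire argument is the one-line observation that any two rules of $\ruleset_{\ctx,\Inds}^{\env}$ whose conclusions agree on the constructor-argument component $\tuple{k;\vv{\varterm},\vv{\vartermC}}$ also agree on the arity component. Your write-up additionally supplies the justification the paper leaves implicit --- that strict positivity keeps the index terms $\vv{\term_k}$ free of the block's inductive types and hence of the placeholder variables $\vv{\indtyp'}$, so their interpretation is independent of the candidate values $\vv{\vartermB}$ (modulo a coincidence/weakening lemma for the interpretation) --- which is exactly the right reason.
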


\begin{proof}
  This immediately follows from the fact that for any two rules
  \[\Rule{\genset}{\tuple{i; \vv{\varterm}; \vv{\term};
      \tuple{k;\vv{\vartermC}}}} ~\text{ and }~
  \Rule{\genset'}{\tuple{i; \vv{\varterm}; \vv{\term'};
      \tuple{k;\vv{\vartermC}}}}\] in
  $\ruleset_{\ctx, \Inds}^{\env}$ we have $\vv{\term} = \vv{\term'}$.
\end{proof}

We shall show that the interpretation of the inductive types in a
block are each in the universe corresponding to their arity. Notice
that whenever two inductive types appear in one another the syntactic
criteria for typing enforce that they are both have the same
arity.\footnote{Note that this is the case in our work as there are no
  inductive types in $\Prop$.}  Therefore, we assume without loss of
generality that all inductive types in the block have the same
arity. In case it is not the case, it must be that there are some
inductive types in the block that are not necessarily mutually
inductive with the rest of the block. Hence, those inductive types
(and their interpretations) can be considered prior to considering the
block as a whole. Therefore, in the following theorem, we assume,
without loss of generality that the all of the inductive types of a
block are of the same arity.

\begin{remark}
  We have considered inductive types with heterogeneous parameters in
  this work. However, we can only prove the fact that each inductive
  type is in the set-theoretic universe corresponding to its arity for
  inductive types with uniform parameters, i.e., inductive types where
  parameters are fixed for the whole mutual inductive block.
\end{remark}

\begin{lemma}\label{lem:interp_inductive_correct}
  Let $\Inds \synteq \Ind_{n}\{\declinds := \declcons\}$ be a block of
  inductive types with \emph{uniform} parameters such that all
  inductive types of arity $\Type{i}$. Furthermore, let us assume that
  all the terms (and particularly types) appearing in the body of the
  block are well defined under the context $\ctx$ and environment
  $\env$ and interpretation of each of these terms (and types) is in
  the interpretation of the type (correspondingly sort) that is
  expected based on the typing derivation.\footnote{These conditions
    will hold by induction hypotheses when this lemma is used in
    practice.} Then, $\Defined{\semtyped{\ctx}{\Inds}{\env}}$,
  $\semtyped{\ctx}{\Inds}{\env} \in \VNCH{\icard_j}$, where $\Type{j}$
  is the maximal \emph{sort} of the inductive types in the block, and
  $\semtyped{\ctx}{\Inds.\indtyp_j}{\env} \in
  \semtyped{\ctx}{\declinds(\indtyp_j)}{\env}$.
\end{lemma}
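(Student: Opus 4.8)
The plan is to unfold $\semtyped{\ctx}{\Inds}{\env} = \inddef{\ruleset_{\ctx,\Inds}^{\env}}$ and derive all three claims from a single structural analysis of the rule set $\ruleset_{\ctx,\Inds}^{\env}$, reducing part~2 to Lemma~\ref{lem:fixpoint_of_ruleset_in_universe}. First I would establish part~1 by checking that $\ruleset_{\ctx,\Inds}^{\env}$ is well-defined. The rule set is a finite union over the constructors $\constr_k$ of comprehensions whose indices range over $\semtyped{\ctx}{\vv{\TypP}}{\env}$, over the constructor-argument interpretations $\semtyped{\ctx,\ldots}{\vv{\varTypC_k'}}{\ldots}$, and, inside each premise $\Psi_{\indtyp_i,\constr_k}$, over the recursive-argument domains $\semtyped{\ctx_{k,j}}{\vv{\varTypD_{k,j}}}{\ldots}$. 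By hypothesis every type appearing in the body of the block is interpreted and its interpretation lies in the interpretation of its sort; the strict-positivity condition is exactly what makes this usable, since it guarantees that the domains $\vv{\varTypD_{k,j}}$ of the recursive arguments do not mention any inductive type of the block, so their interpretations are among the already-defined ones. Hence every conclusion $\psi_{\indtyp_i,\constr_k}$ and every premise set $\Psi_{\indtyp_i,\constr_k}$ is defined, and the least fixpoint exists by monotonicity, giving $\Defined{\semtyped{\ctx}{\Inds}{\env}}$.

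Next, for part~2 I would show that $\ruleset_{\ctx,\Inds}^{\env}$ is a rule set based on $\VNCH{\icard_i}$ and then invoke Lemma~\ref{lem:fixpoint_of_ruleset_in_universe}. The key universe bookkeeping is that the typing constraints on the block — through the product-formation rules (\textsc{Prod}, \textsc{Predicativity}) and \textsc{Ind-WF} — force the interpretations of the parameters, of the arity indices $\vv{\term_k}$, of the constructor arguments $\vv{\vartermC}$, and of the recursive-argument domains to all lie in $\VNCH{\icard_i}$. Consequently every conclusion tuple $\psi_{\indtyp_i,\constr_k}$ is an element of $\VNCH{\icard_i}$, and so is every premise element (here I use $\decode(\genel,\nil)=\genel$ and the trace-encoding lemma to see that the decoded recursive values are again in $\VNCH{\icard_i}$). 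Each premise set $\Psi_{\indtyp_i,\constr_k}$ is indexed by finitely many recursive positions and, within each, by elements of $\semtyped{\ctx_{k,j}}{\vv{\varTypD_{k,j}}}{\ldots}$, which is itself an element of $\VNCH{\icard_i}$; hence $\cardinality{\Psi_{\indtyp_i,\constr_k}}<\icard_i$ and in fact $\Psi_{\indtyp_i,\constr_k}\in\VNCH{\icard_i}$. Since the index sets of the comprehensions lie in $\VNCH{\icard_i}$ and $\icard_i$ is strongly inaccessible, the whole rule set is an element of $\VNCH{\icard_i}$, so $\cardinality{\ruleset_{\ctx,\Inds}^{\env}}<\icard_i$. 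Because $\icard_i$ is regular, the set $\set{\cardinality{\Psi} \middle| \Rule{\Psi}{\psi}\in\ruleset_{\ctx,\Inds}^{\env}}$ of premise cardinalities admits a strict upper bound $\ord<\icard_i$, and this cardinal $\ord\in\VNCH{\icard_i}$ bounds every premise. Lemma~\ref{lem:fixpoint_of_ruleset_in_universe} then yields $\semtyped{\ctx}{\Inds}{\env}=\inddef{\ruleset_{\ctx,\Inds}^{\env}}\in\VNCH{\icard_i}$.

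Finally, part~3 follows from part~2 together with the definition of $\genfun_{\indtyp_j}$. For each choice of parameters and arity indices, the fiber $\genfun_{\indtyp_j}(\vv{\varterm},\vv{\term})$ is, by its defining comprehension (and with Lemma~\ref{lem:inerp_inductive_indices_unique} guaranteeing that $\genfun_{\indtyp_j}$ is genuinely a function of its arguments), a subset of the set of projections $\tuple{k;\vv{\varterm},\vv{\vartermC}}$ of elements of $\semtyped{\ctx}{\Inds}{\env}$; since those elements lie in $\VNCH{\icard_i}$, so do their projections and hence the fiber. Thus $\genfun_{\indtyp_j}$ is a (dependent) function from the interpretation of the parameter-and-arity telescope into $\semtyped{\ctx}{\Type{i}}{\ldots}=\VNCH{\icard_i}$, so its iterated trace encoding $\vv{\encode}(\genfun_{\indtyp_j})=\semtyped{\ctx}{\Inds.\indtyp_j}{\env}$ belongs to the interpretation of $\declinds(\indtyp_j)=\Forall\vv{\termP}:\vv{\TypP}.\Forall\vv{\var}:\vv{\TypB}.\Type{i}$, as required.

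The main obstacle I expect is the universe bookkeeping in part~2: verifying, using strict positivity and the product-formation rules, that every component stored in the conclusion and premise tuples — parameters, arity indices, constructor arguments, and decoded recursive values — genuinely lands in $\VNCH{\icard_i}$, and that the premise cardinalities are uniformly bounded below $\icard_i$ so that Lemma~\ref{lem:fixpoint_of_ruleset_in_universe} applies. Once the rule set is shown to be an element of $\VNCH{\icard_i}$, the uniform bound comes for free from the regularity of $\icard_i$; the remainder is routine unfolding of the rule-set definition.
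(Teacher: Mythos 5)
Your overall strategy -- unfold $\semtyped{\ctx}{\Inds}{\env}$ as the least fixpoint of $\ruleset_{\ctx,\Inds}^{\env}$, get definedness from the hypotheses plus strict positivity, reduce membership in $\VNCH{\icard_i}$ to Lemma~\ref{lem:fixpoint_of_ruleset_in_universe} via a bound on premise cardinalities, and read off the typing of each $\Inds.\indtyp_j$ from the construction -- is exactly the paper's, and parts~1 and~3 of your argument are fine. But there is a genuine error in the pivot of part~2: the claim that $\ruleset_{\ctx,\Inds}^{\env}$ is an element of $\VNCH{\icard_i}$, hence of cardinality $<\icard_i$. This is false whenever the block has a recursive argument: the rules are indexed not only by parameters and constructor arguments but also by the candidate values $\vv{\vartermB}$ assigned to the variables $\vv{\indtyp'}$, and these range over $\semtyped{\ctx,\vv{\termP}:\vv{\TypP}}{\vv{\sort}}{\env,\vv{\varterm}}=\VNCH{\icard_i}$ itself. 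Already for natural numbers the paper's example displays the successor rules as a family indexed by $\genel\in\VNCH{\icard_0}$, so $\cardinality{\ruleset_{\ctx,\mathcal{N}}^{\env}}\geq\icard_0$ and the rule set is not in the universe. Indeed the paper devotes an entire subsection to ``fixpoints of large functions'' precisely because these rule sets and their operators are large with respect to $\VNCH{\icard_i}$; if the rule set were small, Lemma~\ref{lem:closing-ord-in-universe} and Lemma~\ref{lem:fixpoint_of_ruleset_in_universe} would be largely unnecessary.

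Once that claim fails, your derivation of the uniform bound collapses: regularity of $\icard_i$ bounds the supremum of fewer than $\icard_i$ many cardinals below $\icard_i$, but here there may be $\icard_i$ many rules, so knowing only that each individual $\Psi_{\indtyp_i,\constr_k}$ satisfies $\cardinality{\Psi_{\indtyp_i,\constr_k}}<\icard_i$ does not yield a single $\ord<\icard_i$ bounding all of them. The bound has to come from the structure of the rules rather than from the size of the rule set, which is what the paper does: for each of the finitely many constructors, the premise set is bounded by the (sum of the) cardinalities of the interpretations of the recursive-argument domains $\vv{\varTypD_{k,j}}$, which by strict positivity mention no inductive type of the block and by hypothesis land in specific sets of the universe; the supremum over the finitely many constructors then gives $\ord\in\VNCH{\icard_i}$. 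You already have all the ingredients for this (you invoke strict positivity for exactly the right reason in part~1), so the fix is local: replace the ``rule set is small $+$ regularity'' step with the per-constructor bound on $\cardinality{\Psi_{\indtyp_i,\constr_k}}$ extracted from the recursive-argument domains.
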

\begin{proof}
  The construction of $\semtyped{\ctx}{\Inds}{\env}$ depends only on
  the interpretation of terms $\semtyped{\ctx}{\termB}{\env}$ where
  $\termB$ appears in $\declinds$ or $\declcons$ and by our
  assumptions these are all defined. Therefore, we can easily see that
  $\Defined{\semtyped{\ctx}{\Inds}{\env}}$.  We show that
  $\semtyped{\ctx}{\Inds}{\env} \in \VNCH{\icard_j}$. This follows by
  Lemma~\ref{lem:fixpoint_of_ruleset_in_universe}. Notice that as all
  terms appearing in the inductive block have types that are in
  $\Type{j}$ we know that the cardinality of premises of rules in the
  rule set for constructing $\semtyped{\ctx}{\Inds}{\env}$ are also
  all in $\VNCH{\icard_j}$. Since there are finitely many such terms
  we can take the maximum of these cardinalities which allows us to
  use Lemma~\ref{lem:fixpoint_of_ruleset_in_universe}.

  Let $\vv{\genelB}$ be a sequence of sets that are in the
  interpretation for parameters of the mutual inductive block.  Let
  $\mathcal{F}(\ruleset) = \set{\concl = \tuple{i; \vv{\genelB};
      \vv{\term};c} \middle| \Rule{\premises}{\concl} \in \ruleset}$
  and
  $\mathcal{G}(A) = \set{a = \tuple{i; \vv{\genelB}; \vv{\term};c}
    \middle| a \in A}$. That is, $\mathcal{F}$ filters a rule set so
  that only those rules are retained that produce inductive types in
  the family indexed by parameter values $\vv{\genelB}$. Similarly,
  $\mathcal{G}$ filters the fixpoint of such a rule set.

  We show, by transfinite induction up to the closing ordinal of
  $\ruleset_{\ctx, \Inds}^{\env}$, that
  \[\mathcal{G}(\inddef{\ruleset_{\ctx, \Inds}^{\env}}) = \inddef{\mathcal{F}(\ruleset_{\ctx, \Inds}^{\env})}\]
  Notice that this crucially depends on the fact that parameters are
  uniform. Now, non-parameter arguments of constructors need to be in
  the sort (see the typing rule for inductive types). Therefore, for
  each fixed set of parameters, e.g., $\vv{\genelB}$ above, for each
  constructor there is a fixed cardinality $\ord \in \VNCH{\icard_i}$
  (cardinality corresponding to the type of non-parameter arguments)
  such that the cardinality of premises of rules in
  $\mathcal{F}(\ruleset_{\ctx, \Inds}^{\env})$ corresponding to that
  constructor is less than or to $\ord$. Since, there are finitely
  many such cardinalities we can take the maximum of these
  cardinalities which is again in $\VNCH{\icard_i}$. Hence, the
  closing ordinal of
  $\inddef{\mathcal{F}(\ruleset_{\ctx, \Inds}^{\env})}$ is in
  $\VNCH{\icard_i}$. Notice, that this \emph{does not} imply that
  $\inddef{\mathcal{F}(\ruleset_{\ctx, \Inds}^{\env})}$ is in
  $\VNCH{\icard_i}$ as there are parameters can values for indices in
  the tuples in $\inddef{\mathcal{F}(\ruleset_{\ctx, \Inds}^{\env})}$.

  Finally, we show that for each sequence $\vv{c}$ of sets that are in
  the interpretation for the indices of an inductive type $\indtyp_i$,
  $\set{a \middle| \tuple{i; \vv{\genelB}; \vv{c}; a}
    \inddef{\mathcal{F}(\ruleset_{\ctx, \Inds}^{\env})}} \in
  \VNCH{\icard_i}$. We show this by transfinite induction up to the
  closing ordinal of
  $\inddef{\mathcal{F}(\ruleset_{\ctx, \Inds}^{\env})}$ which,
  crucially, we know that is in $\VNCH{\icard_i}$.
\end{proof}

\begin{lemma} \label{lem:c-ind-in-model}
  Let $\Inds \synteq \Ind_{n}\{\declinds := \declcons\}$ and
  $\Inds' \synteq \Ind_{n}\{\declinds' := \declcons'\}$ be two blocks
  of inductive types with $\dom(\declinds) = \dom(\declinds')$ and
  $\dom(\declcons) = \dom(\declcons')$. Furthermore, assume that the
  for each $\indtyp \in \dom(\declinds)$ the interpretation of the
  arguments of the arity of $\declcons(\indtyp)$ are subsets of the
  interpretation of corresponding arguments of the arity of
  $\declcons(\indtyp')$. Similarly for the arguments of the
  constructors. In addition, assume that in each case, the
  interpretation of values given as parameters and arities in the
  resulting type of each corresponding constructors (the inductive
  type being constructed by that constructor) are equal. These are conditions in the rule
  \textsc{Ind-leq} in
  Figure~\ref{fig:pCuIC:cumu-jueq-inducrtive-types} where cumulativity
  (subtyping) relation is replaced with subset relation on the
  interpretation and the judgemental equality is replaced with
  equality of interpretations.\footnote{These conditions will hold by the
    induction hypothesis when we shall use this lemma.}

  Let $\vv{\TypP}$ and $\vv{\TypP'}$ be parameters of inductive blocks
  $\Inds$ and $\Inds'$, respectively. In this case,
  \begin{align*}
    & \forall \indtyp \in \dom(\declinds).\;
      \declinds(d) \synteq \Forall \vv{\termP} : \vv{\TypP}. \vv{\vartermC} : \vv{\Term}. \arity_{\indtyp}
      \Rightarrow \\
    & \forall \vv{\genel}.\; \vv{\genel} \in \semtyped{\ctx}{\vv{\TypP}}{\env}
      \land \vv{\genel} \in \semtyped{\ctx}{\vv{\TypP'}}{\env} \Rightarrow\\
    & \forall \vv{\genelB} \in \semtyped{\ctx, \vv{\termP} : \vv{\TypP}}{\vv{\Term}}{\env} \Rightarrow
      \vv{\decode}(\semtyped{\ctx}{\Inds.\indtyp}{\env}, \vv{\genel}, \vv{\genelB})
      \subseteq \vv{\decode}(\semtyped{\ctx}{\Inds'.\indtyp}{\env},
      \vv{\genel}, \vv{\genelB})
  \end{align*}
\end{lemma}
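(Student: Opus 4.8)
The plan is to exploit the least-fixpoint characterisation $\inddef{\ruleset} = \bigcap \set{\genset \mid \closed{\ruleset}(\genset)}$ recalled earlier. Since $\semtyped{\ctx}{\Inds}{\env} = \inddef{\ruleset_{\ctx, \Inds}^{\env}}$ is the \emph{smallest} $\ruleset_{\ctx, \Inds}^{\env}$-closed subset of the ambient universe, to prove $\semtyped{\ctx}{\Inds}{\env} \subseteq \semtyped{\ctx}{\Inds'}{\env}$ it suffices to show that the target set $\semtyped{\ctx}{\Inds'}{\env} = \inddef{\ruleset_{\ctx, \Inds'}^{\env}}$ is itself $\ruleset_{\ctx, \Inds}^{\env}$-closed, i.e.\ $\operation{\ruleset_{\ctx, \Inds}^{\env}}(\semtyped{\ctx}{\Inds'}{\env}) \subseteq \semtyped{\ctx}{\Inds'}{\env}$. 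Equivalently one could run a transfinite induction showing $\TFoperation{\ruleset_{\ctx, \Inds}^{\env}}{\ord} \subseteq \semtyped{\ctx}{\Inds'}{\env}$ for every $\ord$, but the closure formulation is cleaner and I will use it.

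For closure, fix a rule $\Rule{\Psi_{\indtyp_i, \constr_k}}{\psi_{\indtyp_i, \constr_k}} \in \ruleset_{\ctx, \Inds}^{\env}$, generated by some constructor $\constr_k$ of $\indtyp_i$ instantiated at parameter values $\vv{\varterm}$ and argument values $\vv{\vartermC}$, and assume $\Psi_{\indtyp_i, \constr_k} \subseteq \semtyped{\ctx}{\Inds'}{\env}$. The key step is to produce the \emph{corresponding} rule of $\ruleset_{\ctx, \Inds'}^{\env}$, namely the one generated by the same constructor $\constr_k$ (which exists since $\dom(\declcons) = \dom(\declcons')$) at the same data $\vv{\varterm}, \vv{\vartermC}$. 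This instantiation is admissible because every datum admissible for the $\Inds$-rule is admissible for the $\Inds'$-rule: for the non-parameter constructor arguments this is precisely the subset hypothesis on $\semtyped{\ctx, \vv{\termP} : \vv{\TypP}}{\vv{\varTypC_k'}}{\ldots}$ (the model-level reading of $\subtyp{\ctx, \vv{\termP} : \vv{\TypP}}{\vv{\varTypB}}{\vv{\varTypB'}}$ in \textsc{Ind-leq}), while the parameter values $\vv{\varterm}$ are carried over unchanged and remain admissible (in the cumulative-universe instances of \textsc{Ind-leq}, such as Example~\ref{ex:cum-jueq-lists}, their interpretations are even included in one another). Moreover the two conclusions coincide: they agree in the constructor-data component $\tuple{k; \vv{\varterm}, \vv{\vartermC}}$ by construction, and in the arity component $\semtyped{\ctx, \vv{\termP} : \vv{\TypP}, \ldots}{\vv{\term_k}}{\ldots}$ because the hypothesis equates the interpretations of the index arguments $\vv{\term_k}$ and $\vv{\term_k'}$ of the two constructors (the model reading of the judgemental equality $\jueq{\ldots}{\vv{\termB}}{\vv{\termB'}}{\ldots}$ in \textsc{Ind-leq}); Lemma~\ref{lem:inerp_inductive_indices_unique} ensures this component is well-defined.

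The main obstacle is matching the premise sets $\Psi_{\indtyp_i, \constr_k}$ and $\Psi'_{\indtyp_i, \constr_k}$ so that the hypothesis $\Psi_{\indtyp_i, \constr_k} \subseteq \semtyped{\ctx}{\Inds'}{\env}$ can be transported. For each recursive argument $\varTypC_{k,j} \synteq \Forall \vv{\varB} : \vv{\varTypD_{k,j}}. \indtyp_{I_{k,j}}~\vv{\termQ}~\vv{\term}$, strict positivity guarantees that no inductive of the block occurs in the domains $\vv{\varTypD_{k,j}}$, so these domains have identical interpretations under $\Inds$ and $\Inds'$ and the quantified witnesses $\vv{\termB}$ range over the same set; the decoded value $\vv{\decode}(\vartermC_{I_{k,j}}, \vv{\termB})$ is literally the same set; and the parameter/index components $\semtyped{\ldots}{\vv{\termQ}}{\ldots}$ and $\semtyped{\ldots}{\vv{\term}}{\ldots}$ of each premise tuple agree by the equalities of index interpretations just invoked. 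Hence $\Psi_{\indtyp_i, \constr_k} = \Psi'_{\indtyp_i, \constr_k}$ (and in any case $\Psi'_{\indtyp_i, \constr_k} \subseteq \semtyped{\ctx}{\Inds'}{\env}$ would suffice). Since $\semtyped{\ctx}{\Inds'}{\env}$ is a fixpoint of $\operation{\ruleset_{\ctx, \Inds'}^{\env}}$ it is closed under its own rule, giving $\psi_{\indtyp_i, \constr_k} \in \semtyped{\ctx}{\Inds'}{\env}$ and hence the set inclusion $\semtyped{\ctx}{\Inds}{\env} \subseteq \semtyped{\ctx}{\Inds'}{\env}$. The delicate bookkeeping here is keeping apart the two roles of the recursive-occurrence indices (those in the premises versus those in the conclusion) and checking that strict positivity really does keep the premise domains block-free.

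Finally I derive the stated consequence. Unfolding $\semtyped{\ctx}{\Inds.\indtyp_i}{\env} = \vv{\encode}(\genfun_{\indtyp_i})$ with $\genfun_{\indtyp_i}(\vv{\varterm}, \vv{\term}) = \set{\tuple{k; \vv{\varterm}, \vv{\vartermC}} \mid \tuple{i; \vv{\varterm}; \vv{\term}; \tuple{k; \vv{\varterm}, \vv{\vartermC}}} \in \semtyped{\ctx}{\Inds}{\env}}$, the inclusion just proved gives $\genfun_{\indtyp_i}(\vv{\varterm}, \vv{\term}) \subseteq \genfun'_{\indtyp_i}(\vv{\varterm}, \vv{\term})$ on every fibre where the left side is nonempty. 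Because the trace encoding is monotone with respect to graph inclusion, $\encode(\genfun) = \bigcup_{(\var, \varB) \in \genfun} \set{\var} \times \varB$ with empty fibres contributing nothing, the iterated encoding $\vv{\encode}$ preserves this fibrewise inclusion (and the mismatch of the two parameter/index domains is harmless, as unmatched fibres of $\genfun_{\indtyp_i}$ are empty). This yields $\semtyped{\ctx}{\Inds.\indtyp_i}{\env} \subseteq \semtyped{\ctx}{\Inds'.\indtyp_i}{\env}$, as required.
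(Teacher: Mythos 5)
Your proof is correct, and its mathematical core is the same as the paper's: everything reduces to the observation that every rule of $\ruleset_{\ctx, \Inds}^{\env}$ is (up to the identification of conclusions and premise sets that you verify component by component) a rule of $\ruleset_{\ctx, \Inds'}^{\env}$, which is exactly the fact the paper's proof invokes. The only difference is the wrapper used to turn rule-set inclusion into fixpoint inclusion: the paper runs a transfinite induction on the stages $\TFoperation{\ruleset_{\ctx, \Inds}^{\env}}{\ord} \subseteq \TFoperation{\ruleset_{\ctx, \Inds'}^{\env}}{\ord}$ up to the closing ordinal, whereas you appeal to the characterisation of $\inddef{\ruleset_{\ctx, \Inds}^{\env}}$ as the \emph{least} $\ruleset_{\ctx, \Inds}^{\env}$-closed set and show the target fixpoint is closed under the source's rules. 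The two are interchangeable here; your formulation avoids the (trivial) base and limit cases and is arguably the more idiomatic use of the least-fixpoint machinery set up in the background section. You are also more explicit than the paper on two points it leaves implicit: the role of strict positivity in making the premise sets coincide (the domains $\vv{\varTypD_{k,j}}$ are block-free, so they are interpreted identically under both blocks), and the final step from $\semtyped{\ctx}{\Inds}{\env} \subseteq \semtyped{\ctx}{\Inds'}{\env}$ to $\semtyped{\ctx}{\Inds.\indtyp}{\env} \subseteq \semtyped{\ctx}{\Inds'.\indtyp}{\env}$ via monotonicity of the iterated trace encoding, which the paper dispatches with ``consequently.'' One tiny slip: in that last step there are no unmatched fibres of $\genfun_{\indtyp}$ to worry about, since its domain is \emph{contained} in that of $\genfun'_{\indtyp}$; the extra fibres all live on the primed side, where they are harmless for the inclusion.
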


\begin{proof}
  Expanding the definition of $\semtyped{\ctx}{\Inds.\indtyp}{\env}$
  and $\semtyped{\ctx}{\Inds'.\indtyp}{\env}$ in the statement above
  gives us:
  \begin{align*}
    & \forall \indtyp \in \dom(\declinds).\;
      \declinds(d_i) \synteq \Forall \vv{\termP} : \vv{\TypP}. \vv{\vartermC} : \vv{\Term}. \arity_{\indtyp_i}
      \Rightarrow \\
    & \forall \vv{\genel}.\; \vv{\genel} \in \semtyped{\ctx}{\vv{\TypP}}{\env}
      \land \vv{\genel} \in \semtyped{\ctx}{\vv{\TypP'}}{\env} \Rightarrow\\
    & \forall \vv{\genelB} \in \semtyped{\ctx, \vv{\termP} : \vv{\TypP}}{\vv{\Term}}{\env} \Rightarrow
      \set{c \middle| \tuple{i; \vv{\genel}; \vv{\genelB};c} \in \semtyped{\ctx}{\Inds}{\env}}
      \subseteq \set{c \middle| \tuple{i; \vv{\genel}; \vv{\genelB};c} \in \semtyped{\ctx}{\Inds'}{\env}}
  \end{align*}
  In order to show this, we show, by transfinite induction on $\ord$ up to the closing ordinal of
  $\ruleset_{\ctx, \Inds}^{\env}$, that the following holds
  \begin{align*}
    & \forall \indtyp \in \dom(\declinds).\;
      \declinds(d_i) \synteq \Forall \vv{\termP} : \vv{\TypP}. \vv{\vartermC} : \vv{\Term}. \arity_{\indtyp_i}
      \Rightarrow \\
    & \forall \vv{\genel}.\; \vv{\genel} \in \semtyped{\ctx}{\vv{\TypP}}{\env}
      \land \vv{\genel} \in \semtyped{\ctx}{\vv{\TypP'}}{\env} \Rightarrow\\
    & \forall \vv{\genelB} \in \semtyped{\ctx, \vv{\termP} : \vv{\TypP}}{\vv{\Term}}{\env} \Rightarrow
      \set{c \middle| \tuple{i; \vv{\genel}; \vv{\genelB};c} \in \TFoperation{\ruleset_{\ctx, \Inds}^{\env}}{\ord}}
      \subseteq \set{c \middle| \tuple{i; \vv{\genel}; \vv{\genelB};c} \in \TFoperation{\ruleset_{\ctx, \Inds'}^{\env}}{\ord}}
  \end{align*}

  The base case, and the case for limit ordinals are trivial.  In case
  of a successor ordinal, $\ord^{+}$, let a $k$ be in
  $\set{c \middle| \tuple{i; \vv{\genel}; \vv{\genelB};c} \in
    \TFoperation{\ruleset_{\ctx, \Inds}^{\env}}{\ord^{+}}}$. Then it
  must be generated by a rule in $\ruleset_{\ctx, \Inds}^{\env}$. By
  the premise of this rule, we know that all the recursive (the same
  or other inductive types of the block) arguments are required to be
  in the previous stage,
  $\TFoperation{\ruleset_{\ctx, \Inds}^{\env}}{\ord}$. By induction
  hypothesis, we know that those tuples should also belong to
  $\TFoperation{\ruleset_{\ctx, \Inds'}^{\env}}{\ord}$. Hence, the
  corresponding rule exists in $\ruleset_{\ctx, \Inds'}^{\env}$ and is
  applicable. Therefore, the tuple $k$ must also be in
  $\set{c \middle| \tuple{i; \vv{\genel}; \vv{\genelB};c} \in
    \TFoperation{\ruleset_{\ctx, \Inds'}^{\env}}{\ord^{+}}}$.
\end{proof}

\subsection{Modeling eliminators}
We define the interpretation for eliminators by constructing a rule
set which will interpret the eliminators for the whole inductive
block. We will define interpretation of individual eliminators based
on the fixpoint of this rule set.

\begin{definition}[Interpretation of recursors]\label{def:interp_recursors}
  Let $\Inds \synteq \Ind_{n}\{\declinds := \declcons\}$ be an
  arbitrary but fixed inductive block and assume, without loss of
  generality, that
  $\declinds = \indtyp_0 : \Typ_0, \dots \indtyp_l : \Typ_l$ and
  $\declcons = \constr_0 : \varTyp_0, \dots \constr_{l'} :
  \varTyp_{l'}$.  Where
  $\Typ_i \synteq \Forall \vv{\termP} : \vv{\TypP}. \Forall \vv{\var}
  : \vv{\TypB}. \sort$ for some types $\vv{\TypB}$ and some sort
  $\sort$. Here, $\vv{\TypP}$ are parameters of the inductive
  block. The type of constructors are of the following form:
  $\varTyp_k \synteq \Forall \vv{\termP} : \vv{\TypP}. \Forall
  \vv{\var} : \vv{\varTypC_k}. \indtyp_{i_k}~\vv{\termP}~\vv{\term_k}$
  for some $\vv{\term_k}$. For the sake of simplicity of presentation,
  let us write
  $\mathit{ELB}\synteq\ElimBlock{\Inds}{\motive_{\indtyp_1}, \dots,
    \motive_{\indtyp_l}}{\genfun_{\constr_1},\dots,\genfun_{\constr_{l'}}}$
  for the block of eliminators being interpreted. Furthermore, let
  $\elimtyp{\Inds}{\motive_{\indtyp_1}, \dots,
    \motive_{\indtyp_l}}(\constr_k, \varTyp_k) \synteq \Forall
  \vv{\var} :\vv{\varTypB_k}. \motive_{\indtyp_{i_k}}~\vv{\termB}$ for
  some terms $\vv{\termB}$. Notice that
  $\elimtyp{\Inds}{\motive_{\indtyp_1}, \dots, \motive_{\indtyp_l}}$
  is simply the type of the constructor where after each (mutually)
  recursive argument, an argument is added for the result of the
  elimination of that (mutually) recursive argument. Let us write
  $J_{k, i}$ for the index of the $i$\textsuperscript{th} argument of
  the $k$\textsuperscript{th} constructor, $\varTypC_{k, i}$, in
  $\vv{\varTypB_{k}}$ above. More precisely, whenever
  $\recarg(\varTypC_{k, i})$ holds, we have $\varTypB_{J_{k, i}+1}$ is
  the argument of
  $\elimtyp{\Inds}{\motive_{\indtyp_1}, \dots,
    \motive_{\indtyp_l}}(\constr_k, \varTyp_k)$ that corresponds to
  the result of the elimination of
  $\varTypC_{k, i} = \varTypB_{J_{k, i}}$.  We first define a rule set
  $\ruleset_{\ctx, \mathit{ELB}}^{\env}$ for this interpretation:
  \begin{align*}
    \ruleset_{\ctx, \mathit{ELB}}^{\env} \eqdef{}
    & \bigcup_{\indtyp_i \in
      \dom(\declinds)}\bigcup_{\constr_k \in
      \constrsof(\indtyp_i)}
      \set{ \Rule{\Psi_{\indtyp_i,\constr_k}}{\psi_{\indtyp_i,\constr_k}}  \middle|
      \vv{\varterm} \in
      \semtyped{\ctx}{\vv{\varTypB_k}}{\env}
      }
      \intertext{Let $\vv{\vartermB}$ be the subsequence of $\vv{\varterm}$ corresponding to arguments of the constructor, i.e., it is obtained from $\vv{\varterm}$ by dropping any term in the sequence that corresponds to some $\varTypB_{J_{k, j}+1}$ whenever $\recarg(\varTypC_{{k, j}})$.}
      \psi_{\indtyp_i,\constr_k} \eqdef{}
    & \tuple{\tuple{\constr_k; \vv{\vartermB}};
      \vv{\decode}(\semtyped{\ctx}{\genfun_{\constr_k}}{\env}, \vv{\varterm})}\\
    \Psi_{\indtyp_i,\constr_k} \eqdef{}
    &
      \bigcup_{\recarg(\varTypC_{k,j})}\set{
      \tuple{\begin{array}{l}
               \vv{\decode}(\semtyped{\ctx_{k, j}}{\varTypB_{J_{k, j}}}{\env, \vv{\vartermD}}, \vv{\termB});\\
             \vv{\decode}(\semtyped{\ctx_{k, j}}{\varTypB_{J_{k, j}+1}}{\env, \vv{\vartermD}}, \vv{\termB})\end{array}}
      \middle|
      \vv{\termB} \in \semtyped{\ctx_{k,j}}{\vv{\varTypD_{k, j}}}{\env, \vv{\vartermD}}}\\
    \ctx_{k,j} \eqdef{} & \ctx, \vv{\termP} : \vv{\TypP}, \var_1 : \varTypC_{k, 1}, \dots, \var_{j-1} : \varTypC_{k, j-1} \\
    \vv{\vartermD} \eqdef{}& \vartermB_1, \dots \vartermB_{j-1}\\
    \intertext{For $\Psi_k$ we assume that $\varTypC_{k, j} \synteq \Forall \vv{\varB} :
    \vv{\varTypD_{k,j}}. \indtyp_{I_{k, j}}~\vv{\termQ}~\vv{\term}$.}
  \end{align*}
  We define the interpretation individual eliminators as follows:
  \[
    \semtyped{\ctx}{\Elim{\term}{\Inds.\indtyp_i;
        \vv{\termC}}{\motive_{\indtyp_1}, \dots,
        \motive_{\indtyp_l}}{\genfun_{\constr_1},\dots,\genfun_{\constr_{l'}}}}{\env}
    \eqdef{} \termB\] if
  $\semtyped{\ctx}{\term}{\env} = \tuple{k, \vv{\vartermC}}$, and
  there is a unique $\termB$ such that
  $\tuple{\tuple{k; \semtyped{\ctx}{\vv{\termC}}{\env}, \vv{\vartermC}}; \termB} \in
  \inddef{\ruleset_{\ctx, \mathit{ELB}}^{\env}} $.
\end{definition}

The following lemma shows that in
Definition~\ref{def:interp_recursors} above we have indeed captured and
interpreted all of the elements of all of the inductive types in the
block.

\begin{lemma} \label{lem:interp_eliminator_correct}
  Let $\Inds \synteq \Ind_{n}\{\declinds := \declcons\}$ be a block of
  inductive types with $\semtyped{\ctx}{\Inds}{\env}$ defined and let
  $\motive_{\indtyp_1}, \dots, \motive_{\indtyp_l}$ and
  $\genfun_{\constr_1},\dots,\genfun_{\constr_{l'}}$ be such that
  \[\semtyped{\ctx}{\motive_{\indtyp_i}}{\env} \in
    \semtyped{\ctx}{\Forall \vv{\var}:\vv{\Typ}. (\indtyp_i~\vv{\var})
      \to \sort'}{\env}\] and
  \[\semtyped{\ctx}{\genfun_{\constr_i}}{\env} \in
    \semtyped{\ctx}{\elimtyp{\Inds}{\vv{\motive}}(\constr_i,
      \declcons(\constr_i))}{\env}\] Also, assume that for the term
  $\term$ we have
  $\semtyped{\ctx}{\term}{\env} \in
  \semtyped{\ctx}{\Inds.\indtyp~\vv{\termB}~\vv{\termC}}{\env}$, where
  $\vv{\termB}$ correspond to parameters.  Then,
  \[\Defined{\semtyped{\ctx}{\Elim{\term}{\Inds.\indtyp; \vv{\termB}}{\motive_{\indtyp_1},
          \dots,
          \motive_{\indtyp_l}}{\genfun_{\constr_1},\dots,\genfun_{\constr_{l'}}}}{\env}}\]
  and
  \[\semtyped{\ctx}{\Elim{\term}{\Inds.\indtyp; \vv{\termB}}{\motive_{\indtyp_1}, \dots,
        \motive_{\indtyp_l}}{\genfun_{\constr_1},\dots,\genfun_{\constr_{l'}}}}{\env}
    \in \semtyped{\ctx}{\motive_{\indtyp_i}~\vv{\termB}~\vv{\termC}~\term}{\env} \]
\end{lemma}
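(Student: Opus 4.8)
The plan is to establish all the content of the statement at once — definedness (which, by Definition~\ref{def:interp_recursors}, means existence and uniqueness of a $w$ with $\tuple{\semtyped{\ctx}{\term}{\env},w}\in\inddef{\ruleset_{\ctx,\mathit{ELB}}^{\env}}$) together with membership in the motive — by transfinite induction on the stages $\TFoperation{\ruleset_{\ctx,\Inds}^{\env}}{\ord}$ of the rule set interpreting the block. Since $\semtyped{\ctx}{\Inds}{\env}=\inddef{\ruleset_{\ctx,\Inds}^{\env}}=\TFoperation{\ruleset_{\ctx,\Inds}^{\env}}{\closingord{\ruleset_{\ctx,\Inds}^{\env}}}$, the hypothesis $\semtyped{\ctx}{\term}{\env}\in\semtyped{\ctx}{\Inds.\indtyp~\vv{\termB}}{\env}$ tells us that $\semtyped{\ctx}{\term}{\env}$ is an element of the form $\tuple{k;\vv{\varterm},\vv{\vartermC}}$ (by the definition of $\genfun_{\indtyp}$) appearing at some stage $\ord$, and this tuple is precisely the first component $\tuple{\constr_k;\vv{\vartermB}}$ occurring in the conclusions of $\ruleset_{\ctx,\mathit{ELB}}^{\env}$. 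It therefore suffices to prove the invariant at every stage.

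The invariant I would carry is: for every ordinal $\ord$ and every element $\tuple{i;\vv{\varterm};\vv{\term};\tuple{k;\vv{\varterm},\vv{\vartermC}}}\in\TFoperation{\ruleset_{\ctx,\Inds}^{\env}}{\ord}$ there is a unique $w$ with $\tuple{\tuple{\constr_k;\vv{\vartermB}},w}\in\inddef{\ruleset_{\ctx,\mathit{ELB}}^{\env}}$, and $w$ lies in the interpretation of the motive $\motive_{\indtyp_i}$ applied to the index data $\vv{\term}$ and to the element. The base and limit cases are immediate, since no new elements are introduced. In the successor case, the element at stage $\ord^+$ is produced by a rule of $\ruleset_{\ctx,\Inds}^{\env}$ whose premise set $\Psi_{\indtyp_i,\constr_k}$ records its recursive subterms, and these, being premises, already lie in $\TFoperation{\ruleset_{\ctx,\Inds}^{\env}}{\ord}$, so the induction hypothesis applies to each of them. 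For existence I use the induction hypothesis to obtain, for every recursive argument $\varTypC_{k,j}$ with $\recarg(\varTypC_{k,j})$ and every input $\vv{\termB}\in\semtyped{\ctx_{k,j}}{\vv{\varTypD_{k,j}}}{\env,\vv{\vartermD}}$, the unique eliminated value of the subterm $\vv{\decode}(\semtyped{\ctx_{k,j}}{\varTypB_{J_{k,j}}}{\env,\vv{\vartermD}},\vv{\termB})$; encoding these pointwise values as the result-of-elimination arguments fixes a choice of $\vv{\varterm}\in\semtyped{\ctx}{\vv{\varTypB_k}}{\env}$ with constructor-argument subsequence $\vv{\vartermB}$ for which $\Psi_{\indtyp_i,\constr_k}$ holds in $\inddef{\ruleset_{\ctx,\mathit{ELB}}^{\env}}$. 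The rule then fires, placing $\tuple{\tuple{\constr_k;\vv{\vartermB}},\vv{\decode}(\semtyped{\ctx}{\genfun_{\constr_k}}{\env},\vv{\varterm})}$ in the fixpoint and yielding $w$. Membership follows from the typing hypothesis $\semtyped{\ctx}{\genfun_{\constr_k}}{\env}\in\semtyped{\ctx}{\elimtyp{\Inds}{\vv{\motive}}(\constr_k,\declcons(\constr_k))}{\env}$: writing $\elimtyp{\Inds}{\vv{\motive}}(\constr_k,\varTyp_k)\synteq\Forall\vv{\var}:\vv{\varTypB_k}.\motive_{\indtyp_{i_k}}~\vv{\termB}$, the induction hypothesis makes each result-of-elimination component of $\vv{\varterm}$ inhabit the corresponding motive function space, so $\vv{\varterm}$ lands in the domain $\semtyped{\ctx}{\vv{\varTypB_k}}{\env}$ and decoding produces an element of the codomain $\semtyped{\ctx}{\motive_{\indtyp_{i_k}}~\vv{\termB}}{\env}$, which by the definition of $\elimtyp$ is exactly the required motive interpretation.

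For uniqueness I argue that any rule of $\ruleset_{\ctx,\mathit{ELB}}^{\env}$ whose conclusion has first component $\tuple{\constr_k;\vv{\vartermB}}$ and whose premises hold in $\inddef{\ruleset_{\ctx,\mathit{ELB}}^{\env}}$ must use exactly this $\vv{\varterm}$. Lemma~\ref{lem:inerp_inductive_indices_unique} fixes the index and arity data from the constructor arguments, so the ambiguity can only lie in the result-of-elimination components; but for each recursive argument the premise $\Psi_{\indtyp_i,\constr_k}$ demands, for every $\vv{\termB}$, that the component paired with the subterm belong to the fixpoint, and by the induction hypothesis that partner is uniquely the eliminated value of the subterm. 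Hence the result-of-elimination arguments are forced, $\vv{\varterm}$ is forced given $\vv{\vartermB}$, and consequently $w=\vv{\decode}(\semtyped{\ctx}{\genfun_{\constr_k}}{\env},\vv{\varterm})$ is determined. This is precisely the uniqueness that Definition~\ref{def:interp_recursors} requires for the eliminator to be defined; applying the invariant to $\semtyped{\ctx}{\term}{\env}$ finishes the proof.

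The main obstacle I anticipate is this uniqueness step combined with the correct treatment of higher-order recursive arguments: when $\vv{\varTypD_{k,j}}\neq\nil$ the recursive subterm is a \emph{function}, so its eliminated value must be assembled pointwise over the entire domain $\semtyped{\ctx_{k,j}}{\vv{\varTypD_{k,j}}}{\env,\vv{\vartermD}}$ and then re-encoded, and I must combine the pointwise induction hypothesis with the trace-encoding lemmas (using $\vv{\decode}(\genel,\nil)=\genel$ in the non-functional case) to confirm that each such assembled function is a genuine, uniquely determined element of the expected function space. Showing that the premise sets pin down the result-of-elimination arguments uniformly in $\vv{\termB}$ is the crux on which both definedness and well-typedness rest.
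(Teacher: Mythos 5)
Your proposal is correct and follows essentially the same route as the paper's proof: transfinite induction up to the closing ordinal of $\ruleset_{\ctx,\Inds}^{\env}$, carrying an existence-and-uniqueness invariant for the eliminated value of each element appearing at each stage, and resolving the recursive arguments via the induction hypothesis because the rule premises pin them down uniquely. The only (harmless) difference is that you state the invariant against the full fixpoint $\inddef{\ruleset_{\ctx,\mathit{ELB}}^{\env}}$ rather than the matching stage $\TFoperation{\ruleset_{\ctx,\mathit{ELB}}^{\env}}{\ord}$, so you do not need the paper's closing remark that the two rule sets share a closing ordinal.
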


\begin{proof}
  We show by transfinite induction up to the closing ordinal of
  $\ruleset_{\ctx, \Inds}^{\env}$ that for any $\ord$ and for any
  $\tuple{\tuple{k;\vv{\varterm},\vv{\vartermC}}; \vv{\term}} \in
  \set{\tuple{\tuple{k;\vv{\varterm},\vv{\vartermC}}; \vv{\term}}
    \middle| \tuple{i; \vv{\varterm}; \vv{\term};\tuple{k;
        \vv{\vartermC}}} \in \TFoperation{\ruleset_{\ctx,
        \Inds}^{\env}}{\ord}}$ (note that by
  Lemma~\ref{lem:inerp_inductive_indices_unique} $\vv{\term}$ is
  uniquely determined by $k$, $\vv{\varterm}$ and $\vv{\vartermC}$)
  there is a unique
  $\genelB \in
  \semtyped{\ctx}{\vv{\decode}(\motive_{\indtyp_i},\vv{\varterm},\vv{\term},\tuple{k;\vv{\vartermC}})}{\env}$
  such that
  $\tuple{\tuple{\constr;\vv{\varterm},\vv{\vartermC}}; \genelB} \in
  \TFoperation{\ruleset_{\ctx, \mathit{ELB}}^{\env}}{\ord}$ for
  \[\mathit{ELB} \synteq \ElimBlock{\Inds}{\motive_{\indtyp_1}, \dots,
      \motive_{\indtyp_l}}{\genfun_{\constr_1},\dots,\genfun_{\constr_{l'}}}\]
  For the base case, $\ord = 0$ this holds trivially. For the other
  cases, it suffices to notice that all the argument elements taken
  for the (mutually) recursive arguments of constructors, i.e.,
  corresponding to $\varTypB_{J_{k, j}}$ for
  $\recarg(\varTypC_{J_{k, j}})$ are \emph{uniquely} determined
  from the arguments of the constructor ($\vv{\vartermB}$ in the
  interpretation of recursors above) as is so restricted by the
  antecedents of each rule.

  Notice that the argument above also shows that
  $\ruleset_{\ctx, \mathit{ELB}}^{\env}$ and
  $\ruleset_{\ctx, \Inds}^{\env}$ have the same closing ordinal. This
  concludes the proof.
\end{proof}

\subsection{Proof of Soundness}
We show that the model that we have constructed throughout this
section is sound. That is, we show that for any typing context $\ctx$,
term $\term$ and type $\Typ$ such that $\typed{\ctx}{\term}{\Typ}$ we
have that for any environment $\env \in \sem{\ctx}$,
$\Defined{\semtyped{\ctx}{\term}{\env}}$,
$\Defined{\semtyped{\ctx}{\Typ}{\env}}$ and that
$\semtyped{\ctx}{\term}{\env} \in \semtyped{\ctx}{\Typ}{\env}$.  We
use this result to prove consistency of \pCuIC{}.

\begin{lemma} \label{lem:model-closedness} Let $\ctx$ be a typing
  context, $\env$ be an environment, and $\term$ be a term such that
  $\Defined{\semtyped{\ctx}{\term}{\env}}$. Then
  $\freevars(\term) \subseteq \dom(\ctx)$.
\end{lemma}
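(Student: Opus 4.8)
The plan is to proceed by well-founded induction on $\size(\ctx \vdash \term)$, following exactly the case structure of the recursive definition of $\semtyped{\ctx}{\term}{\env}$. The guiding observation is that for each syntactic form of $\term$, the predicate $\Defined{\semtyped{\ctx}{\term}{\env}}$ forces the corresponding sub-interpretations to be defined, and --- in the variable case --- forces the variable to occur in $\ctx$. Since $\freevars$ strips each bound variable precisely at the point where the context is extended by that variable, the subset relation $\freevars(\term) \subseteq \dom(\ctx)$ is preserved through every compositional case. Note that the $\pm\sfrac{1}{2}$ adjustments in the definition of $\size$ guarantee $\size(\ctx \vdash \Typ) < \size(\ctx \vdash \Forall \var : \Typ. \TypB)$ and $\size(\ctx, \var : \Typ \vdash \TypB) < \size(\ctx \vdash \Forall \var : \Typ. \TypB)$, so the induction hypothesis is available for every recursive call made by the interpretation.

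The base cases are immediate. For $\term \synteq \Prop$ or $\term \synteq \Type{i}$ we have $\freevars(\term) = \emptyset \subseteq \dom(\ctx)$. The crucial base case is $\term \synteq \var$: by definition $\semtyped{\ctx}{\var}{\env}$ is defined only when $\ctx$ splits as $\ctx_1, \var : \Typ, \ctx_2$ with $\var \notin \dom(\ctx_1) \cup \dom(\ctx_2)$, that is, only when $\var \in \dom(\ctx)$; hence $\freevars(\var) = \set{\var} \subseteq \dom(\ctx)$. This is exactly the place where definedness of the model pins down membership in the context.

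For the compositional cases I would unfold the interpretation and appeal to the induction hypothesis on each sub-interpretation. For $\Forall \var : \Typ. \TypB$, definedness requires $\Defined{\semtyped{\ctx}{\Typ}{\env}}$ and $\Defined{\semtyped{\ctx, \var : \Typ}{\TypB}{\env, \genel}}$ for $\genel$ in the interpretation of $\Typ$; the hypothesis yields $\freevars(\Typ) \subseteq \dom(\ctx)$ and $\freevars(\TypB) \subseteq \dom(\ctx) \cup \set{\var}$, so $\freevars(\Forall \var : \Typ. \TypB) = \freevars(\Typ) \cup (\freevars(\TypB) \setminus \set{\var}) \subseteq \dom(\ctx)$. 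The cases of $\Lam \var : \Typ. \term$, of application $\term~\termB$, and of $\Let$ are handled identically, matching each recursive occurrence in the definition of the interpretation against the corresponding clause in the definition of $\freevars$ and using that the bound variable is removed exactly when the context is extended by it.

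The main work lies in the inductive cases --- inductive blocks $\Inds$, inductive types $\Inds.\indtyp_i$ and constructors $\Inds.\constr_k$, and eliminators. Here $\semtyped{\ctx}{\Inds}{\env}$ is the fixpoint $\inddef{\ruleset_{\ctx, \Inds}^{\env}}$ of a rule set whose rules are assembled from the interpretations of the parameter types $\vv{\TypP}$, the arities, and the constructor argument types $\vv{\varTypC_k}$, each taken in a context extended by the parameters $\vv{\termP}$, the recursive-occurrence placeholders $\vv{\indtyp'}$, and the preceding constructor arguments (the contexts $\ctx_{k,j}$). Definedness of the block forces definedness of each such component interpretation, so the induction hypothesis bounds the free variables of every component by the domain of its own extended context. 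The only genuine obstacle is the bookkeeping: one must check that when these bounds are combined through $\freevars(\Ind_n\{\declinds := \declcons\}) = \freevars(\declinds) \cup \freevars(\declcons)$ and the definition of $\freevars$ on declarations, every variable bound by a parameter, an arity argument, a placeholder, or a preceding constructor argument is removed again, leaving only free variables already in $\dom(\ctx)$. Since the context extensions used in the rule set are exactly those over which $\freevars$ quantifies, this collapse is uniform across all constructors, and the eliminator case reduces to the same analysis applied to the interpretations of $\term$, the motives $\vv{\motive}$, and the case-eliminators $\vv{\genfun}$.
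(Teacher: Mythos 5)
Your proposal is correct and matches the paper's approach: the paper disposes of this lemma with a one-line ``follows easily by induction on $\term$,'' and your argument is exactly that induction spelled out (using the $\size$ measure to justify the structural recursion through context extensions), with the variable case carrying the real content and the binder and inductive-block cases being bookkeeping. Nothing in your elaboration deviates from or adds to what the paper intends.
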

\begin{proof}
  We prove that if there is a variable $\var \in \freevars(\term)$ such that $\var \not\in \dom(\ctx)$ then $\lnot \Defined{\semtyped{\ctx}{\term}{\env}}$. This follows easily by induction on $\term$.
\end{proof}

\begin{lemma}[Weakening] \label{lem:model-weakening} Let $\ctx$ be a
  typing context, $\env$ be an environment, $\term$ and $\Typ$ be
  terms such that $\Defined{\sem{\ctx, \ctx'}}$,
  $\env \in \sem{\ctx}$, $\env, \env' \in \sem{\ctx, \ctx'}$ and
  $\Defined{\semtyped{\ctx, \ctx'}{\term}{\env, \env'}}$. Furthermore,
  let $\ctxB$ be a typing context and $\envB$ be an environment such
  that $(\dom(\ctx) \cup \dom(\ctx')) \cap \dom(\ctxB) = \emptyset$,
  and we have that variables in $\dom(\ctxB)$ do not appear in $\ctx'$
  freely. Furthermore, let us assume that
  $\Defined{\sem{\ctx, \ctxB, \ctx'}}$ and
  $\env, \envB, \env' \in \sem{\ctx, \ctxB, \ctx'}$. Then
  \begin{enumerate}
  \item \label{lem:model-weakening:case1}
    $\Defined{\semtyped{\ctx, \ctxB, \ctx'}{\term}{\env, \envB, \env'}}$
  \item \label{lem:model-weakening:case2}
    $\semtyped{\ctx, \ctxB, \ctx'}{\term}{\env, \envB, \env'} =
    \semtyped{\ctx, \ctx'}{\term}{\env, \env'}$
  \end{enumerate}
\end{lemma}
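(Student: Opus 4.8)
The plan is to prove the weakening lemma by well-founded induction on the size of the judgement $\size(\ctx, \ctx' \vdash \term)$, exactly as the model's interpretation is itself defined. This is the natural induction principle here because the interpretation $\semtyped{-}{-}{-}$ is given by well-founded recursion on $\size$, so each recursive call that the interpretation makes will be at strictly smaller size, and the induction hypothesis will apply to it. Both conclusions (definedness, case~\ref{lem:model-weakening:case1}, and equality of interpretations, case~\ref{lem:model-weakening:case2}) should be proved simultaneously, since the equality statement presupposes that both sides are defined.

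Let me think about the structure. First I would fix notation: write $\Delta \synteq \ctx, \ctxB, \ctx'$ for the weakened context and $\Delta_0 \synteq \ctx, \ctx'$ for the original, with environments $\env, \envB, \env'$ and $\env, \env'$ respectively. The hypothesis that $\dom(\ctxB)$ is disjoint from $\dom(\ctx) \cup \dom(\ctx')$ and that variables of $\ctxB$ do not appear free in $\ctx'$ guarantees that inserting $\ctxB$ does not capture or shadow any variable, so substitution and variable lookup behave coherently. Then I would proceed by case analysis on the shape of $\term$, following the clauses of the interpretation. The sort cases ($\Prop$, $\Type{i}$) are immediate since their interpretation is a fixed set independent of the context and environment. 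The crucial case is the variable case: if $\term \synteq \var$, then by Lemma~\ref{lem:model-closedness} $\var \in \dom(\Delta_0)$, hence $\var \in \dom(\ctx) \cup \dom(\ctx')$; the variable lookup clause selects a component of the environment by counting the position of $\var$, and since $\ctxB$ is inserted strictly between $\ctx$ and $\ctx'$ with a domain disjoint from both, the inserted block $\envB$ is skipped and the \emph{same} environment component is selected in both $\Delta$ and $\Delta_0$. Here one must track the offset $\len(\IndsOf(\ctx_1))$ carefully, but the disjointness hypotheses make the bookkeeping go through. The compound cases (product, lambda, application, let) follow by applying the induction hypothesis to the immediate subterms, each of which is at strictly smaller size, and then observing that $\encode$ and $\decode$ are applied to equal arguments on both sides, yielding equal results.

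The inductive-type and eliminator cases require more care. For $\term \synteq \Inds.\varC$ (an inductive type or constructor) and for eliminators, the interpretation is built from the fixpoint of a rule set $\ruleset_{\ctx, \Inds}^{\env}$ (respectively $\ruleset_{\ctx, \mathit{ELB}}^{\env}$), whose rules are generated from the interpretations of the parameter types, the argument types, and the motives and case-eliminators. By the induction hypothesis applied to each of these constituent terms---all appearing in $\declinds$, $\declcons$, the motives, or the recipes, and all of strictly smaller size---the interpretations of those constituents are unchanged under the insertion of $\ctxB$. Consequently the generated rule sets $\ruleset_{\Delta, \Inds}^{\env, \envB, \env'}$ and $\ruleset_{\Delta_0, \Inds}^{\env, \env'}$ are \emph{literally equal} as sets of rules, and therefore their least fixpoints coincide; the individual inductive-type, constructor, and eliminator interpretations, being defined from the fixpoint by $\encode/\decode$ and (for eliminators) by uniqueness of the associated output, are then equal as well. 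Definedness is preserved for the same reason.

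The main obstacle I anticipate is the variable case's index arithmetic: the interpretation clause $\semtyped{\ctx}{\var}{\vv{\genel}} \eqdef \genel_{\len(\ctx_1) - l}$ offsets by the number of inductive blocks $l = \len(\IndsOf(\ctx_1))$ to the left of $\var$, and inductive blocks contribute to $\size$ and to the context but \emph{not} to the domain nor to the environment sequence. I must verify that inserting $\ctxB$ (which may itself contain inductive blocks) preserves the selected component---that the position of each free variable of $\term$ in the environment, measured after discarding inductive-block entries, is identical in $\Delta$ and $\Delta_0$ because $\ctxB$ sits entirely to one side of every such variable's binding site. This is precisely where the hypothesis that $\dom(\ctxB)$ is disjoint from $\dom(\ctx')$ and that $\ctxB$'s variables do not occur free in $\ctx'$ is indispensable, and it is the one place where a careless off-by-one could silently break the argument.
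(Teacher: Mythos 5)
Your proposal is correct and follows essentially the same route as the paper: induction on the term (the paper's structural induction is justified by the same size measure you invoke), a case analysis tracking the interpretation clauses, with the variable-lookup offset and Lemma~\ref{lem:model-closedness} doing the real work. The only presentational difference is that the paper invokes Lemma~\ref{lem:model-closedness} at the binder cases---to check that $\dom(\ctxB)$ stays disjoint from the free variables of the \emph{extended} $\ctx'$ (e.g.\ of $\Typ$ in the $\Forall\var:\Typ.\TypB$ case) so the induction hypothesis applies to the body---whereas you use it only in the variable case; the fact you need is the same one, so this is not a gap.
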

\begin{proof}
  We prove the result above by induction on $\term$. Most cases follow
  easily by induction hypotheses. Here, for demonstration purposes we
  show the case for variables and (dependent) function types.
  \begin{itemize}
  \item $\term = \var$: In this case, both
    Case~\ref{lem:model-weakening:case1} and
    Case~\ref{lem:model-weakening:case2} above follow by definition of
    the model.
  \item $\term = \Forall \var : \Typ. \TypB$: We know by induction
    hypothesis that
    \[\Defined{\semtyped{\ctx, \ctxB, \ctx'}{\Typ}{\env, \envB,
        \env'}}\]
    \[\semtyped{\ctx, \ctxB, \ctx'}{\Typ}{\env, \envB, \env'} =
      \semtyped{\ctx, \ctx'}{\Typ}{\env, \env'}\]
    Also by induction hypotheses we have that for all
    $\genel \in \semtyped{\ctx, \ctxB, \ctx'}{\Typ}{\env, \envB,
      \env'}$,
    \[\Defined{(\semtyped{\ctx, \ctxB, \ctx', \var : \Typ}{\TypB}{\env,
        \envB, \env', \genel})}\]
    \[\semtyped{\ctx, \ctxB, \ctx', \var : \Typ}{\TypB}{\env, \envB,
      \env', \genel} = \semtyped{\ctx, \ctx', \var :
      \Typ}{\TypB}{\env, \env', \genel}\] Given the above induction
  hypotheses, both Case~\ref{lem:model-weakening:case1} and
  Case~\ref{lem:model-weakening:case2} above follow by the definition
  of the model. Note that by Lemma~\ref{lem:model-closedness}, we know
  that variables in $\dom(\ctxB)$ do not appear freely in $\Typ$. This
  is why induction hypothesis applies to $\TypB$ above. \qedhere
  \end{itemize}
\end{proof}

\begin{lemma}[Substitutivity] \label{lem:model-subst}
  Let $\ctx$ be a typing context, $\env$ be an environment,
  $\termB$ and $\Typ$ be terms such that $\Defined{\sem{\ctx}}$,
  $\env \in \sem{\ctx}$ and $\Defined{\semtyped{\ctx}{\termB}{\env}}$. Then
  \begin{enumerate}
  \item \label{lem:model-subst:case1} If
    $\Defined{\sem{\ctx, \var : \Typ, \ctxB}}$ and
    $\env, \semtyped{\ctx}{\termB}{\env}, \envB \in \sem{\ctx, \var :
      \Typ, \ctxB}$ then
    $\env, \envB \in \sem{\ctx, \subst{\ctxB}{\var}{\termB}}$
  \item \label{lem:model-subst:case2} If
    $\Defined{\sem{\ctx, \var : \Typ, \ctxB}}$ and
    $\env, \semtyped{\ctx}{\termB}{\env}, \envB \in \sem{\ctx, \var :
      \Typ, \ctxB}$ and
    $\Defined{\semtyped{\ctx, \var : \Typ, \ctxB}{\term}{\env,
      \semtyped{\ctx}{\termB}{\env}, \envB}}$ then
    \begin{enumerate}
    \item \label{lem:model-subst:case2a}
      $\Defined{\semtyped{\ctx,
          \subst{\ctxB}{\var}{\termB}}{\subst{\term}{\var}{\termB}{}}{\env,
          \envB}}$
    \item \label{lem:model-subst:case2b}
      $\semtyped{\ctx,
        \subst{\ctxB}{\var}{\termB}}{\subst{\term}{\var}{\termB}{}}{\env,
        \envB} =\\ \semtyped{\ctx, \var : \Typ, \ctxB}{\term}{\env,
        \semtyped{\ctx}{\termB}{\env}, \envB} = \semtyped{\ctx, \var
        := \termB : \Typ, \ctxB}{\term}{\env,
        \semtyped{\ctx}{\termB}{\env}, \envB}$
    \end{enumerate}
  \end{enumerate}
\end{lemma}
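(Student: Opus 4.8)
The plan is to prove both cases by a single induction, in the same spirit as the Weakening lemma (Lemma~\ref{lem:model-weakening}). I would first dispatch Case~\ref{lem:model-subst:case1} by induction on the suffix context $\ctxB$: the empty case is immediate since $\subst{\cdot}{\termB}{\var} \synteq \cdot$, and for $\ctxB \synteq \ctxB', \varC : \TypB$ (and analogously for definitional entries and for inductive blocks) I peel off the last declaration, apply the induction hypothesis to $\ctxB'$ to obtain membership in $\sem{\ctx, \subst{\ctxB'}{\termB}{\var}}$, and then invoke Case~\ref{lem:model-subst:case2} on $\TypB$ to show that the interpretation of $\subst{\TypB}{\termB}{\var}$ in the substituted context is defined and contains the required environment component. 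Thus Case~\ref{lem:model-subst:case1} reduces to Case~\ref{lem:model-subst:case2}, which I prove by structural induction on $\term$ (equivalently, well-founded induction on $\size(\term)$), with $\ctxB$ universally quantified so that the hypothesis is available for the bodies of binders under an extended $\ctxB$.

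The sort cases $\term \synteq \Prop$ and $\term \synteq \Type{i}$ are trivial, since substitution is the identity there and the interpretation is constant. The crucial case is $\term \synteq \varC$. If $\varC \not\synteq \var$, then $\subst{\varC}{\termB}{\var} \synteq \varC$, and both interpretations select the same environment component (modulo the index shift in the variable clause caused by deleting the slot for $\var$), giving the equality directly. If $\varC \synteq \var$, then $\subst{\var}{\termB}{\var} \synteq \termB$, so the left-hand side is $\semtyped{\ctx, \subst{\ctxB}{\termB}{\var}}{\termB}{\env, \envB}$; by Lemma~\ref{lem:model-closedness} we have $\freevars(\termB) \subseteq \dom(\ctx)$, so Weakening (Lemma~\ref{lem:model-weakening}) collapses this to $\semtyped{\ctx}{\termB}{\env}$, which is exactly the value selected by the middle expression $\semtyped{\ctx, \var : \Typ, \ctxB}{\var}{\env, \semtyped{\ctx}{\termB}{\env}, \envB}$. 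The third, definitional equality in Case~\ref{lem:model-subst:case2b} holds because the clause for $\sem{\ctx, \var := \termB : \Typ}$ forces the $\var$-component of any environment in the definitional context to equal $\semtyped{\ctx}{\termB}{\env}$, so the variable clause selects the same value there too.

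The binder cases $\Forall$, $\Lam$ and $\Let$ follow by unfolding substitution one level (choosing the bound variable fresh, per the paper's renaming convention, so no capture occurs) and applying the induction hypothesis to the immediate subterms, using the context extended by the bound variable for the body; application is handled similarly, using that $\decode$ is computed from the interpretations of its two subterms. For $\term \synteq \Inds.\varC$ and for eliminators I would argue that substitution commutes with the rule-set construction: since $\ruleset_{\ctx, \Inds}^{\env}$ (respectively $\ruleset_{\ctx, \mathit{ELB}}^{\env}$) is built entirely from the interpretations of the types in $\declinds$ and $\declcons$ (respectively the motives $\motive$, the case-eliminators $\genfun$, and the block), the induction hypothesis applied to those component terms shows that the rule set for $\subst{\Inds}{\termB}{\var}$ over the substituted context coincides with the rule set for $\Inds$ over the definitional context. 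Equal rule sets have equal least fixpoints $\inddef{\cdot}$, whence the interpretations of the inductive types, their constructors, and the eliminators agree, with definedness supplied by Lemmas~\ref{lem:interp_inductive_correct} and~\ref{lem:interp_eliminator_correct}.

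The main obstacle is twofold. First, the variable case with $\varC \synteq \var$ carries the real content: it is precisely where the interpretation of the substituted term must be tied back to $\semtyped{\ctx}{\termB}{\env}$, and this step rests essentially on Weakening together with the closedness property (Lemma~\ref{lem:model-closedness}) to guarantee both that $\termB$ is independent of the substituted suffix context and that no capture occurs. Second, the inductive and eliminator cases require care in verifying that substitution genuinely commutes with the construction of $\ruleset_{\ctx, \Inds}^{\env}$ and $\ruleset_{\ctx, \mathit{ELB}}^{\env}$ --- in particular that the well-formedness side conditions (the parameter/arity split and the $\recarg$ decomposition of each constructor type) are preserved under substitution, so that the premise sets $\Psi_{\indtyp_i, \constr_k}$ and conclusions $\psi_{\indtyp_i, \constr_k}$ match up term-for-term --- after which equality of the fixpoints is immediate. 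The remaining work, chiefly the index bookkeeping in the variable clause, is routine.
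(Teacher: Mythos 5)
Your proposal is correct and follows essentially the same route as the paper's proof: a nested induction on the suffix context $\ctxB$ and the term $\term$ (which the paper packages as well-founded induction on the combined measure $\size(\ctxB)+\size(\term)-\sfrac{1}{2}$, while you instead universally quantify $\ctxB$ in the induction on $\term$ --- the same argument in different bookkeeping), with the real content in the variable case $\varC \synteq \var$ handled via Lemma~\ref{lem:model-closedness} and Weakening, Case~\ref{lem:model-subst:case1} discharged from Case~\ref{lem:model-subst:case2b} applied to the type at the head of the suffix, and the inductive-block case reduced to the fact that the rule set is built from interpretations of the component terms. Your write-up is in fact more explicit than the paper's about the index shift in the variable clause and about substitution commuting with the rule-set construction, but there is no substantive divergence.
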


\begin{proof}
  We prove this lemma by well-founded induction on
  $\sigma(\ctxB, \term) = \size(\ctxB) + \size(\term) -
  \sfrac{1}{2}$. In the following we reason as though we are
  conducting induction on the structure of terms and contexts.  Note
  that this is allowed because our measure $\sigma(\ctxB, \term)$ does
  decrease for the induction hypotheses pertaining to structural
  induction. This is in particular crucial in some sub-cases of the
  case where $\ctxB = \ctxB, \varB : \TypB$.  There, we are performing
  induction on $\term$ which enlarges the context $\ctxB$ (similar to
  case of (dependent) functions in Lemma~\ref{lem:model-weakening}).
  \begin{itemize}
  \item Case $\ctxB = \cdot$: then Case~\ref{lem:model-subst:case1}
    holds trivially. Case~\ref{lem:model-subst:case2}, follows by
    induction on $t$. The only non-trivial case (not immediately
    following from the induction hypothesis) is the case where $\term$
    is a variable. In this case, both \ref{lem:model-subst:case2a} and
    \ref{lem:model-subst:case2b} follow by definition of the model.

  \item Case $\ctxB = \ctxB', \varB : \TypB$ and correspondingly,
    $\envB = \envB', \genel$:
    \begin{enumerate}
    \item[\ref{lem:model-subst:case1}.] Follows by definition of the
      model and the induction hypothesis corresponding to
      Case~\ref{lem:model-subst:case2b}:
      $\semtyped{\ctx,
        \subst{\ctxB'}{\var}{\termB}}{\subst{\TypB}{\var}{\termB}}{\env,
        \envB'} = \semtyped{\ctx, \var : \Typ, \ctxB'}{\TypB}{\env,
        \semtyped{\ctx}{\termB}{\env}, \envB'}$
    \item[\ref{lem:model-subst:case2}.] We proceed by induction on
      $\term$. The only non-trivial case is when $\term$ is a
      variable, $\term = \varC$. Notice, when $\varC \neq \var$ then
      both \ref{lem:model-subst:case2a} and
      \ref{lem:model-subst:case2b} hold trivially by definition of the
      model. Otherwise, we have to show that
      $\Defined{\semtyped{\ctx, \subst{\ctxB'}{\var}{\termB}, \varB :
          \subst{\TypB}{\var}{\termB}}{\termB}{\env, \envB, \genel}}$
      and
      $\semtyped{\ctx, \subst{\ctxB}{\var}{\termB}, \varB :
        \subst{\TypB}{\var}{\termB}}{\termB}{\env, \envB, \genel} =$
      $\semtyped{\ctx, \var : \Typ, \ctxB, \varB :
        \subst{\TypB}{\var}{\termB}}{\var}{\env,
        \semtyped{\ctx}{\termB}{\env}, \envB. \genel} =$\\
      $\semtyped{\ctx, \var := \termB : \Typ, \ctxB, \varB :
        \subst{\TypB}{\var}{\termB}}{\var}{\env,
        \semtyped{\ctx}{\termB}{\env}, \envB, \genel}$. These follow
      by Lemma~\ref{lem:model-weakening} above and our assumption that
      $\Defined{\semtyped{\ctx}{\termB}{\env}}$.
    \end{enumerate}
  \item Case $\ctxB = \ctxB', \Ind_{n}\{\declinds := \declcons\}$: We
    prove this case by induction on $\term$. The only non-trivial case
    here is the case where
    $\term = \Ind_{n}\{\declinds := \declcons\}.\varC$. Notice that
    the interpretation of
    $\semtyped{\ctx, \subst{\ctxB'}{\var}{\termB}, \varB :
      \subst{\TypB}{\var}{\termB}}{\term}{\env, \envB, \genel}$ is
    defined based on the interpretation of terms of the form
    $\semtyped{\ctx, \subst{\ctxB'}{\var}{\termB}}{\vartermC}{\env,
      \envB}$ where $\vartermC$ appears in $\declinds$ or $\declcons$
    to which the induction hypothesis (that of induction on $\ctxB$)
    applies. \qedhere
  \end{itemize}
\end{proof}

\begin{theorem}[Soundness of the model]\label{lem:model-soundness}
  The model defined in this section is sound for our typing
  system. That is, the following statements hold:
  \begin{enumerate}
    \item If $\WFctx{\ctx}$ then $\Defined{\sem{\ctx}}$
    \item If $\typed{\ctx}{\term}{\Typ}$ then $\Defined{\sem{\ctx}}$
      and for any $\env \in \sem{\ctx}$ we have
      $\Defined{\semtyped{\ctx}{\term}{\env}}$,
      $\Defined{\semtyped{\ctx}{\Typ}{\env}}$ and
      $\semtyped{\ctx}{\term}{\env} \in \semtyped{\ctx}{\Typ}{\env}$
    \item If $\jueq{\ctx}{\term}{\term'}{\Typ}$ then
      $\Defined{\semtyped{\ctx}{\term}{\env}}$,
      $\Defined{\semtyped{\ctx}{\term'}{\env}}$,
      $\Defined{\semtyped{\ctx}{\Typ}{\env}}$ and
      $\semtyped{\ctx}{\term}{\env} = \semtyped{\ctx}{\term'}{\env}
      \in \semtyped{\ctx}{\Typ}{\env}$
    \item If $\subtyp{\ctx}{\Typ}{\TypB}$ then
      $\Defined{\semtyped{\ctx}{\Typ}{\env}}$,
      $\Defined{\semtyped{\ctx}{\TypB}{\env}}$ and
      $\semtyped{\ctx}{\Typ}{\env} \subseteq
      \semtyped{\ctx}{\TypB}{\env}$
  \end{enumerate}
\end{theorem}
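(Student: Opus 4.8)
The plan is to prove statements (1)--(4) simultaneously by induction on the height of the derivation, proceeding by case analysis on the last rule applied. A simultaneous induction is forced because the four judgement forms are mutually dependent: \textsc{Cum} feeds cumulativity (4) into typing (2), \textsc{Cum-eq-L} and \textsc{Cum-eq-R} feed equality (3) into cumulativity (4), and \textsc{Eq-ref}, \textsc{Beta}, \textsc{Eta} etc. feed typing (2) back into equality (3). For each rule I assume (1)--(4) for every strictly shorter derivation and must establish the conclusion, threading the partiality predicate $\Defined{-}$ through each step: before asserting any membership I first derive $\Defined{\semtyped{\ctx}{\term}{\env}}$ and $\Defined{\semtyped{\ctx}{\Typ}{\env}}$ from the induction hypotheses.

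First I would discharge the structural and congruence rules. Statement (1) for \textsc{WF-ctx-empty}, \textsc{WF-ctx-hyp}, \textsc{WF-ctx-def} and \textsc{Ind-WF} follows by unfolding $\sem{-}$ on contexts together with the typing hypothesis on the declared types (using Lemma~\ref{lem:interp_inductive_correct} for the inductive-block clause). The sort rules \textsc{Prop} and \textsc{Hierarchy} are immediate from $\semtyped{\ctx}{\Type{i}}{\env} = \VNCH_{\icard_i}$, from $\set{\emptyset, \set{\emptyset}} \in \VNCH_{\icard_i}$, and from $\icard_i < \icard_j$ for $i < j$; \textsc{Var} and \textsc{Delta} come straight from the defining clauses for variables in contexts. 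The binder rules \textsc{Prod}, \textsc{Lam}, \textsc{App} and \textsc{Let} unfold to the trace-encoded function-space clauses, with \textsc{App} and \textsc{Let} additionally invoking Substitutivity (Lemma~\ref{lem:model-subst}) to align the substituted type in the conclusion, and \textsc{Prod} using closure of the relevant von Neumann universe under (dependent) function spaces. The equational rules \textsc{Beta}, \textsc{Delta}, \textsc{Zeta}, \textsc{Eta} and the congruences (\textsc{Prod-eq}, \textsc{Lam-eq}, \textsc{App-eq}, \textsc{Let-eq}) reduce to calculations on $\encode$ and $\decode$; \textsc{Beta} and \textsc{Zeta} reuse Lemma~\ref{lem:model-subst}, while \textsc{Eta} uses $\decode(\encode(\genfun), \genel) = \genfun(\genel)$, and \textsc{Eq-ref}, \textsc{Eq-sym}, \textsc{Eq-trans} are the equivalence-relation properties of set equality.

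The inductive-type content concentrates in a handful of cases. \textsc{Ind-type} and \textsc{Ind-constr} follow from Lemma~\ref{lem:interp_inductive_correct}; \textsc{Ind-Elim} and its computation rule \textsc{Iota} follow from Lemma~\ref{lem:interp_eliminator_correct}, which already establishes that the recursor rule set is total and produces a uniquely determined, correctly typed element. For \textsc{Iota} specifically I would then unfold the syntactic recursor $\recor{\Inds}{\vv{\motive}}{\vv{\genfun}}$ and check that it denotes exactly the element the eliminator's rule set associates to the tuple coding the given constructor application (handling, via Remark~\ref{remark:iota-and-ind-cum}, the case where the constructor belongs to a sub-type). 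For cumulativity (4): \textsc{Cum-Type} is $\VNCH_{\icard_i} \subseteq \VNCH_{\icard_j}$, \textsc{Eq-Cum}, \textsc{Cum-trans}, \textsc{Cum-eq-L} and \textsc{Cum-eq-R} are elementary set-theoretic facts combined with (3), \textsc{Cum-weaken} invokes Weakening (Lemma~\ref{lem:model-weakening}), and \textsc{Cum-Prod} follows from monotonicity of the trace-encoded dependent-product clause in its codomain. The novel rules \textsc{Ind-leq} and \textsc{C-Ind} are exactly Lemma~\ref{lem:c-ind-in-model}, and \textsc{Ind-Eq}, \textsc{Constr-Eq-L}, \textsc{Constr-Eq-R} follow by antisymmetry of $\subseteq$ applied to two instances of that lemma.

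The delicate point I would treat most carefully is \textsc{Prop-in-Type}, namely $\semtyped{\ctx}{\Prop}{\env} \subseteq \semtyped{\ctx}{\Type{0}}{\env}$ and, crucially, its interaction with the function-space clauses under \textsc{Cum}: trace-encoded functions \emph{into} $\Prop$ must collapse onto the two proof-irrelevant elements of $\set{\emptyset, \set{\emptyset}}$ rather than form genuine set-theoretic graphs, which is precisely what Lemma~\ref{lem:impred-encoding} guarantees and what makes the impredicative product rule $\ProdRs(s, \Prop, \Prop)$ sound. The second recurring obstacle is the partiality bookkeeping flagged above: definedness of the eliminator interpretation in the \textsc{Iota} and \textsc{Ind-Elim} cases rests entirely on the uniqueness clause of Lemma~\ref{lem:interp_eliminator_correct}, and a single missed $\Defined{-}$ obligation would break the induction. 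Consistency of \pCuIC{} then drops out as a corollary: the type $\Forall P : \Prop. P$ has interpretation $\emptyset$ by Lemma~\ref{lem:impred-encoding} (since the fibre over $\emptyset \in \set{\emptyset, \set{\emptyset}}$ is empty), so by soundness no closed term can inhabit it.
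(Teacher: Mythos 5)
Your proposal is correct and follows essentially the same route as the paper: a mutual induction on the four judgement forms with case analysis on the last rule, discharging the inductive-type cases via Lemmas~\ref{lem:interp_inductive_correct}, \ref{lem:interp_eliminator_correct} and \ref{lem:c-ind-in-model}, the substitution-dependent cases via Lemma~\ref{lem:model-subst}, and the impredicativity of $\Prop$ via Lemma~\ref{lem:impred-encoding}. The only small inaccuracy is your claim that \textsc{Constr-Eq-L} and \textsc{Constr-Eq-R} follow by antisymmetry of $\subseteq$: their premises supply only one direction of subtyping, and the paper instead gets the equality directly from the definition of the interpretation of constructors, which sends both applications to the identical tuple $\tuple{k;\vv{\varterm},\vv{\vartermC}}$.
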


\begin{proof}
  Note that judgements of the form $\WFctx{\ctx}$,
  $\typed{\ctx}{\term}{\Typ}$, $\jueq{\ctx}{\term}{\term'}{\Typ}$ and
  $\subtyp{\ctx}{\Typ}{\TypB}$ are defined mutually. We prove the
  theorem by mutual induction on the derivation of these judgements.
  \begin{itemize}
    \item Case~\textsc{WF-ctx-empty}: trivial by definition.
    \item Case~\textsc{WF-ctx-hyp}: trivial by definition and induction hypothesis.
    \item Case~\textsc{WF-ctx-def}: trivial by definition and induction hypothesis.
    \item Case~\textsc{Prop}: trivial by definition.
    \item Case~\textsc{Hierarchy}: trivial by definition.
    \item Case~\textsc{Var}: trivial by definition and induction hypotheses.
    \item Case~\textsc{Let}: by definition, induction hypothesis and
      Lemma~\ref{lem:model-subst}.
    \item Case~\textsc{Let-eq}: by definition, induction hypotheses
      and Lemma~\ref{lem:model-subst}.
    \item Case~\textsc{Prod}: by induction hypotheses we know that
      $\semtyped{\ctx}{\Typ}{\env} \in \semtyped{\ctx}{\sort_1}{\env}$
      and that
      $\semtyped{\ctx, \var : \Typ}{\TypB}{\env, \genel} \in
      \semtyped{\ctx}{\sort_2}{\env}$ for any
      $\genel \in \semtyped{\ctx}{\Typ}{\env}$. We also know
      $\ProdRs(\sort_1, \sort_2, \sort_3)$. We have to show that
      \begin{align}\label{lem:model-soundness-prop-concl}
        \semtyped{\ctx}{\Forall \var : \Typ. \TypB}{\env} \in \semtyped{\ctx}{\sort_3}{\env}
      \end{align}
      Since $\ProdRs(\sort_1, \sort_2, \sort_3)$ holds, we have that
      either $\sort_2 = \sort_3 = \Prop$ or $\sort_1 = \Type{i}$,
      $\sort_2 = \Type{j}$ and $\sort_3 = \Type{\Max{i, j}}$ or
      $\sort_1 = \Prop$, $\sort_2 = \Type{i}$ and $\sort_3 = \Type{i}$
      hold. In the first case, The membership relation above,
      (\ref{lem:model-soundness-prop-concl}), follows from
      Case~\ref{lem:impred-encoding:case1} of
      Lemma~\ref{lem:impred-encoding}. In the other two cases, note
      that $\semtyped{\ctx}{\sort_3}{\env}$ is a von Neumann universe
      and is hence closed under (dependent) function space and also
      (by axiom schema of replacement) under taking trace-encoding of
      elements of any set.
    \item Case~\textsc{Prod-eq}: by definition and induction hypotheses.
    \item Case~\textsc{Lam}: by definition and induction hypotheses.
    \item Case~\textsc{Lam-eq}: by definition and induction hypotheses.
    \item Case~\textsc{App}: by induction hypotheses we know that
      $\semtyped{\ctx}{\TermB}{\env} \in \semtyped{\ctx}{\Typ}{\env}$
      and
      $\semtyped{\ctx}{\Term}{\env} \in \semtyped{\ctx}{\Forall \var :
        \Typ. \TypB}{\env}$. From the latter we know that
      $\semtyped{\ctx}{\Term}{\env} = \encode(\genfun)$ for some
      $\genfun$ with domain
      $\dom(\genfun) = \semtyped{\ctx}{\Typ}{\env}$ such that
      $\genfun(\genel) = \semtyped{\ctx, \var : \Typ}{\TypB}{\env,
        \genel}$.
      Therefore,
      \begin{align*}
        \semtyped{\ctx}{\Term~\TermB}{\env}
        ={} & \decode(\encode(f), \semtyped{\ctx}{\TermB}{\env})\\
        ={} & \genfun(\semtyped{\ctx}{\TermB}{\env}) \in \semtyped{\ctx,
          \var : \Typ}{\TypB}{\env, \semtyped{\ctx}{\TermB}{\env}}
        = \semtyped{\ctx}{\subst{\TypB}{\var}{\TermB}}{\env}
      \end{align*}
      where the last equality is by Lemma~\ref{lem:model-subst}.
    \item Case~\textsc{App-eq}: by a reasoning similar to that of
      Case~\textsc{App} above.
    \item Case~\textsc{Ind-WF}: by definition, induction hypotheses
      and Lemma~\ref{lem:interp_inductive_correct}.
    \item Case~\textsc{Ind-type}: by induction hypotheses we know that
      $\Defined{\sem{\ctx}}$ which by definition means that for any
      environment $\env \in \sem{\ctx}$ and any inductive block
      $\Inds \in \ctx$ we have (using the weakening lemma above)
      $\Defined{\semtyped{\ctx}{\Inds}{\env}}$.  By
      Lemma~\ref{lem:interp_inductive_correct} the desired result
      follows.
    \item Case~\textsc{Ind-constr}: by induction hypotheses we know
      that $\Defined{\sem{\ctx}}$ which by definition means that for
      any environment $\env \in \sem{\ctx}$ and any inductive block
      $\Inds \in \ctx$ we have (using the weakening lemma above)
      $\Defined{\semtyped{\ctx}{\Inds}{\env}}$. The desired result
      follows by the definition of
      $\semtyped{\ctx}{\Inds.\constr}{\env}$,
      Lemma~\ref{lem:model-subst} and the fact that
      $\Defined{\semtyped{\ctx}{\Inds}{\env}}$ is the fixpoint of the
      rule-set used to construct it.
    \item Case~\textsc{Ind-Elim}: by
      Lemma~\ref{lem:interp_eliminator_correct} and induction
      hypotheses.
    \item Case~\textsc{Ind-Elim-eq}: similarly to Case~\textsc{Ind-Elim}.
    \item Case~\textsc{Eq-ref}: trivial by definition and induction hypothesis.
    \item Case~\textsc{Eq-sym}: trivial by definition and induction hypothesis.
    \item Case~\textsc{Eq-trans}: trivial by definition and induction hypothesis.
    \item Case~\textsc{Beta}: by induction hypotheses that
      \begin{itemize}
        \item[] $\Defined{\semtyped{\ctx, \var : \Typ}{\Term}{\env}}$
        \item[] $\forall \genel \in \semtyped{\ctx}{\Typ}{\env}.\; \semtyped{\ctx, \var : \Typ}{\Term}{\env, \genel} \in
          \semtyped{\ctx, \var : \Typ}{\TypB}{\env, \genel}$
        \end{itemize}
        On the other hand, we have that
        \begin{itemize}
          \item[] $\Defined{\semtyped{\ctx}{\TermB}{\env}}$
          \item[] $\semtyped{\ctx}{\TermB}{\env} \in \semtyped{\ctx}{\Typ}{\env}$
          \item[] $\Defined{\semtyped{\ctx}{\Typ}{\env}}$
        \end{itemize}
        and therefore
        \[ \semtyped{\ctx, \var : \Typ}{\TypB}{\env,
            \semtyped{\ctx}{\TermB}{\env}} \in
          \semtyped{\ctx}{\sort}{\env} \] Also, by
        Lemma~\ref{lem:model-subst} we get that
      \[ \semtyped{\ctx}{\subst{\TypB}{\var}{\TermB}}{\env} =
        \semtyped{\ctx, \var : \Typ}{\TypB}{\env,
          \semtyped{\ctx}{\TermB}{\env}} \in
        \semtyped{\ctx}{\sort}{\env}\] and that
      \[\semtyped{\ctx}{\subst{\Term}{\var}{\TermB}}{\env} = \semtyped{\ctx, \var : \Typ}{\Term}{\env, \semtyped{\ctx}{\termB}{\env}} \in \semtyped{\ctx,
          \var : \Typ}{\TypB}{\env, \semtyped{\ctx}{\termB}{\env}}\]
      We finish the proof by:
      \begin{align*}
        \semtyped{\ctx}{(\Lam \var : \Typ. \Term)~\TermB}{\env} ={}
        & \decode(\semtyped{\ctx}{\Lam \var : \Typ. \Term}{\env}, \semtyped{\ctx}{\TermB}{\env})\\
        ={} & \decode(\encode(\genel \in \semtyped{\ctx}{\Typ}{\env} \mapsto
              \semtyped{\ctx, \var : \Typ}{\Term}{\env, \genel}), \semtyped{\ctx}{\TermB}{\env})\\
        ={} & \semtyped{\ctx, \var : \Typ}{\Term}{\env, \semtyped{\ctx}{\TermB}{\env}}\\
        ={} & \semtyped{\ctx}{\subst{\Term}{\var}{\TermB}}{\env}
      \end{align*}
      
    \item Case~\textsc{Eta}: We need to show that
      \[\semtyped{\ctx}{\term}{\env} = \semtyped{\ctx}{\Lam \var :
          \Typ. \term~\var}{\env} \in \semtyped{\ctx}{\Forall \var :
          \Typ. \TypB}{\env}\]  We know by induction hypothesis that
      $\Defined{\semtyped{\ctx}{\term}{\env}}$ and that
      $\semtyped{\ctx}{\term}{\env} \in \semtyped{\ctx}{\Forall \var :
        \Typ. \TypB}{\env}$ and consequently we know that there is a
      set theoretic function $\genfun$:
      \[\genfun \in \DepFun \genel in
        \semtyped{\ctx}{\Typ}{\env}. \semtyped{\ctx, \var :
          \Typ}{\TypB}{\env, \genel}\] such that
      $\semtyped{\ctx}{\term}{\env} = \encode(\genfun)$. This implies
      that,
      \begin{align*}
        \semtyped{\ctx}{\Lam \var : \Typ. \term~\var}{\env}
        & = \encode\left(\set{(\genel, \semtyped{\ctx, \var : \Typ}{\term~\var}{\env, \genel}) \middle| \genel \in \semtyped{\ctx}{\Typ}{\env}}\right) \\
        & = \encode\left(\set{(\genel, \decode(\semtyped{\ctx, \var : \Typ}{\term}{\env, \genel}, \genel)) \middle| \genel \in \semtyped{\ctx}{\Typ}{\env}}\right) \\
        & = \encode\left(\set{(\genel, \decode(\semtyped{\ctx}{\term}{\env, \genel}, \genel)) \middle| \genel \in \semtyped{\ctx}{\Typ}{\env}}\right)\\
        & = \encode\left(\set{(\genel, \decode(\encode(\genfun), \genel)) \middle| \genel \in \semtyped{\ctx}{\Typ}{\env}}\right) \\
        & = \encode\left(\set{(\genel, \genfun(\genel)) \middle| \genel \in \semtyped{\ctx}{\Typ}{\env}}\right) \\
        & = \encode(\genfun)\\
        & = \semtyped{\ctx}{\term}{\env}
        \end{align*}
    \item Case~\textsc{Delta}: by Lemma~\ref{lem:model-subst} and induction hypotheses.
    \item Case~\textsc{Zeta}: trivial by definition and induction hypothesis.
    \item Case~\textsc{Iota}: by definition of the interpretation of
      recursors and induction hypothesis. Notice that by construction
      of the interpretation of eliminators in
      Definition~\ref{def:interp_recursors} the interpretation of
      elimination of $\constr_i~\vv{\varterm}$ is basically, the
      result of applying $\genfun_{\constr_i}~\vv{\vartermB}$ where
      $\vv{\varterm}$ is a subsequence of $\vv{\vartermB}$. The only
      difference between $\vv{\vartermB}$ and $\vv{\varterm}$ is that
      in $\vv{\vartermB}$ after each value that corresponds to a
      (mutually) recursive argument of the constructor $\constr_i$ we
      have a value that corresponds to the interpretation of elimination
      of that (mutually) recursive argument. For those terms,
      $\recor{\Inds}{\vv{\motive}}{\vv{\genfun}}$ applies the
      elimination using the eliminator, $\mathbfsf{Elim}$, while the
      rule in rule set corresponding $\constr_i~\vv{\varterm}$ ensures
      in its antecedents that all elements are eliminated correctly
      according to the interpretation of the eliminator (the fixpoint
      taken in Definition~\ref{def:interp_recursors}). Notice that if
      the constructor is of an inductive sub-type, then, by
      construction of interpretation of constructors the
      interpretation of the two constructors applied to those terms
      are \emph{equal}.
    \item Case~\textsc{Prop-in-Type}: trivial by definition.
    \item Case~\textsc{Cum-Type}: trivial by definition.
    \item Case~\textsc{Cum-trans}: trivial by definition and induction hypothesis.
    \item Case~\textsc{Cum-weaken}: trivial by definition, induction
      hypothesis and Lemma~\ref{lem:model-weakening}.
    \item Case~\textsc{Cum-Prod}: trivial by definition and induction hypothesis.
    \item Case~\textsc{Cum-eq-L}: trivial by definition and induction hypothesis.
    \item Case~\textsc{Cum-eq-R}: trivial by definition and induction hypothesis.
    \item Case~\textsc{Cum}: trivial by definition and induction hypothesis.
    \item Case~\textsc{Cum-eq}: trivial by definition and induction hypothesis.
    \item Case~\textsc{Eq-Cum}: trivial by definition and induction hypothesis.
    \item Case~\textsc{C-Ind}: easy by definition, induction
      hypothesis and Lemma~\ref{lem:c-ind-in-model}.
    \item Case~\textsc{Ind-Eq}: by definition, induction hypotheses
      and Lemma~\ref{lem:c-ind-in-model}.
    \item Case~\textsc{Constr-Eq-L}: by definition, induction
      hypothesis.
    \item Case~\textsc{Constr-Eq-R}: by definition, induction
      hypothesis.\qedhere
  \end{itemize}
\end{proof}

\begin{corollary}[Consistency of \pCuIC{}]
  Let $\sort$ be a sort, then, there \emph{does not exist} any term $\term$
  such that $\typed{\cdot}{\term}{\Forall \var : \sort. \var}$.
\end{corollary}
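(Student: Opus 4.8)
The plan is to argue by contradiction using the soundness theorem (Theorem~\ref{lem:model-soundness}). Suppose there were a closed term $\term$ with $\typed{\cdot}{\term}{\Forall \var : \sort. \var}$. Since $\sem{\cdot} = \set{\nil}$ and $\nil \in \sem{\cdot}$, part~(2) of soundness yields $\Defined{\semtyped{\cdot}{\Forall \var : \sort. \var}{\nil}}$ together with the membership $\semtyped{\cdot}{\term}{\nil} \in \semtyped{\cdot}{\Forall \var : \sort. \var}{\nil}$. It therefore suffices to show that the right-hand side is the empty set, since a membership in $\emptyset$ is the sought contradiction.

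First I would unfold the interpretation of the product type. By the defining clause for $\Forall$ in the model,
\[
  \semtyped{\cdot}{\Forall \var : \sort. \var}{\nil} = \set{\encode(\genfun) \middle| \genfun \in \DepFun \genel in \semtyped{\cdot}{\sort}{\nil}. \semtyped{\var : \sort}{\var}{\nil, \genel}}.
\]
By the clause for variables we have $\semtyped{\var : \sort}{\var}{\nil, \genel} = \genel$, so an element of the indexing dependent product is precisely a choice function $\genfun$ assigning to each $\genel \in \semtyped{\cdot}{\sort}{\nil}$ some element $\genfun(\genel) \in \genel$.

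The key observation is that $\emptyset \in \semtyped{\cdot}{\sort}{\nil}$ for every sort $\sort$. Indeed, if $\sort = \Prop$ then $\semtyped{\cdot}{\Prop}{\nil} = \set{\emptyset, \set{\emptyset}}$, which contains $\emptyset$; and if $\sort = \Type{i}$ then $\semtyped{\cdot}{\Type{i}}{\nil} = \VNCH_{\icard_i}$ is a von Neumann universe, which (since $\set{\emptyset} \in \VNCH_{\icard_i}$ and hence $\emptyset \in \VNCH_{\icard_i}$) also contains $\emptyset$. Consequently no choice function of the kind above can exist, as it would in particular have to select some $\genfun(\emptyset) \in \emptyset$, which is impossible. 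Hence the indexing dependent product is empty, its image under $\encode$ is empty, and therefore $\semtyped{\cdot}{\Forall \var : \sort. \var}{\nil} = \emptyset$, contradicting the membership obtained from soundness.

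I expect the only delicate point — rather than a genuine obstacle — to be the correct invocation of soundness on the empty context, ensuring it really does deliver an actual element of the (would-be) empty interpretation; once that is in place the remainder is purely set-theoretic, with the fact that every sort is interpreted by a set containing $\emptyset$ doing all the work. In particular, no appeal to normalization nor to the inductive-type or eliminator machinery is required.
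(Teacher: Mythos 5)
Your proposal is correct and follows essentially the same route as the paper: invoke part (2) of the soundness theorem on the empty context and observe that $\semtyped{\cdot}{\Forall \var : \sort. \var}{\nil} = \emptyset$. The paper merely asserts this emptiness, whereas you correctly justify it by noting that $\emptyset$ belongs to the interpretation of every sort, so no dependent choice function can exist; this is a welcome elaboration, not a different argument.
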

\begin{proof}
  If there where such a term $\term$ by
  Theorem~\ref{lem:model-soundness} we should have
  $\semtyped{\cdot}{\term}{\nil} \in \semtyped{\cdot}{\Forall \var :
    \sort. \var}{\nil}$. However, $\semtyped{\cdot}{\Forall \var :
    \sort. \var}{\nil} = \emptyset$.
\end{proof}

\paragraph{Strong normalization}
We believe that our extension to \pCIC{} maintains strong
normalization and that the model constructed by \citet{barras-habil}
for \pCIC{} could be easily extended to support our added rules.

\subsection{The model and axioms of type theory}
Although our system does not explicitly feature any of the axioms
mentioned below, they are consistent with the model that we have
constructed.

Our model is a proof-irrelevant model. That is, all provable
propositions (terms of type $\Prop$) are interpreted identically.
Therefore, it satisfies the axiom of proof irrelevance and also the
axiom of propositional extensionality (that any two logically
equivalent propositions are equal). This model also satisfies
definitional/judgemental proof irrelevance for proposition. This is
similar to how Agda treats irrelevant arguments \cite{Abel2011}.

We do not support inductive types in the sort $\Prop$ in our
system. However, if the Paulin-style equality is encoded using
inductive types in higher sorts, then the interpretation of these
types would simply be collections of reflexivity proofs. Hence, our
model supports the axiom UIP (unicity of identity proofs) and
consequently all other logically equivalent axioms, e.g., axiom K
\cite{streicher1993investigations}.

This model, being a set theoretic model, also supports the axiom of
functional extensionality as set theoretic functions are
extensional. This is indeed why our model supports $\eta$-equivalence.

All these axioms are also supported by the model constructed by
\citet{DBLP:journals/corr/abs-1111-0123}.

\paragraph{Acknowledgements}
This work was partially supported by the CoqHoTT ERC Grant
637339 and partially by the Flemish Research Fund grants G.0058.13
(until June 2017) and G.0962.17N (since July 2017).

\bibliographystyle{abbrvnat}
\bibliography{bib}

\end{document}